\documentclass[10pt,twocolumn]{article}

\usepackage[margin=0.75in,columnsep=0.25in]{geometry}
\usepackage{abstract}
\usepackage[T1]{fontenc}
\usepackage{lmodern}
\usepackage{amsmath,amssymb}
\usepackage{amsthm}
\usepackage{mathtools}
\mathtoolsset{showonlyrefs=true}
\usepackage{bm}
\usepackage{graphicx}
\usepackage{dcolumn}
\usepackage{color}
\providecommand{\colrule}{\hline}
\usepackage[numbers,sort&compress]{natbib}
\usepackage{authblk}

\usepackage[colorlinks=true,linkcolor=blue,citecolor=blue,urlcolor=blue]{hyperref}

\title{{\bf\Large Hidden Phase Transition in Single-Polymer Stretching under Elongational Flow}}

\author[1]{Aiqing Zhu}
\author[2,3]{Fushuai Wang}
\author[4]{Eleonore Vissol-Gaudin}
\author[1,3]{Qianxiao Li\thanks{Contact author: \href{mailto:qianxiao@nus.edu.sg}{qianxiao@nus.edu.sg}}}
\author[2,3]{Beatrice W. Soh\thanks{Contact author: \href{mailto:bwysoh@nus.edu.sg}{bwysoh@nus.edu.sg}}}

\affil[1]{Department of Mathematics, National University of Singapore, 10 Lower Kent Ridge Road, 119076, Singapore}
\affil[2]{Department of Chemical and Biomolecular Engineering, National University of Singapore, 4 Engineering Drive 4, 117585, Singapore}
\affil[3]{Institute for Functional Intelligent Materials, National University of Singapore, 4 Science Drive 2, 117544, Singapore}
\affil[4]{Fujitsu Research of Europe, 3--9 Albert Street, Slough SL1 2BE, United Kingdom}

\date{}

\begin{document}

\twocolumn[
\maketitle
\vspace{-3em}
\begin{onecolabstract}
\vspace{-1em}
Using machine learning, we identify a second transition in polymer chain stretching driven by planar elongational flow, beyond the well-known coil-stretch transition.
We achieve this by developing a conditional extension of Universal and Identifiable Dissipative Dynamics (UIDD) that jointly learns reduced stochastic dynamics from microscopic trajectories across flow strengths and yields the corresponding stationary nonequilibrium free energy landscapes.
As the flow strength increases, the evolving landscape exhibits the expected coil-stretch transition, and then a previously unrecognized dynamical transition. In the latter, the stretched states persist, but the distribution along a folding coordinate becomes bimodal when conditioned on small end-to-end separation.
Large-deviation analysis of a simplified analytical chain model reproduces both transitions and shows that they are phase transitions in the long-chain limit. Single-molecule DNA experiments provide qualitative support for the dynamical transition through the predicted reversal of the folding response.
\end{onecolabstract}
\vspace{1em}
]
\saythanks

{\it {Introduction}---}
The coil-stretch transition in elongational flow is a paradigmatic problem in nonequilibrium polymer dynamics, with important consequences for the rheology of polymer solutions \cite{larson2005rheology, schroeder2018single}. As predicted by de Gennes \cite{de1974coil}, the competition between flow-induced deformation and entropic relaxation drives a sharp change from an entropically coiled state to a highly extended state above a critical flow strength, while conformation-dependent hydrodynamic drag gives rise to hysteresis. Single-molecule experiments have resolved this transition directly \cite{perkins1997single,smith1998response} and observed hysteresis for sufficiently long chains \cite{schroeder2003observation}. Minimal analytical models capture this behavior with a single extension variable  \cite{de1974coil,hinch1977mechanical,larson1989coil}. Although a single chain is a finite system, the transition sharpens with increasing chain length, so coil-stretch behaves as a phase transition in the long-chain limit \cite{de1974coil,schroeder2003observation}.
These models leave open the question of whether coil-stretch fully defines the transition structure of the microscopic chain, or whether additional transitions emerge along conformational degrees of freedom not resolved by extension.

Experiments and simulations suggest that the underlying dynamics are richer than extension alone. Single-molecule experiments reveal ``molecular individualism'' during stretching, in which chains with the same extension adopt qualitatively distinct conformations and evolve along different pathways \cite{perkins1997single,smith1998response}. Brownian dynamics simulations reproduce these conformational classes and show that populations shift with flow strength \cite{larson1999brownian}. This motivates examining how the overall conformational distribution reorganizes with flow strength, without restricting the description to extension alone.

The direct construction of such a description from microscopic dynamics remains difficult, due to the high dimensionality of the underlying polymer system and its many coupled degrees of freedom. Conventional tools each address only part of the problem. Histograms in chosen coordinates presuppose which coordinates matter, while dimensionality reduction methods such as principal component analysis and diffusion maps\cite{coifman2006diffusion} can identify relevant coordinates, but do not yield dynamics or an interpretable potential.
Deep learning offers a way to learn reduced coordinates and their dynamics jointly \cite{chen2018neural,xie2024ab,yu2021onsagernet,zhang2022gfinns}.
Stochastic OnsagerNet demonstrated that effective landscapes can be learned directly from microscopic polymer trajectories~\cite{chen2024constructing}. Universal and Identifiable Dissipative Dynamics (UIDD) further established the identifiability of the learned potential~\cite{zhu2026identifiable}. In previous work, landscapes were learned separately at each flow strength, so the reduced coordinates were not necessarily consistent across conditions, making it difficult to track how the landscape reorganizes and to identify new transitions. Since UIDD identifies the landscape uniquely, extending it to a single model with varying driving conditions would place all landscapes in shared coordinates. Their reorganization with flow strength can then be tracked directly. This turns landscape learning into a systematic tool for detecting transitions, including those invisible to conventional order parameters.

\begin{figure*}
    \centering
     \includegraphics[width=0.95
     \linewidth]{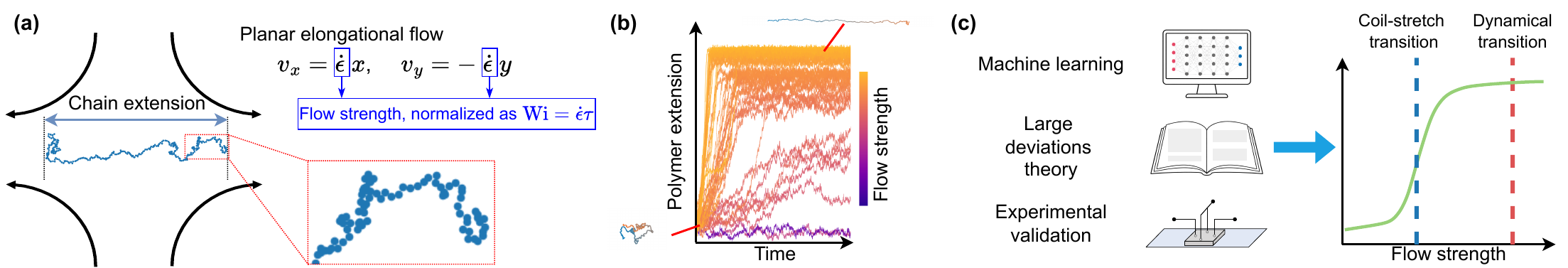} 
    \caption{Overview of the polymer system and transition analysis. ({\bf a}) A single polymer chain is
represented by a bead-rod model, subjected to planar elongational flow, with chain extension measured along the elongational axis. ({\bf b}) Plot of chain
extension as a function of time across varying flow strengths. ({\bf c}) Microscopic trajectories are reduced by PCA and fitted with conditional UIDD to identify phase transitions; large-deviation theory independently supports both transitions, while single-molecule DNA experiments provide qualitative support for the second transition.}
    \label{fig:intro}
\end{figure*}

In this Letter, we develop this approach to revisit the transition structure of a single polymer chain in planar elongational flow (Fig.~\ref{fig:intro}) and show that, beyond coil-stretch, the chain undergoes a second dynamical transition. At higher flow strength, oppositely backfolded saddle states emerge among chains with small end-to-end separation, while the dominant stretched states persist, which is a change invisible in extension.  
We then develop an independent analytical model and derive its nonequilibrium free energy landscapes using large-deviation theory. This simplified model reproduces both transitions and traces them to a competition between conformational entropy and flow-driven axial spreading. In the long-chain limit, the second becomes a phase transition between unfolded and oppositely folded states. Our single-molecule DNA experiments further provide qualitative support for the dynamical transition.

{\it {Kinetic model}---}
The microscopic kinetic model we use is Brownian dynamics for linear, touching-bead polymer chains subjected to an external planar elongational flow of strain rate $\dot\epsilon$ \cite{Soh2018untying}, where $\dot\epsilon$ serves as the control parameter, as illustrated in Fig.~\ref{fig:intro}(a). Each chain consists of $N+1=300$ beads, connected by $N$ rigid rods. The governing stochastic dynamics accounts for excluded volume and Brownian fluctuations, and neglects hydrodynamic interactions (Supplemental Material).
For each microscopic conformation, we record the three-dimensional coordinates of all beads, from which the chain extension $Z_1$ along the elongational axis $x$ is computed. To construct an effective reduced description, we augment $Z_1$ with two additional coordinates \cite{chen2024constructing}, $Z_2$ and $Z_3$, using a symmetry-preserving principal component analysis (PCA).
Correlation analysis associates $Z_2$ and $Z_3$ with the signed end-to-end extension and the signed foldedness, respectively (End Matter).
Each microscopic trajectory at strain rate $\dot\epsilon$ thus yields a time series of $(Z_1, Z_2,Z_3)$.

To learn how dynamics and stationary statistics vary with external conditions, we introduce conditional UIDD, an extension of UIDD~\cite{zhu2026identifiable}. For a $d$-dimensional state $\mathbf Z\in\mathbb R^d$ and condition $c\in\mathcal C$, shared neural functions define a family of It\^{o} SDEs of the form
\begin{equation}\label{eq:uidd-main}
\begin{aligned}
&\mathrm d\mathbf Z_t
=
\Big[\!
-\!\big(M(\mathbf Z_t;c)+W(\mathbf Z_t;c)\big)
\nabla_{\mathbf Z}V(\mathbf Z_t;c)\\
&+\nabla_{\mathbf Z}\!\cdot M(\mathbf Z_t;c)
+\nabla_{\mathbf Z}\!\cdot W(\mathbf Z_t;c)
\Big]\mathrm dt
+
\sigma(\mathbf Z_t;c)\,\mathrm d\mathbf B_t,
\end{aligned}
\end{equation}
where $V$ is a scalar potential
and $M, W, \sigma$ are $d\times d$ matrices depending on state $\mathbf Z$ and condition $c$.
Here, the diffusion $\sigma$ and Brownian motion $\mathbf B_t$ model the thermal fluctuations, and the symmetric matrix $M=\sigma\sigma^\top/2$ characterizes the dissipative response satisfying the fluctuation-dissipation relation.
The conservative parts are encoded by a banded antisymmetric matrix $W$ with nonzero entries only on the first sub- and super-diagonals.
The fields depend on $(\mathbf Z,c)$, while all derivatives act only on $\mathbf Z$. 

This construction possesses desirable structural properties linking the dynamics to its stationary statistics under fixed condition. It is proved that the dynamics retains full expressivity, and the resulting density
\begin{equation}
\rho(\mathbf Z;c)
=\frac{1}{\mathcal Z(c)}\exp\!\left[-V(\mathbf Z;c)\right]
\label{eq:uidd-density}
\end{equation}
satisfies the stationary Fokker-Planck equation, constituting a key advantage over other neural dynamics models \cite{chen2024constructing, desai2021port, lee2021machine, yu2021onsagernet,zhang2022gfinns} (End Matter and Supplemental Material).
Consequently, $V$ is identifiable and coincides, up to an additive constant, with $F=-\ln\rho$. We refer to $V$ (and $F$) as the nonequilibrium free energy; in a driven steady state, its minima and barriers encode the stationary statistics of the chain.
Therefore, changes in the critical point structure of the learned landscape $V(\mathbf Z;c)$ as $c$ varies provide finite-size signatures of transitions.

\begin{figure*}[htbp]
    \centering
    \includegraphics[width=0.95\linewidth]{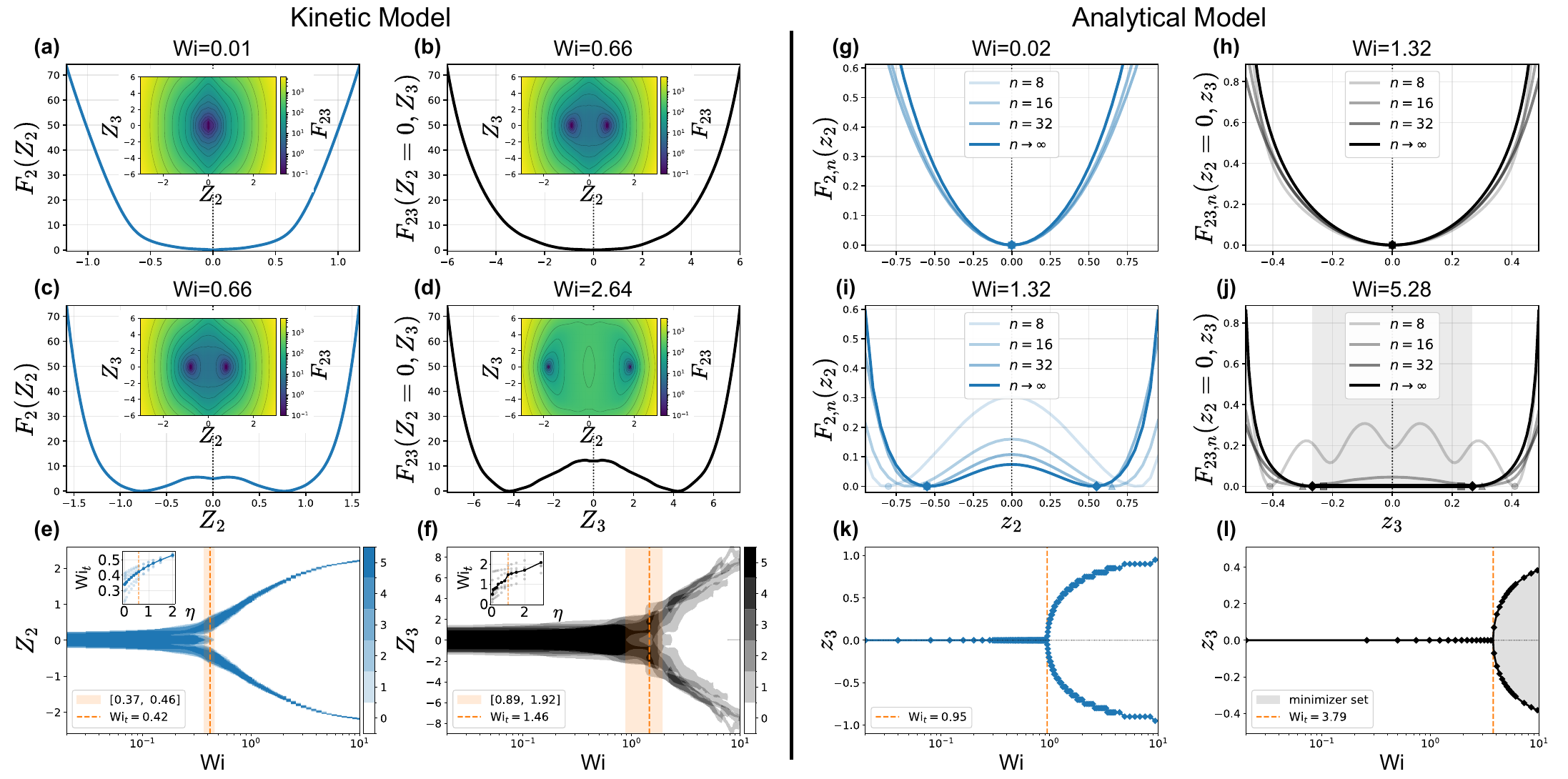}
    
\caption{
Free energy landscapes and transition diagrams.
({\bf a-d}) Learned marginal free energies $F_2(Z_2)$ and sections $F_{23}(0,Z_3)$ 
at the indicated $\mathrm{Wi}$.
Insets show the corresponding 
landscapes $F_{23}(Z_2,Z_3)$.
({\bf e,f}) Near-minimizing sets of the learned $F_2$
($\eta=0.6$) and $F_{23}(0,\cdot)$ ($\eta=1.0$),
respectively, defined by
$\{q:F(q;\mathrm{Wi})-\min_{q'}F(q';\mathrm{Wi})\le\eta\}$.
Color intensity indicates the number of the five
independently trained models satisfying this condition.
For each model, the transition point $\mathrm{Wi}_t$ is determined by the
first crossing of
$F(0;\mathrm{Wi})-\min_q F(q;\mathrm{Wi})$
above $\eta$.
Orange dashed lines and bands indicate the mean and range of $\mathrm{Wi}_t$ across models, respectively, with sensitivities shown in the insets.
({\bf g-j}) Free energy profiles per bond
of the analytical model, $F_{2,n}(z_2)$ 
and $F_{23,n}(0,z_3)$.
({\bf k,l}) Minimizing sets of $F_{2,\infty}(z_2)$
and $F_{23,\infty}(0,z_3)$, respectively.
Beyond the second transition, the gray region in
({\bf l}) denotes the minimizing interval, with diamonds marking its endpoints.
}

    \label{fig:2}
\end{figure*}


Here, we express the flow strength in dimensionless form as the Weissenberg number $\mathrm{Wi}=\dot\varepsilon\tau$, where $\tau$ is the longest relaxation time of the chain. 
We take the condition parameter to be $c= \mathrm{Wi}$ and the reduced state to be $\mathbf Z=(Z_1,Z_2,Z_3)$.
$V, W$ and $\sigma$ are each parameterized as trainable deep neural networks, and their parameters are fitted by jointly maximizing the conditional transition likelihood (End Matter).
The learned dynamics, validated on various initial chain conformations, accurately predict the evolution of chain extension $Z_1$ in all cases (Supplemental Material). 

To examine end-separation statistics and the associated folding structure, 
we introduce the dimensionless marginal free energies
\begin{equation}
\begin{aligned}
F_{23}(Z_2,Z_3;c)
&=-\ln\!\int \exp [-V(\mathbf Z;c)]\,\mathrm dZ_1,\\
F_2(Z_2;c)
&=-\ln\!\iint \exp [-V(\mathbf Z;c)]
\,\mathrm dZ_1\,\mathrm dZ_3,
\end{aligned}
\end{equation}
and the section $F_{23}(0,Z_3;c)$, which probes the landscape along $Z_2=0$ (small-end-separation) and is distinct from a marginal free energy obtained by integrating over $Z_2$.
The results are presented in Fig.~\ref{fig:2} (a-d).

At $\mathrm{Wi}=0.01$, $F_{23}$ has a single minimum at the
origin and $F_2$ exhibits a single well, consistent with a chain in the
coiled state.
By $\mathrm{Wi}=0.66$, the origin becomes a saddle of $F_{23}$, with two dominant minima along $Z_2$; $F_2$ becomes double-welled while
$F_{23}(0,Z_3)$ retains a single minimum at $Z_3=0$. Since $Z_2$ is strongly correlated with the signed end-to-end separation, the two minima correspond to the two orientations of a stretched chain. Hence, the coil-stretch transition appears as a symmetric splitting of the coiled minimum, consistent with the absence of coil-stretch bistability in a model without hydrodynamic interactions \cite{agarwal1998brownian}. 
This reorganization recovers the onset of coil-stretch near the critical value $\mathrm{Wi}_t=0.5$~\cite{de1974coil, perkins1997single} commonly identified through chain extension, here as a change in the nonequilibrium free energy landscape learned from the microscopic kinetic model.

At $\mathrm{Wi}=2.64$, the origin changes from a saddle to a local maximum of $F_{23}$, while two symmetry-related saddles emerge along $Z_3$, appearing as off-center minima of $F_{23}(0,Z_3)$.
This marks a second transition, which splits the small-end-separation population into two oppositely oriented backfolded conformations, while the dominant stretched minima persist. Unlike coil-stretch, this transition changes the unstable states of the landscape instead of the dominant stable states. Since saddles govern the pathways and timescales of unfolding, we refer to it as a dynamical transition, and show below that it becomes a phase transition in the long-chain limit. Previous work~\cite{chen2024constructing} identified prolonged residence near folded state saddles as a key driver of heterogeneous unfolding times \cite{perkins1997single,smith1998response,schroeder2003observation}. 
This transition thus indicates the flow-induced emergence of the unstable states underlying this dynamical heterogeneity, consistent with the growing prevalence of folded unraveling pathways at strong flows. To our knowledge, this transition has not been identified previously, and it is invisible to analyses restricted to stable states

Fig.~\ref{fig:2} (e,f) locate the two transitions through a systematic scan of $\mathrm{Wi}$ in the learned landscapes. Learning errors introduce seed-dependent variations in the free energy profiles, so we track near-minimizing sets, defined as the regions where the free energy lies within a tolerance $\eta$ of its minimum, across five independently trained models. The near-minimizing set first separates into stretched branches along $Z_2$, followed by a separation in the section $F_{23}(0,Z_3)$ at higher flow strength. The stretched branches persist through the dynamical transition, distinguishing the reorganization of the small-end-separation population from the initial coil-stretch transition. At the selected tolerances, the mean transition estimates are $\mathrm{Wi}_t^{(1)}\simeq0.42$ and $\mathrm{Wi}_t^{(2)}\simeq1.46$, with respective ranges $[0.37,0.46]$ and $[0.89,1.92]$. The two ranges do not overlap, confirming that the transitions are distinct.

{\it Analytical model---}
To assess the theoretical plausibility of the two transitions and explore their mechanism, we consider a simplified two-dimensional model. It consists of $n+1$ beads at positions $r_0,\ldots,r_n\in\mathbb R^2$, subject to $\sum_{k=0}^n r_k=0$ and $|r_k-r_{k-1}|=1$ for $k=1,\ldots,n$, with no additional inter-bead interactions.
The dimensionless effective flow potential for a planar elongational flow is
\begin{equation}
U_{n,\lambda}(\{r_k\}_{k=0}^n)=
-\frac{\lambda}{n^2}\sum_{k=0}^n
\left(r_{k,x}^2-r_{k,y}^2\right),
\label{eq:chain-potential}
\end{equation}
where $\lambda\ge0$ controls the elongational strength ($\mathrm{Wi} =0.096\lambda$, see Sec.~\ref{supp:steady-numerics}).
The interpretation of the learned closure coordinates motivates the reduced observables
\begin{equation}
\hat z_2= {(r_{n,x}-r_{0,x})}/{n},
\quad
\hat z_3={(r_{n,x}+r_{0,x})}/{n}.
\label{eq:chain-observables}
\end{equation}
To compare with the learned estimates, we introduce
\begin{equation*}
\begin{aligned}
\mathcal Z_{2,n}(z_2;\lambda)
&=\int e^{-U_{n,\lambda}}
\delta(z_2-\hat z_2)\,\mathrm d\nu_n,\\
\mathcal Z_{23,n}(0,z_3;\lambda)
&=\int e^{-U_{n,\lambda}}
\delta(\hat z_2)\delta(z_3-\hat z_3)\,\mathrm d\nu_n,
\end{aligned}
\label{eq:chain-partition-functions}
\end{equation*}
where $\delta$ denotes the Dirac distribution and $\nu_n$ is the reference configurational measure induced by independent, uniformly distributed bond directions. The corresponding dimensionless free energies per bond are
\begin{equation}
\label{eq: ana finite energy}
F_{2,n}=
- \ln\mathcal Z_{2,n}/n,
\quad
F_{23,n}=
- \ln\mathcal Z_{23,n}/n.
\end{equation}

In the long-chain limit, the scaled interpolated chain belongs to the following subset of the absolutely continuous path space $\mathrm{AC}([0,1];\mathbb R^2)$: 
\begin{equation*}
\mathcal A=
\left\{
m\in\mathrm{AC}([0,1];\mathbb R^2):
m(0)=0,\ |m'(s)|\le1\ \text{a.e.}
\right\}.
\end{equation*}
Writing $\bar m=\int_0^1m(s),\mathrm ds$, we define the effective energy and conformational entropy by
\begin{equation}
\begin{aligned}
\mathcal E[m]
&=-\int_0^1\bigl[(m_x-\bar m_x)^2-(m_y-\bar m_y)^2\bigr] \mathrm ds,\\
\mathcal S[m]
&=-\int_0^1\sup_{h\in\mathbb R} \left\{ h|m'|-\ln I_0(h)\right\} \mathrm ds,
\end{aligned}
\label{eq1}
\end{equation}
where $I_0(h)=(2\pi)^{-1}\int_0^{2\pi}e^{h\cos\theta} \mathrm d\theta$ is the modified Bessel function of the first kind of order zero. By Mogulskii's theorem and Varadhan's lemma \cite{Mogulskii1976,Varadhan2010,AkopyanVysotsky2021}, the limiting free energies are given by (End Matter)
\begin{equation}
\begin{aligned}
F_{2,\infty}(z_2;\lambda)
&=\inf_{\substack{m\in\mathcal A\\ m_x(1)=z_2}}
\left\{\lambda\mathcal E[m]-\mathcal S[m]\right\},\\
F_{23,\infty}(0,z_3;\lambda)
&=\inf_{\substack{m\in\mathcal A,\ m_x(1)=0\\
\bar m_x=-z_3/2}}
\left\{\lambda\mathcal E[m]-\mathcal S[m]\right\}.
\end{aligned}
\label{eq:ana free energy}
\end{equation} 
These variational problems show that the topological changes of free energy minima arise from the competition between the energetic preference for axial spreading and the entropic cost of local alignment.

Fig.~\ref{fig:2} (g-j) presents the variational solutions of \eqref{eq:ana free energy} and Monte Carlo estimates of finite chain free energies \eqref{eq: ana finite energy}. The analytical model qualitatively reproduces both the coil-stretch transition, marked by the splitting of the extension minimum, and the subsequent dynamical transition at zero axial end separation. This sequence arises as increasing $\mathrm{Wi}$ (and hence $\lambda$) strengthens the elongational energy contribution relative to entropy, reshaping the free energy landscape. At $\mathrm{Wi}=5.28$, $F_{23,\infty}(0,z_3)$ is minimized on a symmetric interval containing the origin (Theorem~\ref{supp:minimizing-interval-theorem}), which finite-size effects turn into the two isolated minima seen for $n=16,32$ and in the machine learning results. This flat interval signals phase coexistence between oppositely folded states, making the second transition a phase transition in the long-chain limit. Scans of the variational free energies locate the two transitions at $\mathrm{Wi}_t^{(1)}\simeq0.95$ and $\mathrm{Wi}_t^{(2)}\simeq3.79$ [Fig.~\ref{fig:2} (k,l)], in qualitative agreement with the learned estimates for the full kinetic model.

{\it Experimental signatures---}
To test whether the predicted dynamical transition appears in experiment, we performed single-molecule experiments \cite{soh2023automated} on DNA molecules
stretched in a planar elongational electrokinetic field that is analogous to a planar elongational flow, with the field strength controlled
by the applied voltage (Supplemental Material). The applied voltages of 10-50 V correspond to $\mathrm{Wi}\approx$ [0.7,4.8].
The transition is characterized by changes in the free energy landscape along $Z_3$, but this coordinate is not directly accessible from molecular images.
Motivated by the association of $Z_3$ with the axial displacement of the chain-end midpoint from the center of mass, we define its image-based surrogate $\widetilde Z_3$ as twice the axial displacement of the midpoint between the molecule's axial edges from the intensity-weighted centroid, centered by its trajectory median [Fig.~\ref{fig:exp_normalform}(a)].

\begin{figure}[tbp]
    \centering
    \includegraphics[width=\linewidth]{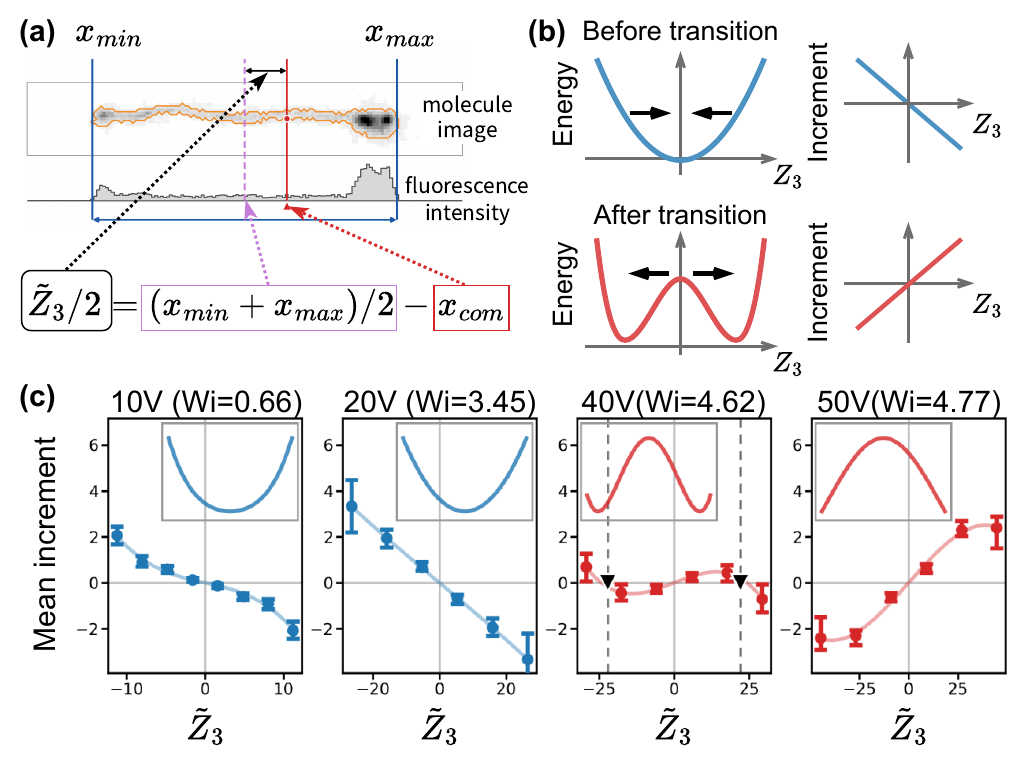}
    \caption{
    ({\bf a}) Definition of the image-based folding coordinate $\widetilde Z_3$.
    ({\bf b}) Schematic of the single-well to double-well transition in the free energy landscape and the corresponding increments near the origin.
    ({\bf c}) Experimental signatures of folding reorganization.
    Symbols show the odd component of the conditional finite-lag mean increment of $\widetilde Z_3$. Error bars indicate uncertainty in the estimated response.
    Curves show polynomial fits; triangles mark zero crossings interpolated from the binned data at 40 V. Insets show apparent potentials obtained as negative
    antiderivatives of the fitted polynomials.}
    \label{fig:exp_normalform}
\end{figure}

Across the second transition, the origin of $F_{23}(0,Z_3)$ changes from a local minimum to a local maximum. Since the drift near the origin follows the local slope of the landscape, small deviations of $Z_3$ from zero should be restored below the transition and amplified above it, reversing the slope of the mean response at the origin from negative to positive [Fig.~\ref{fig:exp_normalform}(b)]. To test this, we measure the mean change in $\widetilde Z_3$ over a fixed time lag, conditioned on its current value, for molecules within a common window of normalized extension during initial stretching, and retain its odd component. As seen in Fig.~\ref{fig:exp_normalform}(c),
at 10 and 20~V, the response is restoring near $\widetilde Z_3=0$, whereas at 40 and 50~V it is repelling, with outer zero crossings at 40~V marking the two folded states, see End Matter for details.
This reversal agrees with the predicted change of the landscape and provides qualitative experimental support for the dynamical transition.

{\it Discussion---}
Beyond the coil-stretch transition, a polymer chain in planar elongational flow undergoes a second dynamical transition, in which backfolded saddle states emerge among chains with small end-to-end separation, while the stretched states persist. To identify this, we developed conditional UIDD to learn stochastic dynamics and free energy landscapes jointly across flow strengths. Large-deviation analysis on a simplified model reproduces both transitions and shows that they are phase transitions in the long-chain limit. Since saddles divide the pathways by which chains stretch, sending a fraction of chains through folded intermediates, this transition offers a landscape-level explanation for why folds become common during unraveling at strong flows \cite{smith1998response,larson1999brownian}. As this transition is invisible in the end-to-end vector, it also identifies a folding coordinate that dumbbell-based constitutive models omit. The landscapes further specify what to measure in experiment, namely a reversal of the folding response from restoring to repelling, which our single-molecule DNA experiments observe.

More broadly, our workflow applies to any system observed through trajectories under a tunable control, such as biomolecules under force or colloidal assembly at varying interaction strength. Since its identifiability does not require detailed balance and it accommodates several conditions at once, it can map transitions across nonequilibrium phase diagrams by tracking how unstable states change across multi-parameter spaces, including those that change unstable states while the most probable states stay the same.

\section*{Methods}
\noindent{\it Closure coordinates---}
In our previous study of the same polymer system  \cite{chen2024constructing,zhu2026identifiable}, two closure coordinates are identified and associated with the axial end-to-end distance and degree of foldedness, respectively. These coordinates were obtained from a nonlinear PCA-ResNet encoder and interpreted a posteriori through geometric correlations \cite{chen2024constructing}.
To compare conformational statistics and free energy landscapes across flow strength without ambiguity from changes in the representation itself, we construct a common symmetry-preserving PCA representation with prescribed parity under the physical symmetries of the chain.
Here $\mathbf r_i=(r_{i,x},r_{i,y},r_{i,z})$ is the position of bead $i=0,\ldots,N$, with $0$ and $N$ labeling the chain ends. The $x$ axis is the elongational direction, and $r_{\mathrm{com},x}=(N+1)^{-1}\sum_{i=0}^{N}r_{i,x}$ is the axial center-of-mass coordinate.
Specifically, we impose chain-end exchange $R:\mathbf r_i\mapsto\mathbf r_{N-i}$ and the three Cartesian coordinate reflections $T_\alpha:r_{i,\alpha}\mapsto-r_{i,\alpha}$, $\alpha=x,y,z$.
The coordinate $Z_2$ is odd under $R$ and $T_x$ and even under $T_y$ and $T_z$, whereas $Z_3$ is even under $R$, odd under $T_x$, and even under $T_y$ and $T_z$. A single projection and normalization are used over all data. Details are given in Supplemental Material. 

\begin{figure}[htbp]
\centering
\includegraphics[width=\linewidth]{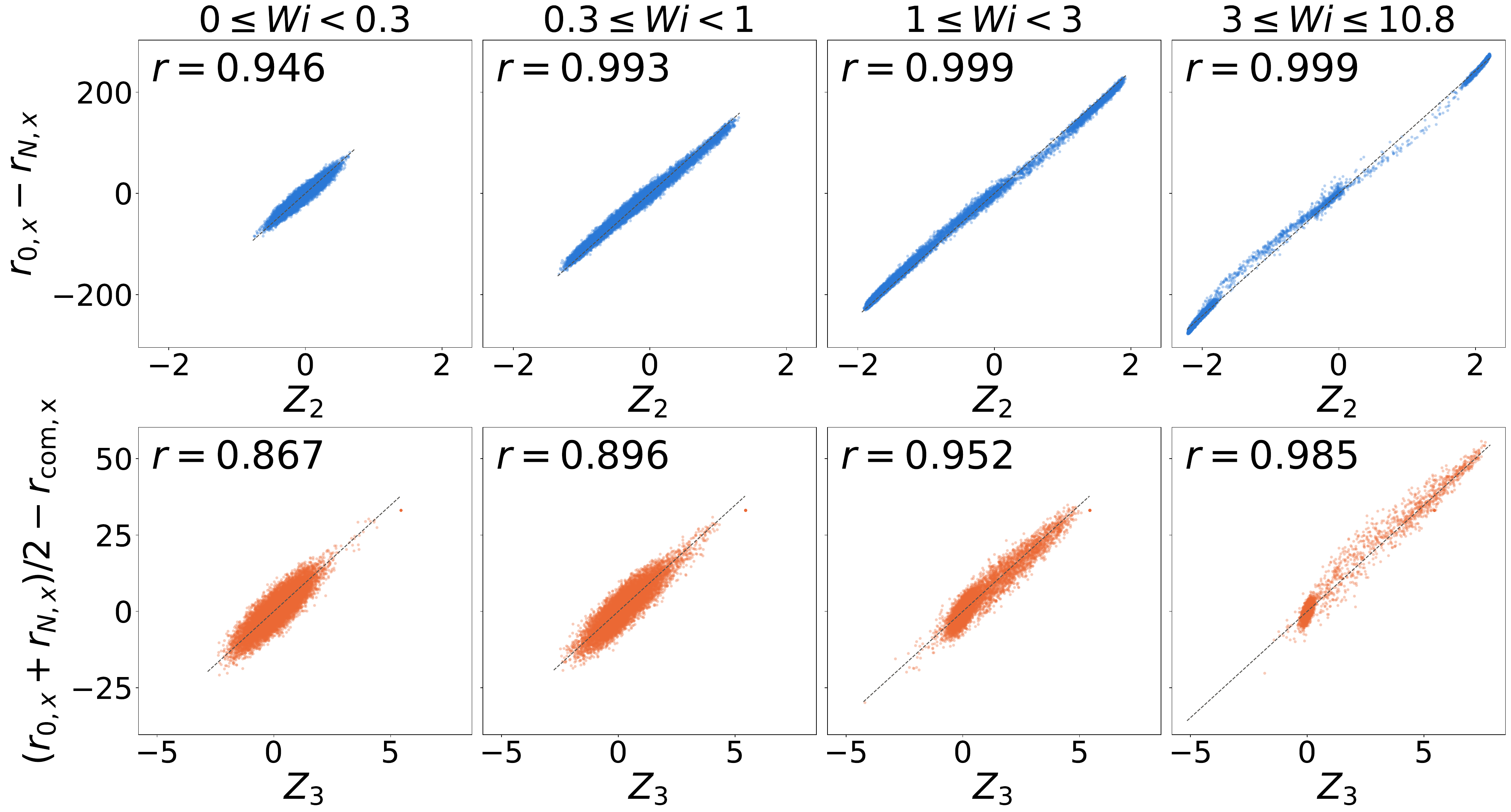}
\caption{
Correlations of the PCA coordinates.
Top: $Z_2$ versus the signed axial endpoint difference,
$r_{0,x}-r_{N,x}$.
Bottom: $Z_3$ versus the signed foldedness,
$(r_{0,x}+r_{N,x})/2-r_{\mathrm{com},x}$.
Correlation coefficients are computed separately for each $\mathrm{Wi}$ interval.
}
\label{fig:em-coordinate-correlations}
\end{figure}

We then test whether these symmetry-adapted coordinates retain the geometric interpretations across flow strengths by evaluating the corresponding correlations separately. As shown in Fig.~\ref{fig:em-coordinate-correlations}, $Z_2$ is strongly correlated with the signed axial endpoint difference, whereas $Z_3$ is strongly correlated with the signed foldedness.

\noindent{\it Conditional UIDD---}
Conditional UIDD learns a family of stochastic dynamics on
$\mathbb R^d$ from trajectories labeled by a fixed external condition $c\in\mathcal C\subset
\mathbb R^p$. It couples the inference problems at different conditions through shared state-and-condition maps $V$, $W$, and $\sigma$. For each fixed $c$, normalizability and
dissipation conditions \cite{huang2015steady} yield the existence and uniqueness of the stationary density for
Eq.~\eqref{eq:uidd-main}. Substituting $\rho$ from Eq.~\eqref{eq:uidd-density}
into the Fokker-Planck current \cite{pavliotis2014stochastic} we have 
\begin{equation}
\begin{aligned}
\mathbf J
=f\rho
-\nabla_{\mathbf Z}\!\cdot(M\rho) =\nabla_{\mathbf Z}\!\cdot(W\rho),
\end{aligned}
\label{eq:em-uidd-current}
\end{equation}
where $(\nabla_{\mathbf Z}\cdot A)_i=\sum_j\partial_{Z_j}A_{ij}$.
The tensor $W\rho$ is antisymmetric,
so its double divergence vanishes and thus
\begin{equation} 
\nabla_{\mathbf Z}\!\cdot\mathbf J=0.
\end{equation}
Consequently, the energy potential within each condition are identifiable, although the scalar reference $C(c)$ need not be unique. 

We first split the time series into multiple transitions.
For each transition $a$, the state $\mathbf Z_a$, condition $c_a$, increment $\Delta\mathbf Z_a$, and observation interval $\Delta t_a$ enter a joint conditional likelihood. With $f$ denoting the drift of Eq.~\eqref{eq:uidd-main}, define $e_a=\Delta\mathbf Z_a-f(\mathbf Z_a;c_a)\Delta t_a$ and the positive-definite covariance $C_a=2M(\mathbf Z_a;c_a)\Delta t_a$.
The Euler-Maruyama approximation gives, up to parameter-independent terms, the objective \cite{pavliotis2014stochastic,dietrich2023learning,zhu2024dyngma}
\begin{equation}
\mathcal L(\theta)
=\frac12\sum_a
\left[
e_a^\top C_a^{-1}e_a+\ln\det C_a
\right].
\label{eq:em-uidd-loss}
\end{equation}
Minimizing this objective over all condition-labeled transitions fits the parameters $\theta$. Training requires neither stationary initial samples nor evaluation of the normalization $\mathcal Z(c)$.
The proof of universality of the conditional UIDD representation, the network construction, training settings, and landscape evaluation are given in Supplemental Material. 

\noindent{\it Large-deviation derivation---}
Fix $\lambda\ge0$ and linearly interpolate
$m_n(k/n)=(r_k-r_0)/n$. These paths lie in $\mathcal A$;
write $\mathbb P_n$ for their distribution under $\nu_n$.
The unit bond directions are independent and uniform, with log
moment-generating function
\begin{equation*}
\Lambda(\mathbf h)=\ln\langle
e^{\mathbf h\cdot(r_1-r_0)}\rangle_{\nu_n}
=\ln I_0(|\mathbf h|).
\end{equation*}
Its Legendre transform is therefore radial:
\begin{equation}
 \Lambda^*(v)=\ell(|v|),\qquad
 \ell(u)=\sup_{h\in\mathbb R}\{hu-\ln I_0(h)\}.
 \label{eq:em-bond-entropy}
\end{equation}
Because $\Lambda$ is finite on $\mathbb R^2$,
Mogulskii's theorem~\cite{Mogulskii1976} gives a path large-deviation principle (LDP) in the uniform topology, with
speed $n$ and conformational entropy~\eqref{eq1}: 
\begin{equation}
 \mathcal S[m]=-\int_0^1\ell(|m'(s)|)\,\mathrm ds,
 \qquad m\in\mathcal A,
 \label{eq:em-path-entropy}
\end{equation}
with $\mathcal S[m]=-\infty$ outside $\mathcal A$.

The zero center-of-mass condition gives the exact relation
\[
 \frac{r_k}{n}=m_n(k/n)-\frac1{n+1}\sum_{j=0}^n m_n(j/n).
\]
Substitution into Eq.~\eqref{eq:chain-potential} gives
\begin{equation}
\begin{aligned}
 -U_{n,\lambda}&=-\lambda n\mathcal E[m_n]+o(n),
 \qquad \hat z_2=m_{n,x}(1),\\
 \mathcal E[m]&=-\int_0^1
 \bigl[(m_x-\bar m_x)^2-(m_y-\bar m_y)^2\bigr]\,\mathrm ds.
\end{aligned}
\label{eq:em-flow-energy}
\end{equation}
The error is uniform because the paths are $1$-Lipschitz.
Thus the $n^{-2}$ prefactor makes the flow exponent scale with the LDP speed $n$. For $|z_2|<1$, integrating the partition density over a window of fixed width $2\varepsilon$ gives
\begin{equation}
\begin{aligned}
 W_{2,n,\varepsilon}(z_2;\lambda)
 &=\int e^{-U_{n,\lambda}}
 \mathbf1_{\{|\hat z_2-z_2|<\varepsilon\}}\,\mathrm d\nu_n\\
 &\asymp\int_{\substack{m\in\mathcal A\\|m_x(1)-z_2|<\varepsilon}}
 e^{-n\lambda\mathcal E[m]}\,\mathrm d\mathbb P_n(m),
\end{aligned}
\label{eq:em-constraint-window}
\end{equation}
where $\asymp$ means that the logarithms differ by $o(n)$.
The energy $\mathcal E$ is bounded and continuous on $\mathcal A$.
The constrained Laplace bounds~\cite{Varadhan2010} therefore give
\begin{equation}
 -\lim_{n\to\infty}\frac1n\ln W_{2,n,\varepsilon}
 =\inf_{\substack{m\in\mathcal A\\|m_x(1)-z_2|<\varepsilon}}
 \{\lambda\mathcal E[m]-\mathcal S[m]\}.
 \label{eq:em-constrained-partition}
\end{equation}
At each sufficiently large fixed $n$, the continuous density satisfies
$W_{2,n,\varepsilon}/(2\varepsilon)\to\mathcal Z_{2,n}(z_2;\lambda)$
as $\varepsilon\downarrow0$. Theorem~\ref{supp:limit-interchange}
in the Supplemental Material proves that the $\varepsilon$ and $n$
limits commute at the logarithmic scale. Consequently,
\begin{equation}
\begin{aligned}
 F_{2,\infty}(z_2;\lambda)
 &=-\lim_{\varepsilon\downarrow0}\lim_{n\to\infty}
 \frac1n\ln\!\left[
 \frac{W_{2,n,\varepsilon}(z_2;\lambda)}{2\varepsilon}
 \right]\\
 &=\inf_{\substack{m\in\mathcal A\\m_x(1)=z_2}}
 \{\lambda\mathcal E[m]-\mathcal S[m]\}.
\end{aligned}
\label{eq:em-f2-limit}
\end{equation}
The section $F_{23,\infty}(0,z_3;\lambda)$ is obtained in the same way.

\noindent{\it Conditional response of image asymmetry---}
We examine how the mean response of image asymmetry changes with
increasing voltage (Fig.~\ref{fig:exp_normalform}).
To compare similar stages of initial stretching, reference frames are
selected within a common window of normalized image span. The raw dataset comprises $48$ trajectories ($47{,}339$ frames) at $10$~V, $201$ ($34{,}716$ frames) at $20$~V, $335$ ($27{,}922$ frames) at $40$~V, and $448$ ($33{,}726$ frames) at $50$~V.

For each image, let $\mathcal M$ be the retained molecular mask,
$x_{\min},x_{\max}$ its axial extrema, and $I_p$ the background-corrected
intensity at pixel coordinate $x_p$. Define
\begin{equation}
\begin{aligned}
&x_{\mathrm{com}} =\frac{\sum_{p\in\mathcal M}I_px_p}{\sum_{p\in\mathcal M}I_p},\quad
\frac q2 =\frac{x_{\min}+x_{\max}}2-x_{\mathrm{com}},\\
&\widetilde Z_3^{(i)}(t)
=q_i(t)-\mathop{\mathrm{median}}_{s\in\mathcal Q_i}q_i(s).
\end{aligned}
\label{eq:em-exp-coordinate}
\end{equation}
Here $\mathcal Q_i$ contains all quality-filtered frames of trajectory $i$
(Sec.~\ref{supp:exp-platform}); coordinates are in pixels.
Such image-based coordinate \(\tilde Z_{3}\) correlates with the microscopic folding asymmetry $(r_{0,x}+r_{N,x})/2-r_{\mathrm{com},x}$, as shown in Fig.~\ref{fig:image-based coordinates}.
For conditioning, set $z(t)=\widetilde Z_3(t)-\mu$, where $\mu$ is the
mean reference value at each voltage. This shift leaves increments unchanged. 
For two later frames, $0<\delta_1<\delta_2$, we define
\begin{equation}
\begin{aligned}
\Delta\widetilde Z_3(t)
=&\widetilde Z_3(t+\delta_2) -\widetilde Z_3(t+\delta_1),\\
f(z) =&\left\langle\Delta\widetilde Z_3(t)\!\mid z(t)=z\right\rangle,\\ 
g(z) =&\frac{f(z)-f(-z)}{2}.
\end{aligned}
\label{eq:em-exp-response}
\end{equation}
The use of two later frames separates the conditioning measurement
from the finite-lag increment. The antisymmetrization extracts the component odd under $z\mapsto -z$, thereby focusing the analysis on the restoring or amplifying tendency shared by positive and negative deviations from the chosen center.

\begin{figure}[!htbp]
    \centering
    \includegraphics[width=\linewidth]{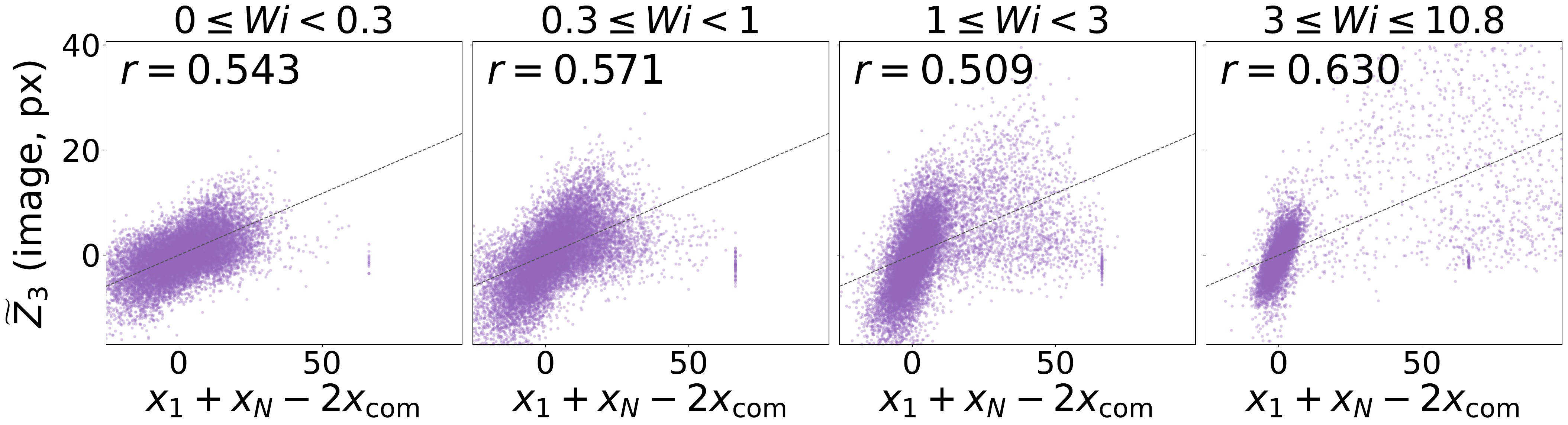}
    \caption{Correlations of the image-based coordinates. The image-based coordinate $\tilde Z_{3}$ is extracted from images generated from simulated polymer conformations, whereas the microscopic folding asymmetry $x_1+x_N-2x_{\mathrm{com}}$ are computed directly from the simulated configurations.}
    \label{fig:image-based coordinates}
\end{figure}

We estimate $g$ in symmetric bins and quantify uncertainty by
bootstrap resampling entire trajectories within each voltage ensemble,
with the bin edges held fixed. The error bars in
Fig.~\ref{fig:exp_normalform} denote pointwise $16$th-$84$th
percentile intervals from $2000$ bootstrap resamples.
The resampling specification and its remaining implementation details are discussed in
Sec.~\ref{supp:exp-response}.
The marked zero crossings at $40$~V are obtained by linear interpolation
between adjacent odd-response bins.

The smooth curves in Fig.~\ref{fig:exp_normalform} are unweighted
least-squares fits of
\[
g_{\rm fit}(z)=c_1z+c_3z^3
\]
to the retained bins, with no sign constraints on the coefficients.
The corresponding apparent potentials are defined by
\begin{equation}
U_{\rm app}(z)
=-\int_0^z g_{\rm fit}(u)\,\mathrm du+C
=-\frac{c_1z^2}{2}-\frac{c_3z^4}{4}+C.
\label{eq:em-exp-potential}
\end{equation}
Each inset is shifted to zero minimum over the displayed range and
rescaled independently, so the comparison concerns the evolution of
potential shape across voltages. These apparent potentials provide a
compact representation of the fitted finite-lag responses.
Experimental details, selection and lag parameters, and sensitivity
analyses are given in Supplemental Material.

\bibliographystyle{unsrtnat}
\bibliography{main}

\clearpage
\onecolumn
\setcounter{section}{0}
\setcounter{subsection}{0}
\setcounter{equation}{0}
\setcounter{figure}{0}
\setcounter{table}{0}
\setcounter{secnumdepth}{2}
\renewcommand{\thesection}{S\arabic{section}}
\renewcommand{\thesubsection}{\thesection.\Alph{subsection}}
\makeatletter
\renewcommand{\p@subsection}{}
\makeatother
\renewcommand{\theequation}{S\arabic{equation}}
\renewcommand{\thefigure}{S\arabic{figure}}
\renewcommand{\thetable}{S\arabic{table}}
\begin{center}
\Large\textbf{Supplemental Material for\\
Hidden Phase Transition in Single-Polymer Stretching under Elongational Flow}
\end{center}

\section{Microscopic dynamics and reduced coordinates}
\label{supp:kinetic-model}

\subsection{Bead-rod chain and its numerical integration}
\label{supp:kinetic-dynamics}
We generate the microscopic trajectories using a Brownian dynamics model of a linear chain of $N+1=300$ touching beads in three dimensions \cite{Soh2018untying}. Each bead has diameter $b=10\mathrm{nm}$, and adjacent beads are connected by rigid rods of the same length:
\begin{equation}\label{eq:supp-kinetic-rods}
|\mathbf r_{i+1}-\mathbf r_i|=b, \qquad i=0,\ldots,N-1.
\end{equation}
where the beads are indexed by $i=0,\ldots,N$, $\mathbf r_i(t)$ is the position of bead $i$ and $\mathbf r_{\mathrm{cm}}(t)=(N+1)^{-1}\sum_{i=0}^{N}\mathbf r_i(t)$ is the instantaneous center of mass. The imposed planar elongational flow is evaluated relative to the center of mass:
\begin{equation}
 \mathbf u_i(t)
 =\boldsymbol\kappa
  [\mathbf r_i(t)-\mathbf r_{\mathrm{cm}}(t)],
 \qquad
 \boldsymbol\kappa
 =\dot\epsilon
 \begin{pmatrix}1&0&0\\0&-1&0\\0&0&0\end{pmatrix}.
 \label{eq:supp-kinetic-flow}
\end{equation}
The stagnation point follows the instantaneous center of mass, with extension along the $x$-axis, compression along the $y$-axis and no imposed velocity in the $z$-axis. 

Let $\zeta=3\pi\eta b$ denote the Stokes drag coefficient of a single bead in a solvent of viscosity $\eta$. We neglect hydrodynamic interactions between beads, such that the solvent acts only through the imposed flow and the bead-wise drag coefficient. 
Neglecting inertia, the force balance is
\begin{equation}\label{eq:supp-kinetic-force-balance}
\zeta\dot{\mathbf r}_i=\zeta\mathbf u_i+\mathbf F_i^{\mathrm{ev}}+\mathbf F_i^{\mathrm c}+\mathbf F_i^{\mathrm B},
\end{equation}
where $\mathbf F_i^{\mathrm{ev}}$ is the excluded volume force, $\mathbf F_i^{\mathrm c}$ enforces Eq.~\eqref{eq:supp-kinetic-rods} and $\mathbf F_i^{\mathrm B}$ is the random Brownian force. With $r_{ij}=|\mathbf r_i-\mathbf r_j|$, the short-range contact repulsion is given by
\begin{equation}\label{eq:supp-kinetic-ev}
\mathbf F_i^{\mathrm{ev}} =\mu\sum_{\substack{j\ne i,\ r_{ij}<b}} [\mathbf r_i-\mathbf r_j]/{r_{ij}},
\end{equation}
where $\mu = 35\ \mathrm{pN}$ is chosen to suppress chain crossings \cite{vologodskii2006brownian}.
For \(r_{ij}\neq b\), this force follows from the continuous pair potential
$E_{\mathrm{ev}}=\mu\sum_{i<j}(b-r_{ij})_+$, where $(a)_+=\max(a,0)$.
Defining the bond unit vector $\widehat{\mathbf b}_i=(\mathbf r_{i+1}-\mathbf r_i)/b$ and setting $T_{-1}=T_N=0$ (with the boundary terms omitted), the constraint force on bead $i$ is
\begin{equation}\label{eq:supp-kinetic-constraint}
\mathbf F_i^{\mathrm c} =T_i\widehat{\mathbf b}_i -T_{i-1}\widehat{\mathbf b}_{i-1}.
\end{equation}
The tensions $T_i$ enforce the fixed length constraints.
The Brownian force $\mathbf F_i^{\mathrm B}$ has zero mean and satisfies the fluctuation-dissipation relation
\begin{equation}\label{eq:supp-kinetic-noise}
\langle\mathbf F_i^{\mathrm B}(t)\rangle=0,
\quad
\langle\mathbf F_i^{\mathrm B}(t)\mathbf F_j^{\mathrm B}(t')^{\top}\rangle=2\zeta k_{\mathrm B}T\delta_{ij}\mathbf I\delta(t-t').
\end{equation}

We integrate the force balance equation with a predictor-corrector scheme \cite{liu1989flexible} and solve the nonlinear bond-tension equations by Newton's method at each step \cite{somasi2002brownian}.
We use a time step $\Delta t=5\times10^{-4}\tau_d$, where $\tau_d=b^2\zeta/(k_{\mathrm B}T)$; the dimensionless contact repulsion strength is $\mu b/(k_{\mathrm B}T)=85.0$. Each chain is equilibrated for $10^4\tau_d$ (without any flow) before turning on the flow.  A single-exponential fit to the final $30\%$ extension during relaxation of an initially
stretched chain yields the longest relaxation time
$\tau=4321\tau_d$, defining the Weissenberg number $\mathrm{Wi}=\dot\epsilon\tau$. 
In simulation units, $\widetilde{\mathbf r}=\mathbf r/b$, $\widetilde t=t/\tau_d$, and $\widetilde{\mathbf F}=b\mathbf F/(k_{\mathrm B}T)$, such that
\begin{equation}\label{eq:supp-wi-simulation-units}
\widetilde{\dot\epsilon}=\dot\epsilon\tau_d,
\qquad \mathrm{Wi}=\widetilde{\dot\epsilon}\frac{\tau}{\tau_d} =4321\widetilde{\dot\epsilon}.
\end{equation}

\subsection{Symmetry-adapted dimensionality reduction}
\label{supp:kinetic-coordinates}
We represent each conformation by the stacked bead positions
$\mathbf r=(\mathbf r_0^\top,\ldots,\mathbf r_N^\top)^\top\in
\mathbb R^{900}$, retaining the instantaneous center-of-mass position.
The first reduced coordinate is the normalized chain extension,
\begin{equation}
 Z_1=\frac{\max_i r_{i,x}-\min_i r_{i,x}}{(N+1)b}.
 \label{eq:supp-kinetic-extension}
\end{equation} 
To construct the two additional coordinates, we sample $K$ training conformations across chain-extension bins within each batch, supplemented by random draws without replacement. 

We constrain the projection directions by chain reversal $R$ and coordinate reflections $T_x,T_y,T_z$. These operations commute and each squares to the identity. Their $16$ combinations form the symmetry group $G=\{R^{a_R}T_x^{a_x}T_y^{a_y}T_z^{a_z}: a_R,a_x,a_y,a_z\in\{0,1\}\}$.
For each $g\in G$, an orthogonal matrix $U_g\in\mathbb R^{900\times900}$ represents its action on $\mathbf r$. 
Specifically, $(U_R\mathbf r)_i=\mathbf r_{N-i}$, where $i$ labels a three-component bead block, and $U_{T_\alpha}$ is diagonal with $-1$ on the $\alpha$-coordinate entries of all beads and $+1$ elsewhere ($\alpha=x,y,z$). Composed operations satisfy $U_{gh}=U_gU_h$ for $g,h\in G$.
A one-dimensional character $\chi:G\to\{+1,-1\}$ satisfies $\chi(gh)=\chi(g)\chi(h)$ and is labeled by the generator signs $(\chi(R),\chi(T_x),\chi(T_y),\chi(T_z))$.
Each character defines a symmetry subspace of projection vectors $\mathbf v\in\mathbb R^{900}$ satisfying $U_g\mathbf v=\chi(g)\mathbf v$ for all $g\in G$, where $\chi(g)=+1$ or $-1$ indicates that $g$ preserves or reverses $\mathbf v$, respectively. The orthogonal projector onto this subspace is
\begin{equation}\label{eq:supp-kinetic-projector}
 P_\chi=\frac1{16}\sum_{g\in G}\chi(g)U_g.
\end{equation}
We select $\chi_2=(-1,-1,+1,+1)$ and $\chi_3=(+1,-1,+1,+1)$ for the two PCA directions.

The group-averaged sample mean vanishes under the coordinate
reflections. Accordingly, for the sampled conformations
$\mathbf r^{(m)}$, $m=1,\ldots,K$, we form the second-moment matrices
\begin{equation}
 C_0=\frac1{K-1}\sum_{m=1}^{K}\mathbf r^{(m)}
     \mathbf r^{(m)\top},
 \qquad
 C=\frac1{16}\sum_{g\in G}U_gC_0U_g^\top.
 \label{eq:supp-kinetic-covariance}
\end{equation}
This enforces a zero mean by symmetry without removing the center of mass from individual conformations.
Within each subspace, the unit vector $\mathbf v_j$ maximizing $\mathbf v^\top C\mathbf v$ is obtained as the normalized leading eigenvector of $P_{\chi_j}CP_{\chi_j}$.
For the projections $a_j^{(m)}=\mathbf v_j^\top\mathbf r^{(m)}$ on the original fitting samples, define
\begin{equation}\label{eq:supp-kinetic-pca}
\bar a_j=\frac1K\sum_m a_j^{(m)},\quad
s_j^2=\frac1K\sum_m(a_j^{(m)}-\bar a_j)^2,\quad
Z_j=\frac{\mathbf v_j^\top\mathbf r}{s_j},\quad j=2,3.
\end{equation}
Thus, each coordinate is scaled by its sample standard deviation without mean subtraction.
The resulting map $\mathbf r\mapsto\mathbf Z=(Z_1,Z_2,Z_3)$, including direction signs and scales, is fixed across all flow strengths and training and test trajectories.

\section{Conditional UIDD: formulation, learning, and landscape analysis}
\label{supp:kinetic-uidd}
\makeatletter
\ifcsname r@fig:2\endcsname
\expandafter\let\expandafter\reviewfigtwo\csname r@fig:2\endcsname
\expandafter\let\csname r@fig:com\endcsname\reviewfigtwo
\fi
\makeatother

\subsection{Expressivity and identifiability}
\label{supp:cuidd-representation}
We next establish the expressivity and identifiability of conditional UIDD. Specifically, we show that any diffusion family satisfying the assumptions below admits a conditional UIDD representation whose potential is unique up to a $c$-dependent additive constant. Consider a family of It\^{o} SDEs on $\mathbb R^d$ ($d\ge1$) indexed by a condition $c\in\mathcal U\subset\mathbb R^p$:
\begin{equation}
 \mathrm d\mathbf Z_t=b(\mathbf Z_t;c)\mathrm dt
          +\sigma(\mathbf Z_t;c)\mathrm d\mathbf B_t.
 \label{supp:cuidd-target-sde}
\end{equation}
Here $\mathbf B_t$ is a standard $d$-dimensional Brownian motion,
$b(\mathbf Z;c)\in\mathbb R^d$ is the drift, and
$\sigma(\mathbf Z;c)\in\mathbb R^{d\times d}$ is the noise coefficient. The condition domain $\mathcal U$ is an open set containing the
conditions $\mathcal C$, and $c$ is fixed
along each trajectory. 
We assume a positive-definite diffusion matrix $M=\sigma\sigma^\top/2$.
We write $\mathbf e_i$ for the standard basis vectors of $\mathbb R^d$.

\newtheorem{cuiddassumption}{Assumption}
\renewcommand{\thecuiddassumption}{S\arabic{cuiddassumption}}
\newtheorem{cuiddtheorem}{Theorem}
\renewcommand{\thecuiddtheorem}{S\arabic{cuiddtheorem}}
\newtheorem{cuiddlemma}{Lemma}
\renewcommand{\thecuiddlemma}{S\arabic{cuiddlemma}}

\begin{cuiddassumption}
\label{supp:cuidd-assumption-dynamics}
For each fixed $c\in\mathcal U$, the coefficients $b(\cdot;c)\in C^1$ and $\sigma(\cdot;c)\in C^2$ of Eq.~\eqref{supp:cuidd-target-sde} satisfy global linear growth and Lipschitz bounds in $\mathbf Z$ (with $c$-dependent constants),
and the process admits a unique invariant probability density $\pi(\cdot;c)\in C^2\cap L^1$. 
\end{cuiddassumption}

\begin{cuiddassumption}
\label{supp:cuidd-assumption-density}
For all $i,j=1,\ldots,d$, assume
\[
\begin{aligned}
 b,\ \partial_{Z_i}b
  &\in C^0(\mathbb R^d\times\mathcal U),\\
 M,\ \partial_{Z_i}M,\ \partial_{Z_i}\partial_{Z_j}M
  &\in C^0(\mathbb R^d\times\mathcal U),\\
 \pi,\ \partial_{Z_i}\pi,\ \partial_{Z_i}\partial_{Z_j}\pi
  &\in C^0(\mathbb R^d\times\mathcal U).
\end{aligned}
\]
Here continuity is joint in $(\mathbf Z,c)$ and is understood componentwise for vector and matrix fields.
\end{cuiddassumption}

For Eq.~\eqref{supp:cuidd-target-sde}, the stationary Fokker-Planck equation requires a divergence-free probability current:
\begin{equation}
 \mathbf J=b\pi-\nabla_{\mathbf Z}\cdot(M\pi),
 \qquad \nabla_{\mathbf Z}\cdot\mathbf J=0.
 \label{supp:cuidd-target-current}
\end{equation}
Here $\mathbf J(\cdot;c)\in C^1$ for each $c$. Under Assumption~\ref{supp:cuidd-assumption-density}, $\mathbf J$ and its first $\mathbf Z$-derivatives are jointly continuous in $(\mathbf Z,c)$.

\begin{cuiddlemma}
\label{supp:cuidd-banded-lemma}
Let $d\ge2$ and $\mathbf J:\mathbb R^d\times\mathcal U\to\mathbb R^d$ satisfy $\nabla_{\mathbf Z}\cdot\mathbf J=0$ and
$\mathbf J,\partial_{Z_i}\mathbf J\in C^0(\mathbb R^d\times\mathcal U)$ for $i=1,\ldots,d$.
There exists a skew-symmetric matrix field $\Psi:\mathbb R^d\times\mathcal U\to\mathbb R^{d\times d}$, with $\Psi_{ij}=0$ for $|i-j|>1$, such that $\nabla_{\mathbf Z}\cdot\Psi=\mathbf J$ and
$\Psi,\partial_{Z_i}\Psi_{i,i+1},\partial_{Z_{i+1}}\Psi_{i,i+1}\in C^0(\mathbb R^d\times\mathcal U)$ for $i=1,\ldots,d-1$.
\end{cuiddlemma}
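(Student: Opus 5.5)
We construct $\Psi$ explicitly by successive integration along coordinate axes, as in the standard proof that a divergence-free field on $\mathbb R^d$ is the divergence of an antisymmetric tridiagonal tensor. The convention $(\nabla_{\mathbf Z}\cdot A)_i=\sum_j\partial_{Z_j}A_{ij}$ applied to a banded skew $\Psi$ with entries $\psi_i:=\Psi_{i,i+1}=-\Psi_{i+1,i}$ gives
\[
(\nabla_{\mathbf Z}\cdot\Psi)_i=\partial_{Z_{i+1}}\psi_i-\partial_{Z_{i-1}}\psi_{i-1},
\]
with $\psi_0=\psi_d=0$. So we must solve the triangular system
\[
\partial_{Z_2}\psi_1=J_1,\qquad
\partial_{Z_{i+1}}\psi_i=J_i+\partial_{Z_{i-1}}\psi_{i-1}\ (2\le i\le d-1),\qquad
-\partial_{Z_{d-1}}\psi_{d-1}=J_d .
\]
The last equation is the compatibility condition that divergence-freeness must enforce.

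The plan is to define the $\psi_i$ recursively by integrating from $0$ in the variable $Z_{i+1}$, while also keeping track of a closed form for $\partial_{Z_i}\psi_i$. Set
\[
\psi_1(\mathbf Z;c)=\int_0^{Z_2}J_1(Z_1,s,Z_3,\dots;c)\,\mathrm ds .
\]
Differentiating under the integral (justified by joint continuity of $\partial_{Z_1}J_1$ on compacts) and using $\partial_1J_1=-\sum_{k\ge2}\partial_kJ_k$ gives
\[
\partial_{Z_1}\psi_1=-\sum_{k\ge2}\int_0^{Z_2}\partial_kJ_k\,\mathrm ds=-J_2+J_2|_{Z_2=0}-\sum_{k\ge3}\int_0^{Z_2}\partial_kJ_k\,\mathrm ds .
\]
Inductively, I expect the closed form $\partial_{Z_{i-1}}\psi_{i-1}+J_i=\Phi_i$, where $\Phi_i$ is a sum of iterated boundary restrictions of $J_j$ (with $Z_2,\dots,Z_i$ partially set to $0$), and these depend only on the variables other than those already integrated. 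In particular, $J_i+\partial_{Z_{i-1}}\psi_{i-1}$ should equal $J_i$ restricted to $Z_2=\dots=Z_{i}=0$ plus integrals of $\partial_kJ_k$ for $k>i$. Cleanest is the inductive claim
\[
J_i+\partial_{Z_{i-1}}\psi_{i-1}=G_i,\qquad \partial_{Z_i}G_i=-\sum_{k>i}\partial_{Z_k}(\text{corresponding terms}),
\]
so that at $i=d$ the right side involves no $k>d$ and $G_d$ is restricted to $Z_2=\dots=Z_d=0$. The remaining obligation is then that this restricted quantity vanishes.

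This vanishing is the main obstacle: the naive recursion leaves a residual equal to a one-variable function $g(Z_1;c)$, obtained from $J$ on the $Z_1$-axis, rather than zero. To remove it I would modify the scheme by choosing the base point adaptively, or more simply by adding to $\psi_{d-1}$ a correction $-Z_{d-1}\,h(Z_1,\dots)$ that absorbs the residual. Tracing the recursion shows the residual is $\partial_{Z_1}$ of something that can be pushed back into $\psi_1$. Concretely, I would attempt the alternative ordering that integrates $\psi_{d-1}$ in $Z_{d-1}$ from the last equation and proceeds downward, and check that the first equation then closes using $\nabla\cdot\mathbf J=0$. I expect one of the two orderings, combined with a correction term linear in the integration variable and depending on lower-dimensional restrictions, to close exactly.

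Regularity then follows: every object is built from $J$, its first partials, and their restrictions to coordinate hyperplanes, through integrals over compact intervals depending continuously on $(\mathbf Z,c)$. Hence $\Psi$ together with $\partial_{Z_i}\psi_i$ and $\partial_{Z_{i+1}}\psi_i$ is jointly continuous, and this is exactly the regularity the statement asks for.
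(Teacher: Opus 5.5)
Your tridiagonal system and upward recursion $\psi_1=I_2J_1$, $\psi_i=I_{i+1}(J_i+\partial_{Z_{i-1}}\psi_{i-1})$ (where $I_j$ is $\int_0^{Z_j}$ in the $j$th slot and $E_j$ is restriction to $Z_j=0$) match the paper's coordinate-integration construction for the first $d-2$ entries. Using $\nabla_{\mathbf Z}\cdot\mathbf J=0$, your $\psi_i$ equals the paper's $I_{i+1}J_i+I_{i+1}I_iS_{i-1}$, with $S_i=\sum_{k\le i}\partial_{Z_k}J_k$. The gap is the one step that needs an idea: closing the last equation $-\partial_{Z_{d-1}}\psi_{d-1}=J_d$. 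You leave this at what you ``expect'' to work, and your description of the residual is incorrect. The induction actually gives $J_i+\partial_{Z_{i-1}}\psi_{i-1}=E_iJ_i-\sum_{k>i}I_i\partial_{Z_k}J_k$, which is a restriction to $Z_i=0$ only. Hence $-\partial_{Z_{d-1}}\psi_{d-1}=I_d\partial_{Z_d}J_d=J_d-E_dJ_d$. The residual $E_dJ_d$ is a function of $(Z_1,\dots,Z_{d-1};c)$, not of $Z_1$ alone once $d\ge3$. For example, with $d=3$ and $\mathbf J=(0,0,-Z_2)$, your recursion gives $\psi_1=\psi_2=0$ and leaves the mismatch $-Z_2$. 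No correction $-Z_2h$ with $h$ independent of $Z_2$ can absorb this. Integrating downward from the last equation instead only moves a mirror-image residual to the first equation.

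The missing observation is that the residual can be removed at the last entry without disturbing anything else. Replace $\psi_{d-1}$ by $\psi_{d-1}-I_{d-1}E_dJ_d$, i.e. subtract $\int_0^{Z_{d-1}}J_d(Z_1,\dots,Z_{d-2},s,0;c)\,\mathrm ds$. Its $-\partial_{Z_{d-1}}$-derivative is exactly $E_dJ_d$, which cancels the residual. Because it does not depend on $Z_d$, it leaves the $(d-1)$st equation $\partial_{Z_d}\psi_{d-1}-\partial_{Z_{d-2}}\psi_{d-2}=J_{d-1}$ intact. For $d\ge3$ the corrected entry simplifies to $I_dE_{d-1}J_{d-1}-I_{d-1}J_d$, which is precisely the paper's $A_{d-1}$; for $d=2$ the same fix gives the paper's $I_2J_1-I_1E_2J_2$. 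With this in place, your regularity paragraph goes through. The required partials are $\partial_{Z_i}\psi_i=I_{i+1}S_i$ and $\partial_{Z_{i+1}}\psi_i=J_i+I_iS_{i-1}$ for $i\le d-2$, and $-J_d$ and $E_{d-1}J_{d-1}-I_{d-1}\partial_{Z_d}J_d$ for the last entry. All of these involve only $\mathbf J$ and its first partials.
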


\begin{proof}
We adapt the coordinate-integration construction in Ref.~\cite{kang1995volume}, using the reference hyperplanes $Z_j=0$ for every $c$.
For a scalar field $f=f(\mathbf Z;c)$, define
\[
\begin{aligned}
 (I_jf)(\mathbf Z;c) =\int_0^{Z_j}f(Z_1,\ldots,s,\ldots,Z_d;c)\mathrm ds,\quad 
 (E_jf)(\mathbf Z;c) =f(Z_1,\ldots,0,\ldots,Z_d;c),
\end{aligned}
\]
where $s$ or $0$ replaces the $j$th coordinate of $\mathbf Z$ at fixed $c$.
Both operators preserve joint continuity in $(\mathbf Z,c)$, since $E_j$ is a coordinate evaluation and
\[
 (I_jf)(\mathbf Z;c)
 =Z_j\int_0^1f(Z_1,\ldots,uZ_j,\ldots,Z_d;c)\mathrm du 
\]
is continuous in $(\mathbf Z,c)$ by uniform continuity on compact sets.
Define $S_i(\mathbf Z;c)=\sum_{k=1}^i\partial_{Z_k}J_k(\mathbf Z;c)$ and $S_0=0$.
For $d=2$, set
\begin{equation}
 A_1(\mathbf Z;c)=(I_2J_1-I_1E_2J_2)(\mathbf Z;c).
 \label{supp:cuidd-current-construction}
\end{equation}
At each $c$, the divergence-free condition $\nabla_{\mathbf Z}\!\cdot\mathbf J=0$ implies
$\partial_{Z_2}A_1(\mathbf Z;c)=J_1(\mathbf Z;c)$ and
$\partial_{Z_1}A_1(\mathbf Z;c)=-J_2(\mathbf Z;c)$.
For $d\ge3$, we take
\begin{equation}
 \begin{aligned}
 A_1(\mathbf Z;c)&=(I_2J_1)(\mathbf Z;c),\\
 A_i(\mathbf Z;c)&=(I_{i+1}J_i+I_{i+1}I_iS_{i-1})(\mathbf Z;c),
 \quad 2\le i\le d-2,\\
 A_{d-1}(\mathbf Z;c)&=(I_dE_{d-1}J_{d-1}-I_{d-1}J_d)(\mathbf Z;c).
 \end{aligned}
 \label{supp:cuidd-induction}
\end{equation}
Differentiating at fixed $c$ and using $\partial_{Z_i}(I_iS_{i-1})=S_{i-1}$, we obtain that, for $1\le i\le d-2$,
\[
\begin{aligned}
 \partial_{Z_i}A_i(\mathbf Z;c)
 &=[I_{i+1}(\partial_{Z_i}J_i+S_{i-1})](\mathbf Z;c)
  =(I_{i+1}S_i)(\mathbf Z;c),\\
 \partial_{Z_{i+1}}A_i(\mathbf Z;c)
 &=J_i(\mathbf Z;c)+(I_iS_{i-1})(\mathbf Z;c).
\end{aligned}
\]
For the last entry, $\nabla_{\mathbf Z}\cdot\mathbf J(\mathbf Z;c)=0$ yields
\[
\begin{aligned}
 \partial_{Z_{d-1}}A_{d-1}(\mathbf Z;c) =-J_d(\mathbf Z;c),\quad
 \partial_{Z_d}A_{d-1}(\mathbf Z;c)
  =J_{d-1}(\mathbf Z;c)+(I_{d-1}S_{d-2})(\mathbf Z;c).
\end{aligned}
\]
Set $\Psi_{i,i+1}(\mathbf Z;c)=A_i(\mathbf Z;c)=-\Psi_{i+1,i}(\mathbf Z;c)$ and all other entries to zero. Then
\begin{equation}
 \begin{aligned}
 (\nabla_{\mathbf Z}\cdot\Psi)_1(\mathbf Z;c)&=J_1(\mathbf Z;c),\\
 (\nabla_{\mathbf Z}\cdot\Psi)_i(\mathbf Z;c)
 &=-\partial_{Z_{i-1}}A_{i-1}(\mathbf Z;c)
  +\partial_{Z_{i+1}}A_i(\mathbf Z;c)=J_i(\mathbf Z;c),
 \quad 2\le i\le d-1,\\
 (\nabla_{\mathbf Z}\cdot\Psi)_d(\mathbf Z;c)
 &=-\partial_{Z_{d-1}}A_{d-1}(\mathbf Z;c)=J_d(\mathbf Z;c).
 \end{aligned}
 \label{supp:cuidd-current-verification}
\end{equation}
Since $J_i$ and $S_i$ are jointly continuous in $(\mathbf Z,c)$ and $I_j,E_j$ preserve continuity, $\Psi$ and its stated derivatives are jointly continuous.
The construction requires no $c$-derivatives and needs only $\mathbf J(\cdot;c)\in C^1$ at each fixed $c$.
For $d=2,3$, all first $\mathbf Z$-derivatives of $\Psi$ are jointly continuous; for $d\ge4$, this holds if all second $\mathbf Z$-derivatives of $\mathbf J$ are jointly continuous.
\end{proof}

\begin{cuiddtheorem}[{Expressivity and identifiability}]
\label{supp:cuidd-representation-theorem}
For the family of SDEs in Eq.~\eqref{supp:cuidd-target-sde} satisfying Assumption~\ref{supp:cuidd-assumption-dynamics}, there exist a potential $V(\cdot;c)\in C^2(\mathbb R^d)$ and an antisymmetric matrix $W(\mathbf Z;c)$, with $W_{ij}=0$ for $|i-j|>1$, such that $e^{-V(\cdot;c)}\in L^1(\mathbb R^d)$ and
\begin{equation}
 b(\mathbf Z;c)=-[M(\mathbf Z;c)+W(\mathbf Z;c)]\nabla_{\mathbf Z}V(\mathbf Z;c)
     +\nabla_{\mathbf Z}\cdot M(\mathbf Z;c)+\nabla_{\mathbf Z}\cdot W(\mathbf Z;c)
 \label{supp:cuidd-exact-representation}
\end{equation}
for every $(\mathbf Z,c)\in\mathbb R^d\times\mathcal U$, with $W=0$ for $d=1$.
For each fixed $c$, the entries $H_i=W_{i,i+1}$ and their derivatives $\partial_{Z_i}H_i$, $\partial_{Z_{i+1}}H_i$ are continuous in $\mathbf Z$.
The potential can be chosen as $V=-\ln\pi$. Among representations with continuous potentials whose normalized exponentials are invariant for Eq.~\eqref{supp:cuidd-target-sde}, it is unique up to an additive function $C(c)$ independent of $\mathbf Z$.

If Assumption~\ref{supp:cuidd-assumption-density} also holds, $V$ (up to second $\mathbf Z$-derivatives) and $H_i,\partial_{Z_i}H_i,\partial_{Z_{i+1}}H_i$ can be chosen jointly continuous in $(\mathbf Z,c)$.
\end{cuiddtheorem}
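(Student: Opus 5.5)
The construction takes $V=-\ln\pi$ and reads $W$ off the stationary probability current. By Assumption~\ref{supp:cuidd-assumption-dynamics}, for each fixed $c$ the invariant density $\pi(\cdot;c)\in C^2\cap L^1$ satisfies the stationary Fokker--Planck equation. The current $\mathbf J=b\pi-\nabla_{\mathbf Z}\cdot(M\pi)$ of Eq.~\eqref{supp:cuidd-target-current} is therefore $C^1$ in $\mathbf Z$ and divergence free. I would first argue that $\pi>0$ everywhere, so that $V=-\ln\pi\in C^2$ is well defined and $e^{-V}=\pi\in L^1$. Positivity should follow from uniform ellipticity of $M$ together with a strong maximum principle or Harnack inequality for the adjoint equation $\mathcal L^*\pi=0$. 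Since $\pi\ge0$ is nontrivial and solves this linear elliptic equation with $C^1$ coefficients, it cannot vanish at an interior point without vanishing identically.

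For $d\ge2$, apply Lemma~\ref{supp:cuidd-banded-lemma} to $\mathbf J$ to obtain a banded skew-symmetric $\Psi$ with $\nabla_{\mathbf Z}\cdot\Psi=\mathbf J$. Set $W=\Psi/\pi=\Psi e^{V}$, which is again banded and antisymmetric. Expanding with the product rule gives
\[
\nabla\cdot(M\pi)=\pi\nabla\cdot M-\pi M\nabla V,
\qquad
\nabla\cdot(W\pi)=\pi\nabla\cdot W-\pi W\nabla V .
\]
The first identity is used with $M$ symmetric; in the second, $(W\nabla\pi)_i=\sum_j W_{ij}\partial_j\pi$, consistent with the divergence convention. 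Then $b\pi=\mathbf J+\nabla\cdot(M\pi)=\nabla\cdot(W\pi)+\nabla\cdot(M\pi)$, and dividing by $\pi$ gives exactly Eq.~\eqref{supp:cuidd-exact-representation}.

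In the divergence only the derivatives $\partial_{Z_i}W_{i,i+1}$ and $\partial_{Z_{i+1}}W_{i,i+1}$ appear. Their continuity follows from the lemma's regularity of $\partial_{Z_i}\Psi_{i,i+1}$ and $\partial_{Z_{i+1}}\Psi_{i,i+1}$ combined with $\pi\in C^1$, $\pi>0$. For $d=1$, divergence-freeness means $J$ is constant. I would argue $J=0$ from integrability: if $J\neq0$, then $b\pi-(M\pi)'=J$ forces $|(M\pi)'|$ bounded away from zero on sets of infinite measure, contradicting $\pi\in L^1$ with $\pi\to0$ along sequences. A cleaner alternative is a flux argument using ergodicity. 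With $J=0$, the same computation holds with $W=0$.

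For identifiability, suppose $\tilde V$ is another continuous potential whose normalized exponential is invariant. Uniqueness of the invariant density in Assumption~\ref{supp:cuidd-assumption-dynamics} gives $e^{-\tilde V}/\tilde{\mathcal Z}(c)=\pi$, hence $\tilde V=V+\ln\tilde{\mathcal Z}(c)$, which differs from $V$ by a function of $c$ only. This uses only the fact, already shown in the Methods, that every representation of the form \eqref{supp:cuidd-exact-representation} has $e^{-V}$ stationary, since $\nabla\cdot\nabla\cdot(W\rho)=0$.

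For the joint-continuity statement under Assumption~\ref{supp:cuidd-assumption-density}, $\pi$ and its first two $\mathbf Z$-derivatives are jointly continuous, and positivity makes $V$ and its second derivatives jointly continuous. $\mathbf J$ and $\partial_{Z_i}\mathbf J$ are jointly continuous, so Lemma~\ref{supp:cuidd-banded-lemma} gives a jointly continuous $\Psi$ together with the relevant partial derivatives. Dividing by the jointly continuous positive $\pi$ preserves this, since the lemma's reference-hyperplane construction is uniform in $c$. I expect the main obstacle to be the strict positivity of $\pi$ and the $d=1$ zero-flux claim, because both rely on PDE facts that are not stated explicitly in the assumptions. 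I would handle them by citing the strong maximum principle for nondegenerate diffusions and the standard one-dimensional zero-current argument.
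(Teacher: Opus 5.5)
Your proposal is correct and follows essentially the same route as the paper's proof: positivity of $\pi$ from the strong minimum principle, the split $b=\nabla_{\mathbf Z}\cdot(M\pi)/\pi+\mathbf J/\pi$, Lemma~\ref{supp:cuidd-banded-lemma} for a banded $\Psi$ with $W=\Psi/\pi$, identifiability from uniqueness of the invariant density, and joint continuity by dividing by $\pi>0$. Your explicit $d=1$ step, showing that the constant current vanishes, fills a point the paper's proof passes over; as you phrase it, the ``infinite measure'' heuristic is not yet an argument, but it can be closed. Integrate $(M\pi)'=b\pi-J$ outward from a large $R$, toward $+\infty$ if $J>0$ and toward $-\infty$ if $J<0$. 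Linear growth bounds the $b\pi$ contribution by $K(1+|z|)$ times the tail mass of $\pi$, which is less than $|J|/2$ for $R$ large. This forces $M\pi<0$ far out, a contradiction.
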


\begin{proof}
We follow the UIDD decomposition~\cite{zhu2026identifiable}, first fixing $c$.
On any bounded ball, the stationary Fokker-Planck operator is uniformly elliptic. Bounding its zeroth-order coefficient below by a nonpositive constant makes $\pi\ge0$ a supersolution, so the strong minimum principle and normalization imply $\pi>0$ on $\mathbb R^d$.
Using the stationary current in Eq.~\eqref{supp:cuidd-target-current}, decompose the drift as
\[
 b=\frac{\nabla_{\mathbf Z}\cdot(M\pi)}{\pi}
  +\frac{\mathbf J}{\pi},
 \qquad \nabla_{\mathbf Z}\cdot\mathbf J=0.
\] 
For $d\ge2$, the construction in Lemma~\ref{supp:cuidd-banded-lemma} yields a banded antisymmetric matrix $\Psi$ with $\nabla_{\mathbf Z}\cdot\Psi=\mathbf J$.
Define
\[
 V(\mathbf Z;c)=-\ln\pi(\mathbf Z;c),
 \qquad W(\mathbf Z;c)=\frac{\Psi(\mathbf Z;c)}{\pi(\mathbf Z;c)}.
\]
The matrix $W$ has the same banded antisymmetric structure as $\Psi$. Since $\nabla_{\mathbf Z}\pi=-\pi\nabla_{\mathbf Z}V$, the two terms in the drift become
\begin{equation}
 \begin{aligned}
 \frac{\nabla_{\mathbf Z}\cdot(M\pi)}{\pi}
   =-M\nabla_{\mathbf Z}V+\nabla_{\mathbf Z}\cdot M,\quad
 \frac{\mathbf J}{\pi}
   =\frac{\nabla_{\mathbf Z}\cdot(W\pi)}{\pi}
    =-W\nabla_{\mathbf Z}V+\nabla_{\mathbf Z}\cdot W.
 \end{aligned}
 \label{supp:cuidd-representation-verification}
\end{equation}
Adding these identities yields Eq.~\eqref{supp:cuidd-exact-representation}, and $e^{-V}=\pi$ is normalized and invariant.

Under Assumption~\ref{supp:cuidd-assumption-density}, $\mathbf J$ and its first $\mathbf Z$-derivatives are jointly continuous, so Lemma~\ref{supp:cuidd-banded-lemma} yields a $\Psi(\mathbf Z;c)$ with the stated joint continuity. Dividing by $\pi>0$ and applying the chain rule gives the claimed regularity of $W$ and $V$ without differentiating in $c$.

Finally, if $\widetilde V$ is another continuous potential whose normalized exponential is invariant, uniqueness of the continuous invariant density implies
\[
 \frac{e^{-\widetilde V(\mathbf Z;c)}}{\widetilde{\mathcal Z}(c)}
 =\pi(\mathbf Z;c),
 \quad
 \widetilde{\mathcal Z}(c)=\int_{\mathbb R^d}e^{-\widetilde V(\mathbf Z;c)}\mathrm d\mathbf Z.
\]
Thus $\widetilde V(\mathbf Z;c)=-\ln\pi(\mathbf Z;c)+C(c)$ with $C(c)=-\ln\widetilde{\mathcal Z}(c)$, proving identifiability.
\end{proof}

\subsection{Neural parameterization}
\label{supp:cuidd-approximation}
We extend the UIDD network modules~\cite{zhu2026identifiable} by supplying the joint input $(\mathbf Z,c)$ and sharing all parameters $\theta$ across conditions:
\begin{equation}
\begin{aligned}
 \sigma_\theta(\mathbf Z;c)
 =&\sqrt2\left[\alpha_{\sigma} T_\theta(\mathbf Z;c)+\operatorname{diag}\phi(s_\theta(\mathbf Z;c))\right],\\
 W_\theta(\mathbf Z;c) =& \begin{pmatrix}
  0& H_{\theta,1}(\mathbf Z;c)&0 &\cdots&0&0\\
  -H_{\theta,1}(\mathbf Z;c)&0&H_{\theta,2}(\mathbf Z;c)&\cdots&0&0\\
  0&-H_{\theta,2}(\mathbf Z;c)&0&\cdots&0&0\\
  \vdots&\vdots&\vdots& \ddots&\vdots&\vdots\\
  0&0&0&\cdots&0&H_{\theta, d-1}(\mathbf Z;c)\\
  0&0&0&\cdots&-H_{\theta, d-1}(\mathbf Z;c)&0\\
\end{pmatrix},
\end{aligned}
\label{eq:supp-kinetic-uidd-w}
\end{equation} where $T_\theta\in\mathbb R^{d\times d}$ is strictly lower triangular, $s_\theta\in\mathbb R^d$, $H_\theta\in\mathbb R^{d-1}$, and $\phi(s)=(\sqrt{1+s^2}+s)/2$ is applied componentwise. Thus $M_\theta=\sigma_\theta\sigma_\theta^\top/2=L_\theta L_\theta^\top$, where $L_\theta=\alpha_{\sigma}T_\theta+\operatorname{diag}\phi(s_\theta)$ has positive diagonal. The off-diagonal scale factor is fixed at $\alpha_{\sigma}=0.1$. The potential is
\begin{equation}
 V_\theta(\mathbf Z;c)
 =\frac{\beta_1}{2}\|U_\theta(\mathbf Z;c)+\beta_2\Gamma(\mathbf Z^\top,c^\top)^\top\|^2
 +\beta_3\|\mathbf Z\|^2,
 \label{supp:cuidd-admissible-potential}
\end{equation}
with $U_\theta\in\mathbb R^m$, $m\ge1$, trainable $\Gamma\in\mathbb R^{m\times(d+p)}$, and fixed $\beta_1,\beta_2,\beta_3>0$.
Thus $M_\theta$ is positive definite, $W_\theta$ is banded and antisymmetric, and $e^{-V_\theta}\le e^{-\beta_3\|\mathbf Z\|^2}$ is integrable.

To establish expressivity, consider $\tanh$ networks. Under Assumptions~\ref{supp:cuidd-assumption-dynamics} and \ref{supp:cuidd-assumption-density}, let $V,W$ be the fields in Theorem~\ref{supp:cuidd-representation-theorem} and let $Q=K\times\mathcal C_0$, where $K\subset\mathbb R^d$ and $\mathcal C_0\subset\mathcal U$ are nonempty compact sets. For every $\epsilon>0$, there exist shared parameters $\theta$ and a constant $C_Q$, independent of $c$, such that
\begin{equation}
 \sup_{(\mathbf Z,c)\in Q}
 \left[|V_\theta-V-C_Q|+\|f_\theta-b\|+\|M_\theta-M\|\right]<\epsilon,
 \label{supp:cuidd-generator-approximation}
\end{equation}
where $f_\theta$ is the drift in Eq.~\eqref{eq:uidd-main} assembled from the neural fields.

To see this, let \(M=LL^\top\) be the Cholesky factorization, with
\(L\) lower triangular and \(L_{ii}>0\). We approximate \(10L_{ij}\)
for \(i>j\) and \(\phi^{-1}(L_{ii})\) by the corresponding components
of \(T_\theta\) and \(s_\theta\), respectively. For the potential, we
choose \(C_Q\) such that
\(V+C_Q-\beta_3\|\mathbf Z\|^2>0\) on a neighborhood of
\(Q\), set \(\Gamma=0\), and approximate
\[
q(\mathbf Z;c)
=
\sqrt{\frac{2}{\beta_1}
\left[V(\mathbf Z;c)+C_Q-\beta_3\|\mathbf Z\|^2\right]}
\]
by one component of \(U_\theta\), with the remaining components set to zero. 
We approximate these targets together with their required $\mathbf Z$-derivatives (in particular, $H_i=W_{i,i+1}$ along with $\partial_{Z_i}H_i$ and $\partial_{Z_{i+1}}H_i$).
By Assumption~\ref{supp:cuidd-assumption-density} and Lemma~\ref{supp:cuidd-banded-lemma}, all targets and required derivatives are jointly continuous in $(\mathbf Z,c)$.
Multiplying by a smooth cutoff equal to one near $Q$ and mollifying in $(\mathbf Z,c)$ yields smooth functions that converge uniformly on $Q$ together with these derivatives.
Simultaneous neural approximation of functions and derivatives~\cite{hornik1990derivatives}, combined with the product and chain rules, then yields Eq.~\eqref{supp:cuidd-generator-approximation}.

\emph{Polymer implementation.}
We symmetrize the potential as
\begin{equation}
 V_\theta^{\mathrm{sym}}(\mathbf Z;c)
 =\frac14\sum_{g\in G_Z}V_\theta(g\mathbf Z;c),
 \qquad G_Z=\{\operatorname{diag}(1,\pm1,\pm1)\}.
 \label{supp:cuidd-symmetric-potential}
\end{equation}
The generators $\operatorname{diag}(1,-1,1)$ and $\operatorname{diag}(1,-1,-1)$ are the reduced actions of $R$ and $T_x$, respectively (Sec.~\ref{supp:kinetic-coordinates}).
Thus the potential is even in $Z_2$ and $Z_3$ and retains the Gaussian bound; $\sigma_\theta$ and $W_\theta$ are not symmetrized.
For a $G_Z$-invariant target potential and a $G_Z$-invariant $K$, averaging does not increase the uniform errors in the potential and its state gradient. The preceding approximation argument therefore also applies to this symmetry-restricted class.

\begin{samepage}
Network hyperparameters are summarized in the table below. 
\begin{center}
\begin{tabular}{llc}
\hline
Output & Layer widths & Hidden activation \\
\hline
$U_\theta$ & $4\to128\to32$ & $r$ \\
$s_\theta$ & $4\to32\to32\to3$ & $r$ \\
$T_\theta$ (before triangular projection) & $4\to32\to32\to9$ & $r$ \\
$H_\theta$ & $4\to32\to32\to2$ & $\tanh$ \\
\hline
\end{tabular}
\end{center}
\end{samepage}
Here $r(x)=\max(0,x)^2-\max(0,x-\tfrac12)^2$.
For $T_\theta$, the nine outputs are reshaped into a $3\times3$ matrix before retaining its strictly lower-triangular part. Here $\Gamma$ is a bias-free linear layer with $m=32$, and $(\beta_1,\beta_2,\beta_3)=(100,1,0.01)$.

\subsection{Training and testing}
\label{supp:kinetic-uidd-learning}
We fit the conditional model to \(11145\) trajectories by minimizing the loss in Eq.~\eqref{eq:em-uidd-loss}.
Of these, \(4945\) trajectories start from a pool of coiled conformations and another \(4945\) from a single folded conformation.
Both ensembles span \(\mathrm{Wi}\in(0.005,10.8]\) with closely matched condition distributions.
An additional \(1255\) trajectories sample \(\mathrm{Wi}\in(0,1)\) from the same pool of coiled conformations.
The condition is held fixed along each trajectory.

Training uses adjacent transitions from \(500\)-frame segments with
\(\Delta t_{\mathrm{train}}=5\times10^{-4}\).
Minibatches contain at most \(256\) segments and are processed in a fixed order.
For each minibatch, a single transformation is drawn from the identity, \(R\), and \(T_x\) with probabilities \(1/2\), \(1/4\), and \(1/4\), respectively.
The model is trained for \(300\) epochs with Adam, starting from a learning rate of \(10^{-3}\), which is reduced when the loss plateaus.
We retain the parameters from the final epoch and train five independent models from different random initializations.

We evaluate transient stretching from three initial microscopic conformations (labeled fast, medium, and slow) excluded from both PCA fitting and model training.
For each conformation and each of ten test $\mathrm{Wi}$ values, we compare $500$ microscopic trajectories against $500$ reduced trajectories simulated from the corresponding encoded initial state.
To quantify unfolding times, let $Z_1^{(j)}(t)$ denote the normalized span of trajectory $j$, and define the unfolding step $n_{\mathrm u}^{(j)}$ as
\begin{equation}
 n_{\mathrm u}^{(j)}
 =\min\left\{n\in\{0,\ldots,n_{\max}\}:
 Z_1^{(j)}(n\Delta t_{\mathrm{test}})
 >0.95Z_1^{(j)}(t_{\max})\right\},
 \label{supp:cuidd-unfolding-time}
\end{equation}
where \(n_{\max}=t_{\max}/\Delta t_{\mathrm{test}}\). 
This threshold is defined relative to each trajectory's final extension and is recorded at discrete output steps without interpolation ($n_{\mathrm u}^{(j)}=0$ if exceeded initially).
Figure~\ref{fig:supp-kinetic-transient-tests} shows that the conditional UIDD model accurately reproduces the ensemble mean and standard deviation of $Z_1(t)$ as well as the normalized histograms of $n_{\mathrm u}$ across all three initial conformations and flow strengths.

\begin{figure}[htbp]
\centering
\includegraphics[width=\linewidth,height=0.82\textheight,keepaspectratio]{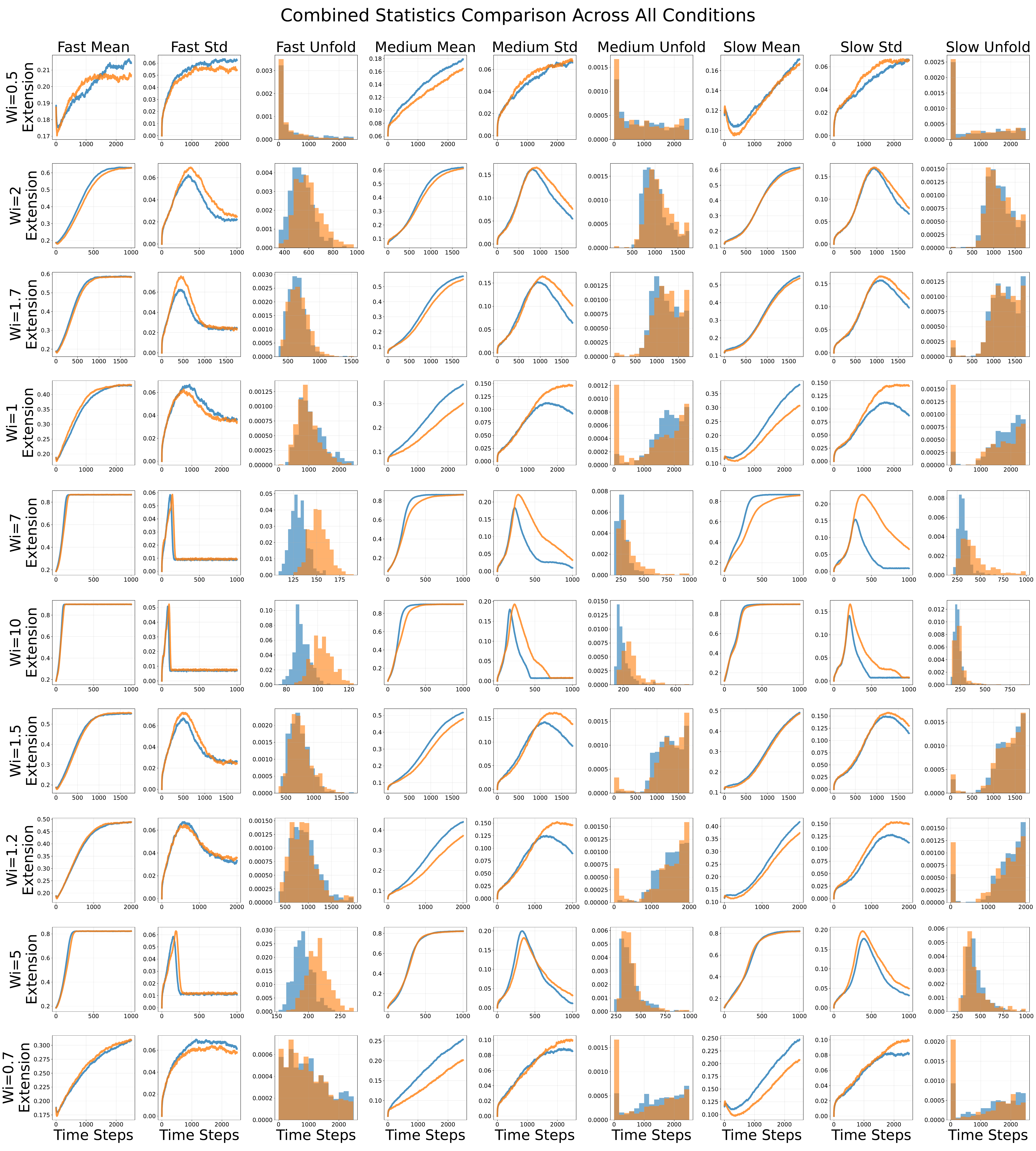}
\caption{Transient stretching from the fast, medium, and slow initial conformations (column groups) at different \(\mathrm{Wi}\) (rows). Each group compares the mean and standard deviation of \(Z_1\) and the distribution of \(n_{\mathrm u}\) defined in Eq.~\eqref{supp:cuidd-unfolding-time} obtained from microscopic simulations (blue) and the conditional UIDD model (orange), using \(500\) realizations per ensemble. Horizontal axes are given in output-step units.}
\label{fig:supp-kinetic-transient-tests}
\end{figure}

\subsection{Landscape evaluation and transition estimates}
\label{supp:kinetic-transitions}
We evaluate the marginal integrals using uniform-grid Riemann sums (via log-sum-exp) with equal grid-spacing weights on the following grids:
\begin{center}
\begin{tabular}{lll}
\hline
Profile & Integration grids (nodes) & Tabulation grid (nodes; spacing) \\
\hline
$F_2$ &\shortstack[l]{$Z_1\in[-0.5,2.5]$ ($512$),\\$Z_3\in[-15,15]$ ($769$)}
   &$Z_2\in[-3,3]$ ($601$; $0.01$) \\
$F_{23}(0,\cdot)$ & $Z_1\in[-0.5,2.5]$ ($512$)
   & $Z_3\in[-15,15]$ ($2401$; $0.0125$) \\
\hline
\end{tabular}
\end{center} 
All profiles and insets use the same $512$-node quadrature
over $Z_1\in[-0.5,2.5]$.

For $k=1,\ldots,5$, let $\mathcal F_k(q;\mathrm{Wi})$ denote either $F_2^{(k)}(q;\mathrm{Wi})$ or $F_{23}^{(k)}(0,q;\mathrm{Wi})$, and let $\mathcal Q$ be the corresponding tabulation grid, which includes zero. For $\eta>0$, define
\begin{align}
\mathcal A_{\eta,k}(\mathrm{Wi})&=\left\{q\in\mathcal Q:\mathcal F_k(q;\mathrm{Wi})-\min_{q'\in\mathcal Q}\mathcal F_k(q';\mathrm{Wi})\le\eta\right\},\\
 D_k(\mathrm{Wi})
 &=\mathcal F_k(0;\mathrm{Wi})
 -\min_{q\in\mathcal Q}\mathcal F_k(q;\mathrm{Wi}).
\end{align}
The condition $0\notin\mathcal A_{\eta,k}$ is equivalent to $D_k>\eta$. 
Using $\eta=0.6$ for $F_2$ and $\eta=1.0$ for
$F_{23}(0,\cdot)$ [Fig.~\ref{fig:2}(i,l)], we estimate each model's transition point $\mathrm{Wi}_{t,k}$ (denoted $\mathrm{Wi}_t$ in the main text) by linearly interpolating $D_k(\mathrm{Wi})-\eta$ across its first upward zero crossing.
The $F_2$ scan contains $146$ values of $\mathrm{Wi}$: spacing $0.02$ on $[0.02,1]$, refined to $0.005$ on $[0.15,0.55]$; spacing $0.05$ on $[1.05,2]$; and $15$ logarithmically spaced values on $[2.2,10]$.
The $F_{23}(0,\cdot)$ scan uses $439$ values: spacing $0.01$ on $[0.02,3]$ and $0.05$ on $[3.05,10]$.
At finite $\eta$, $\mathrm{Wi}_{t,k}$ marks where the central barrier reaches $\eta$, rather than the exact onset of peak splitting.

The tolerance-sensitivity panels in Fig.~\ref{fig:2}(j,m)
show the individual model estimates, their mean, and error bars of
one sample standard deviation over five initializations, with
\[
\begin{aligned}
 F_2:\quad
 &\eta\in\{0.02,0.05,0.1,0.2,0.3,0.4,0.5,0.6,0.8,1.0,1.2,1.5,2.0\},\\
 F_{23}(0,\cdot):\quad
 &\eta\in\{0.05,0.1,0.2,0.3,0.4,0.6,0.8,1.0,1.2,1.5,2.0,3.0\}.
\end{aligned}
\]
The spread across models reflects initialization sensitivity on the fixed training set. For visibility in Fig.~\ref{fig:2}(i,l), near-minimizing regions are plotted with minimum widths of four $Z_2$ and nine $Z_3$ grid cells; all $\mathrm{Wi}_{t,k}$ estimates use the unbroadened sets.

The numerical quadratures are stable under grid refinement. Increasing the \(F_2\) quadrature to \(1023\) nodes in \(Z_1\) and \(1537\) nodes in \(Z_3\), and the section quadrature to \(2048\) nodes in \(Z_1\), changes \(D_k\) by less than \(10^{-3}\) at the tested conditions with \(\mathrm{Wi}\le3\) and by less than \(4\times10^{-2}\) at those up to \(\mathrm{Wi}=10\). All tabulated minima of \(F_{23}(0,Z_3)\) satisfy \(|Z_3|<9\).
To test domain truncation, we compare the integral $I_{\mathrm{win}}$ on the default window with $I_{\mathrm{ref}}$ on an expanded window ($Z_1\in[-1.50,3.50]$, $Z_3\in[-20,20]$) at matched spacings ($3/511$ in $Z_1$, $10/256$ in $Z_3$). Across all five models, evaluating $1-I_{\mathrm{win}}/I_{\mathrm{ref}}$ at $Z_2=0$ and at the grid minimizer for five representative controls (the point nearest $\mathrm{Wi}_{t,k}^{(1)}$, namely $0.385,0.455,0.435,0.370,0.460$, plus $\mathrm{Wi}=1,2.73,5.23,10$) yields values below $10^{-12}$, confirming negligible truncation error.

\section{Large-deviation analysis of the analytical model}
\label{supp:steady-model}

\newtheorem{steadylemma}[cuiddlemma]{Lemma}
\newtheorem{steadytheorem}[cuiddtheorem]{Theorem}

We prove the limit interchange in Eq.~\eqref{eq:em-f2-limit} and
characterize the minimizing set of the joint free energy.
We use the potential, observables, and reference measure of the
main-text analytical model, with
$\mathcal E$ and $\mathcal S$ defined in Eq.~\eqref{eq1},
and $\ell$ in Eq.~\eqref{eq:em-bond-entropy}.
For $\alpha=2,23$, write $d_\alpha=1,2$, respectively, and set
\[
 \hat Q_2=\hat z_2,\quad \hat Q_{23}=(\hat z_2,\hat z_3),\qquad
 Q_2(m)=m_x(1),\quad Q_{23}(m)=(m_x(1),m_x(1)-2\bar m_x).
\]
Let $\mathcal Z_{\alpha,n}$ be the unnormalized Boltzmann-weighted
density of $\hat Q_\alpha$, defined by the Dirac $\delta$ integrals in
the main text. We use its continuous version for all sufficiently
large $n$ and define
\begin{equation}
\begin{aligned}
 W_{\alpha,n,\varepsilon}(q;\lambda)
 =\int e^{-U_{n,\lambda}}
    \mathbf1_{\{|\hat Q_\alpha-q|_\infty<\varepsilon\}}\mathrm d\nu_n,\quad
 A_{\alpha,n,\varepsilon}(q;\lambda)
 =\frac{W_{\alpha,n,\varepsilon}(q;\lambda)}{(2\varepsilon)^{d_\alpha}}.
\end{aligned}
\label{supp:window-partition}
\end{equation}

Here $|\cdot|_\infty$ denotes the maximum norm, unsubscripted norms are Euclidean, and we set $\log0=-\infty$ and $\inf\varnothing=+\infty$. We now state our two main results.

\begin{steadytheorem}[Exchange of the two limits in Eq.~\eqref{eq:em-f2-limit}]
\label{supp:limit-interchange}
Fix $\lambda\ge0$. For $\alpha=2$, let $q=z_2\in(-1,1)$; for
$\alpha=23$, let $q=(0,z_3)$ with $|z_3|<1/2$.
The pointwise limits
 $F_{\alpha,\infty}(q;\lambda):=\lim_{n\to\infty}F_{\alpha,n}(q;\lambda)$,
where $F_{\alpha,n}=-n^{-1}\log\mathcal Z_{\alpha,n}$, exist and satisfy
\begin{equation}
\begin{aligned}
 F_{\alpha,\infty}(q;\lambda)
 =-\lim_{n\to\infty}\lim_{\varepsilon\downarrow0}
    \frac1n\log A_{\alpha,n,\varepsilon}(q;\lambda)
 =-\lim_{\varepsilon\downarrow0}\lim_{n\to\infty}
    \frac1n\log A_{\alpha,n,\varepsilon}(q;\lambda)
 =\inf_{\substack{m\in\mathcal A\\Q_\alpha(m)=q}}
    \{\lambda\mathcal E[m]-\mathcal S[m]\}.
\end{aligned}
\label{supp:iterated-limit-equality}
\end{equation}

\end{steadytheorem}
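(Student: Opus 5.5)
The proof has three parts: the window limit via the LDP, a density comparison between $\mathcal Z_{\alpha,n}$ and $A_{\alpha,n,\varepsilon}$ that is uniform in $n$, and continuity of the variational value in $q$. Write $\Phi_\alpha(q)=\inf\{\lambda\mathcal E[m]-\mathcal S[m]: m\in\mathcal A,\ Q_\alpha(m)=q\}$.

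\textbf{Step 1: the window limit.} For fixed $\varepsilon$, the argument of Eqs.~\eqref{eq:em-flow-energy}--\eqref{eq:em-constrained-partition} applies. Mogulskii's LDP holds for $\mathbb P_n$ with rate $-\mathcal S$. The energy $\lambda\mathcal E$ is bounded and uniformly continuous on $\mathcal A$, and $-U_{n,\lambda}=-n\lambda\mathcal E[m_n]+o(n)$ uniformly. The discrete observables $\hat Q_\alpha$ differ from $Q_\alpha(m_n)$ by $O(1/n)$ uniformly. Varadhan's lemma on the open set $\{|Q_\alpha-q|_\infty<\varepsilon\}$ gives the lower bound $-\inf_{|Q_\alpha(m)-q|_\infty<\varepsilon}$. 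On its closure it gives the upper bound $-\inf_{|Q_\alpha(m)-q|_\infty\le\varepsilon}$.

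Both window infima converge to $\Phi_\alpha(q)$ as $\varepsilon\downarrow0$. For the closed set this uses lower semicontinuity and compactness of $\mathcal A$. For the open set one needs continuity of $\Phi_\alpha$ at $q$, which I prove by perturbation. Take a near-minimizer $m$ with $Q_\alpha(m)=q$ and slope $|m'|\le1-\delta$; such an $m$ exists after a convex combination with a small-slope path, because $\ell$ is convex and finite on $[0,1)$ and $q$ lies strictly inside the admissible range ($|z_2|<1$, $|z_3|<1/2$). Then add a small smooth correction, for instance adding $a s+b\varphi(s)$ to the $x$-component, to move $Q_\alpha$ to any nearby $q'$. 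The change in $\lambda\mathcal E-\mathcal S$ is small by continuity of $\ell$ on $[0,1-\delta/2]$. Since the $(2\varepsilon)^{-d_\alpha}$ prefactor is subexponential, this yields the $\lim_{\varepsilon}\lim_{n}$ equality.

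\textbf{Step 2: the $\lim_n\lim_\varepsilon$ equality.} By definition $\lim_{\varepsilon\downarrow0}A_{\alpha,n,\varepsilon}=\mathcal Z_{\alpha,n}(q)$ by continuity of the density, so this expression is just $F_{\alpha,\infty}$. It must be compared with windowed quantities uniformly in $n$. I would use a convolution or splitting argument. Isolate a fixed number $k$ of bonds, say the first $k$ and, for $\alpha=23$, a few near the end and the middle, so that $\hat z_2$ and $\hat z_3$ can be shifted independently by $O(1/n)$ bond rotations.

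\emph{Lower bound.} The remaining-chain contribution has density-type bounds, and the $k$ free bonds have a smooth joint density of their contribution to $n\hat Q_\alpha$. This gives $\mathcal Z_{\alpha,n}(q)\ge e^{-Cn\varepsilon-o(n)}\,c_k\,n^{d_\alpha} W_{\alpha,n,\varepsilon'}(q)$ with $\varepsilon'=\varepsilon$ scaled by $k/n$. Rather than proving that directly, I would instead let the $k=\lfloor\varepsilon n\rfloor$ bonds absorb a displacement of size $\varepsilon$. These bonds carry entropy cost $O(\varepsilon)$ per bond-scaled unit and change the energy by $O(\lambda\varepsilon)$. Thus $\frac1n\log\mathcal Z_{\alpha,n}(q)\ge\frac1n\log W_{\alpha,n,\varepsilon}(q)-\omega(\varepsilon)-o(1)$ with $\omega(\varepsilon)\to0$.

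\emph{Upper bound.} The reverse direction is symmetric: $W_{\alpha,n,\varepsilon}\ge(2\varepsilon)^{d_\alpha}\inf_{|q'-q|<\varepsilon}\mathcal Z_{\alpha,n}(q')$. Paired with the same bond-reallocation comparison $\mathcal Z_{\alpha,n}(q')\approx\mathcal Z_{\alpha,n}(q)$ up to $e^{n\omega(\varepsilon)}$, this gives $\frac1n\log\mathcal Z_{\alpha,n}(q)\le\frac1n\log A_{\alpha,n,\varepsilon}+\omega(\varepsilon)+o(1)$. Letting $n\to\infty$ and then $\varepsilon\to0$, Step 1 yields existence of the limit and equality with $\Phi_\alpha(q)$.

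\textbf{Main obstacle.} The hard part is Step 2's uniform-in-$n$ near-translation invariance of the density $\mathcal Z_{\alpha,n}(\cdot)$ at exponential scale. I would attack it with a measure-preserving map that rotates a block of $\lfloor\varepsilon n\rfloor$ bond angles to shift $n\hat Q_\alpha$ by a prescribed amount of order $\varepsilon n$. Its Jacobian and its energy change must be controlled, with a bound of the form $e^{O(n\varepsilon)}$. For $\alpha=23$, two independent shifts must be produced while keeping the center-of-mass constraint, which is automatic since the model uses relative coordinates. I expect the Jacobian lower bound away from degenerate aligned configurations to need a truncation to bonds with angles bounded away from $0$ and $\pi$. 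That truncation costs only $o(n)$ in the exponent because $|z_2|<1$ and $|z_3|<1/2$ keep typical slopes strictly below $1$.
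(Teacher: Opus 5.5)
\textbf{Gap in Step 2.} Your architecture is the paper's: a window LDP, then a uniform-in-$n$ comparison of the density $\mathcal Z_{\alpha,n}$ with its window averages by rotating bonds, with degenerate configurations truncated. However, the step that carries the proof is only announced, and the mechanism you sketch for it does not close.

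The two-sided bound $\mathcal Z_{\alpha,n}(q')\le e^{n\omega(\varepsilon)}\mathcal Z_{\alpha,n}(q)$ requires transporting essentially all of the mass at $q'$. Whatever your map cannot handle must therefore be shown to be exponentially smaller than $\mathcal Z_{\alpha,n}(q)$. Since that remainder enters additively, ``costs only $o(n)$ in the exponent'' is the wrong kind of control, and your sketch has no pointwise lower bound on $\mathcal Z_{\alpha,n}(q)$ to compare it with.

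The stated reason is also wrong. The conditions $|z_2|<1$ and $|z_3|<1/2$ constrain the macroscopic slope $|m'|$, which is an average over many bonds. They say nothing about individual bonds lying near $0$ or $\pi$, which under $\nu_n$ happens with probability of order $r$ per bond for every $q$. With a block of $\lfloor\varepsilon n\rfloor$ bonds, each bond must change $\cos\theta_i$ by order one to move $n\hat Q_\alpha$ by order $\varepsilon n$. The event that the block is too aligned to be moved then has probability only $e^{-O(\varepsilon n)}$. That rate vanishes with $\varepsilon$ and cannot beat the crude weight bound $e^{H_n}\le e^{2\lambda n}$. Appealing instead to the typical behaviour of the measure conditioned on $\hat Q_\alpha=q$ is circular, since that is what the density comparison is meant to establish.

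You also assume that $q\mapsto\mathcal Z_{\alpha,n}(q)$ is continuous and positive. This is needed both for $\lim_{\varepsilon\downarrow0}A_{\alpha,n,\varepsilon}=\mathcal Z_{\alpha,n}(q)$ and for passing from box integrals to point values.

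\textbf{How the paper closes it.}
\begin{itemize}
\item It moves all $n$ angles at once along $\mathcal V_i=-\sin\theta_i\,w_i^{T}G^{-1}a$, with $G=\frac1n\sum_i\sin^2\theta_i\,w_iw_i^{T}$. This flow translates $\hat Q_\alpha$ at unit speed and automatically freezes aligned bonds. On $\{\det G>\kappa/2\}$ it satisfies $|\mathcal VH_n|+|\operatorname{div}\mathcal V|\le nL_\kappa$, so a shift by $|p-q|$ costs $e^{nL_\kappa|p-q|}$ with $L_\kappa$ independent of $\varepsilon$ (Lemma~\ref{supp:transport-lemma}).
\item Because $\det G$ involves the whole chain, $\{\det G\le2\kappa\}$ forces order-$n$ bonds to have small $\sin^2\theta_i$. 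Its contribution to the density is then at most $C_Mn^{d}e^{-Mn}$ for any prescribed $M$ (Lemma~\ref{supp:degenerate-density-lemma}).
\item $\kappa$ is fixed before $\varepsilon$. The reference path $m_{\rm ref}$, through the window limit, gives $\mathcal Z_{\alpha,n}(q)\ge\frac12e^{-n[J(q)+2]}$, which makes the additive remainder negligible.
\item Continuity of the density comes from three or six reserved bonds with well-separated $c_i$ and Bessel decay of the arcsine characteristic function (Lemma~\ref{supp:density-smoothing-lemma}).
\end{itemize}

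\textbf{A transport-free alternative.} The convolution idea you abandoned is closer to a workable route than the block transport. Conditioning on all but those reserved bonds and using $|\partial_{\theta_i}H_n|\le4\lambda$ already gives $\mathcal Z_{\alpha,n}(q)\le Cn^{d_\alpha}e^{C\lambda}W_{\alpha,n,\varepsilon}(q)$ once $n\varepsilon$ is large, with no transport. For the lower bound, one can plausibly convolve with a block restricted to, say, $|\cos\theta_i|\le1/2$, split near both chain ends when $\alpha=23$. This needs a local large-deviation lower bound $e^{-O(\varepsilon n)}$ on the block's density. The comparison should be with the event that the remaining bonds alone put $\hat Q_\alpha$ within $c\varepsilon$ of $q$, not with a full-chain window as in your displayed inequality. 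Varadhan's lower bound applies to that event, and your perturbation argument shows its rate tends to $\Phi_\alpha(q)$.

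\textbf{Minor point on Step 1.} The $\varepsilon\downarrow0$ limit does not need continuity of $\Phi_\alpha$, because the open-window infimum lies between the closed-window one and $\Phi_\alpha(q)$. What does need it is equality of the open and closed infima at fixed small $\varepsilon$, i.e.\ existence of the inner $\lim_n$. The paper obtains this by convex combination with $m_{\rm ref}$, and your upper-semicontinuity argument would also supply it.
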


\begin{steadytheorem}[The minimizing set is an interval]
\label{supp:minimizing-interval-theorem}
For every finite $\lambda\ge0$, there is $a_\lambda\ge0$ such that
\begin{equation}
 \operatorname*{argmin}_{c\in\mathbb R}
 F_{23,\infty}(0,c;\lambda)=[-a_\lambda,a_\lambda].
 \label{eq:supp:transition-exact-interval}
\end{equation}
The interval opens at the exact threshold $\lambda_2=4\pi^2$. That is,
 $
 a_\lambda=0\ (0\le\lambda\le4\pi^2),\ 
 a_\lambda>0\ (\lambda>4\pi^2)$.
\end{steadytheorem}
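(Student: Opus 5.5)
The plan is to reduce the two-dimensional variational problem for $F_{23,\infty}(0,c;\lambda)$ to a one-dimensional loop problem. In that problem the constraint $\bar m_x=-c/2$ can be moved freely by cyclic reparametrization, and the threshold $4\pi^2$ comes out of Wirtinger's inequality.

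\emph{Step 1 (existence and reduction).} I will first show that the infimum in the definition of $F_{23,\infty}$, and also the unconstrained infimum $G_\lambda:=\inf_c F_{23,\infty}(0,c;\lambda)$, are attained. The set $\mathcal A\cap\{m_x(1)=0\}$ is compact in the uniform topology by Arzelà--Ascoli. On it, $\mathcal E$ is continuous and $-\mathcal S$ is lower semicontinuous, since $\ell$ is convex and lower semicontinuous with $\ell(1)=+\infty$. The constraint on $\bar m_x$ is closed.

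Next I will eliminate the $y$-component. The function $\ell$ is increasing on $[0,1)$, so $|m'|\ge|m_x'|$ gives $\ell(|m'|)\ge\ell(|m_x'|)$. Moreover $\lambda\int(m_y-\bar m_y)^2\ge0$. Hence replacing $m_y$ by $0$ never increases $\lambda\mathcal E-\mathcal S$, and it leaves $m_x(1)$ and $\bar m_x$ unchanged. Therefore
\[
F_{23,\infty}(0,c;\lambda)=\inf\Bigl\{J_\lambda[u]:\ u(0)=u(1)=0,\ |u'|\le1,\ \bar u=-c/2\Bigr\},
\qquad
J_\lambda[u]=\int_0^1\ell(|u'|)\,\mathrm ds-\lambda\int_0^1(u-\bar u)^2\,\mathrm ds .
\]
The value is $+\infty$ for infeasible $c$. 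The symmetry $u\mapsto-u$ shows that the function $c\mapsto F_{23,\infty}(0,c;\lambda)$ is even.

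\emph{Step 2 (the argmin is a symmetric interval).} Because $u(0)=u(1)$, I can regard $u$ as a $1$-periodic function. For $t\in[0,1)$, set $u_t(s)=u(s+t)-u(t)$. This cyclic shift preserves the boundary conditions, the constraint $|u'|\le1$, $\int\ell(|u'|)$ and the variance, so $J_\lambda[u_t]=J_\lambda[u]$. Its mean is $\bar u-u(t)$.

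Now let $u^*$ be any global minimizer of $J_\lambda$, which exists by Step 1. The continuous function $t\mapsto\bar u^*-u^*(t)$ sweeps out the interval $[\bar u^*-\max u^*,\ \bar u^*-\min u^*]$, and this interval contains $0$. Translated to $c=-2\bar u$, every $c$ in the corresponding interval around $0$ is a minimizer of $F_{23,\infty}(0,\cdot;\lambda)$. Conversely, every minimizing $c$ arises from some global minimizer $u^*$.

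So the argmin is a union of intervals each containing $0$, and by evenness it is symmetric. It is also closed: the set of global minimizers is compact, and the map $u\mapsto$ (orbit interval) has endpoints depending continuously on $u$ in the uniform norm. Hence the argmin equals $[-a_\lambda,a_\lambda]$. In particular, $a_\lambda>0$ holds if and only if some global minimizer is nonconstant, i.e.\ not identically zero.

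\emph{Step 3 (threshold $4\pi^2$).} The key estimate is $\ln I_0(h)\le h^2/4$, which follows from the series $I_0(h)=\sum_k (h^2/4)^k/(k!)^2\le e^{h^2/4}$. Equality holds only at $h=0$. Taking Legendre transforms gives $\ell(v)\ge v^2$, and the inequality is strict for $v\ne0$: the supremum defining $v^2$ is attained at $h=2v\ne0$, where the strict inequality applies.

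For $\lambda\le4\pi^2$, split $J_\lambda[u]=\int(\ell(|u'|)-u'^2)+\bigl(\int u'^2-\lambda\int(u-\bar u)^2\bigr)$. Wirtinger's inequality for periodic functions gives $\int u'^2\ge4\pi^2\int(u-\bar u)^2$, so the second bracket is nonnegative. The first bracket is strictly positive unless $u'=0$ almost everywhere. Hence $u\equiv0$ is the unique minimizer and $a_\lambda=0$.

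For $\lambda>4\pi^2$, take $u_\epsilon=\epsilon\sin(2\pi s)$. Using $\ell(v)=v^2+O(v^4)$, which follows from $\ln I_0(h)=h^2/4-h^4/64+\dots$, I get $J_\lambda[u_\epsilon]=\epsilon^2(2\pi^2-\lambda/2)+O(\epsilon^4)<0=J_\lambda[0]$. Thus every global minimizer is nonconstant and $a_\lambda>0$.

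The main obstacle I expect is the closedness and exact characterization in Step 2: showing that the union of orbit intervals over all minimizers is exactly a closed interval, and that every minimizing $c$ is realized this way. I would handle this with compactness of the minimizer set together with the continuity of $\max u$, $\min u$ and $\bar u$ in the uniform norm. The analytic threshold argument itself is comparatively clean.
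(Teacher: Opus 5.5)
Your proposal is correct and follows essentially the same route as the paper. Both arguments project onto the scalar loop problem and use compactness, then obtain the interval by cyclic shifts of a minimizer, closed by compactness of the minimizer set and symmetric under $u\mapsto -u$; the threshold comes from $\ell(v)>v^2$, the periodic Wirtinger inequality, and the trial path $\epsilon\sin(2\pi s)$. The paper adds one check that you skip because you take the variational formula as the definition of $F_{23,\infty}$ for every $c$: for $|c|\ge 1/2$, the density-defined free energy is also $+\infty$, because $|\hat z_3|<1/2$ and because $\ell(\pm 1)=+\infty$ forces $|\bar u|<1/4$ for any finite-cost path.
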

The two-dimensional variational problem is able to be reduced to a scalar one.
Let $\bar u=\int_0^1u(s)\mathrm ds$, and take
\begin{equation}
\begin{aligned}
\mathcal A_1=\{u\in\mathrm{AC}[0,1]:u(0)=0,\ |u'|\le1\ \text{a.e.}\},\quad
\mathcal F^{(1)}_\lambda[u]
=\int_0^1\ell(u')\mathrm ds
-\lambda\int_0^1(u-\bar u)^2\mathrm ds.
\end{aligned}
\label{supp:scalar-functional}
\end{equation}
Since $\ell$ is even, convex, and nondecreasing on $[0,1]$ and $\lambda\ge0$, projecting $m=(u,v)$ onto $(u,0)$ preserves the constraints, drops the nonnegative transverse energy, and does not increase the entropy cost:
\begin{equation}
\label{supp:scalar-projection}
\begin{aligned}
\lambda\mathcal E[(u,v)]-\mathcal S[(u,v)]
\ge\lambda\mathcal E[(u,0)]-\mathcal S[(u,0)]
=\mathcal F^{(1)}_\lambda[u].
\end{aligned}
\end{equation}

Theorem~\ref{supp:limit-interchange} therefore yields the scalar variational formulas used in the proof of Theorem~\ref{supp:minimizing-interval-theorem} and in the numerical evaluation:
\begin{equation}
\begin{aligned}
F_{2,\infty}(z_2;\lambda)
=\inf_{\substack{u\in\mathcal A_1\\u(1)=z_2}}
\mathcal F^{(1)}_\lambda[u],\quad
F_{23,\infty}(0,z_3;\lambda)
=\inf_{\substack{u\in\mathcal A_1\\u(1)=0,\ \bar u=-z_3/2}}
\mathcal F^{(1)}_\lambda[u].
\end{aligned}
\label{supp:free-energy-limits}
\end{equation}

\subsection{Proof of the limit-interchange theorem}
\label{supp:steady-density}

\paragraph{Window limit.}
We use the path large-deviation principle and uniform energy approximation from the End Matter. With $\bar m_{n,x}=\int_0^1 m_{n,x}(s)\mathrm ds$ denoting the continuum average of the piecewise-linear path $m_n$, trapezoidal integration yields that the observables satisfy $\hat z_2=m_{n,x}(1)$ and
\begin{equation}
 \hat z_3=\frac n{n+1}\bigl[m_{n,x}(1)-2\bar m_{n,x}\bigr]
 =\frac1n\sum_{i=1}^n\frac{2i-n-1}{n+1}\cos\theta_i.
 \label{supp:continuum-identities}
\end{equation}
Thus $\hat Q_{23}=Q_{23}(m_n)+O(n^{-1})$ uniformly, so windows of widths $\varepsilon-\eta$ and $\varepsilon+\eta$ give the constrained Laplace bounds for any fixed $0<\eta<\varepsilon$.

For $q$ as in Theorem~\ref{supp:limit-interchange}, define
\[
 m_{\rm ref}(s)=\begin{cases}
  (z_2s,0),&\alpha=2,\quad q=z_2,\\
  (-2z_3\min(s,1-s),0),&\alpha=23,\quad q=(0,z_3).
 \end{cases}
\]
Each path lies in $\mathcal A$ and satisfies $Q_\alpha(m_{\rm ref})=q$. Its $y$-component is zero, so $\mathcal E[m_{\rm ref}]\le0$. The derivative is
\[
 m_{\rm ref}'(s)=\begin{cases}
  (z_2,0),&\alpha=2,\\
  (-2z_3,0),&\alpha=23,\quad 0<s<1/2,\\
  (2z_3,0),&\alpha=23,\quad 1/2<s<1.
 \end{cases}
\]
Substituting into the entropy integral and using $\ell(-a)=\ell(a)$ gives
\[
 J(q):=-\mathcal S[m_{\rm ref}]
 =\int_0^1\ell(|m_{\rm ref}'(s)|)\,\mathrm ds
 =\begin{cases}
  \ell(z_2),&\alpha=2,\\
  \ell(2z_3),&\alpha=23.
 \end{cases}
\]
These entropy costs are finite because $|z_2|<1$ for $\alpha=2$ and $2|z_3|<1$ for $\alpha=23$.

For any finite-entropy path $m$ in the closed window $|Q_\alpha(m)-q|_\infty\le\varepsilon$, linearity of $Q_\alpha$ and $Q_\alpha(m_{\rm ref})=q$ imply that $(1-\delta)m+\delta m_{\rm ref}$ lies in the open window $|Q_\alpha(\cdot)-q|_\infty<\varepsilon$ for every $\delta\in(0,1]$. Convexity of $-\mathcal S$ and continuity of $\mathcal E$ ensure that its free energy approaches that of $m$ as $\delta\downarrow0$. The two window infima therefore agree. Letting $\eta\downarrow0$, then $\varepsilon\downarrow0$, compactness of $\mathcal A$ and lower semicontinuity of $\lambda\mathcal E-\mathcal S$ give
\begin{equation}
\begin{aligned}
 S_{\alpha,\varepsilon}(q)
 &:=\lim_{n\to\infty}\frac1n\log A_{\alpha,n,\varepsilon}(q;\lambda)
 =-\inf_{|Q_\alpha(m)-q|_\infty<\varepsilon}\{\lambda\mathcal E[m]-\mathcal S[m]\},\\
 S_{\alpha,\varepsilon}(q)&\to
 S_\alpha(q):=-\inf_{Q_\alpha(m)=q}\{\lambda\mathcal E[m]-\mathcal S[m]\}
 \quad(\varepsilon\downarrow0).
\end{aligned}
\label{supp:fixed-window-limit}
\end{equation}
All infima are over $\mathcal A$; the fixed volume factor $(2\varepsilon)^{d_\alpha}$ does not affect the exponential rate.

\paragraph{Continuous densities.}
Under $\nu_n=\prod_i\mathrm d\theta_i/(2\pi)$, set $H_n=-U_{n,\lambda}$ and $d=d_\alpha$. Write
\begin{equation}
\begin{aligned}
 T_n=\frac1n\sum_{i=1}^nw_i\cos\theta_i=\hat Q_\alpha,\qquad
 G=\frac1n\sum_{i=1}^n\sin^2\theta_iw_iw_i^T,\\
 w_i=1\ (d=1),\qquad
 w_i=(1,c_i)^T,\quad c_i=\frac{2i-n-1}{n+1}\ (d=2).
\end{aligned}
\label{supp:density-notation}
\end{equation}
The matrix $G$ measures the sensitivity of the observables to bond rotations. For a continuous periodic weight $\Omega$, write
\[
 f_\Omega(q)=\int \Omega(\theta)\,\delta(T_n(\theta)-q)\,\mathrm d\nu_n(\theta),
 \qquad f_n=f_{e^{H_n}}.
\]
In ordinary integral terms, $\int_D f_\Omega(q)\,\mathrm dq$ is the total weight of configurations with $T_n(\theta)\in D$, for any box $D$ in observable space. We write $\mathbb P(B)=\int\mathbf1_B\,\mathrm d\nu_n$ for the probability of an angular event $B$.

\begin{steadylemma}[Smoothing by a few bonds]
\label{supp:density-smoothing-lemma}
For all sufficiently large $n$, $q\mapsto f_\Omega(q)$ is continuous for every continuous periodic angular weight $\Omega(\theta)$. Moreover, there is a choice of $J_n$ containing three indices for $d=1$, or six for $d=2$, such that, with $\theta_{\rm out}:=(\theta_i)_{i\notin J_n}$,
\[
 0\le\Omega(\theta)\le K\mathbf1_B(\theta_{\rm out})
 \quad\Longrightarrow\quad
 0\le f_\Omega(q)\le KCn^d\mathbb P(B)
 \quad(q\in\mathbb R^d).
\]
Here $K>0$, the event $B$ depends only on $\theta_{\rm out}$, and $C$ is independent of $n,q,\Omega,K,B$.
\end{steadylemma}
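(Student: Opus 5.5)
The plan is to condition on all but a few bond angles. The few retained bonds carry an explicitly computable, bounded, continuous density, and both claims of Lemma~\ref{supp:density-smoothing-lemma} are then inherited from it after averaging over the remaining angles. Write $T_n=Y+\frac1n X$, where $Y=\frac1n\sum_{i\notin J_n}w_i\cos\theta_i$ depends only on $\theta_{\rm out}$ and $X=\sum_{i\in J_n}w_i\cos\theta_i$ depends only on $\theta_J:=(\theta_i)_{i\in J_n}$.

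\emph{Choice of $J_n$.} For $d=1$, take any three indices. For $d=2$, take three consecutive indices near $i\approx n/4$, so that $c_i\approx-1/2$, and three near $i\approx 3n/4$, so that $c_i\approx 1/2$. Then the six vectors $w_i=(1,c_i)$ form two clusters, with directions close to $(1,\mp1/2)$, and these converge as $n\to\infty$. Under $\nu_n$ the retained angles are independent and uniform. Let $g_n$ be the density of $X$ on $\mathbb R^d$. Its characteristic function is
\begin{equation*}
\widehat g_n(t)=\prod_{i\in J_n}J_0(t\cdot w_i),
\end{equation*}
and I will use the Bessel bound $|J_0(r)|\le C(1+|r|)^{-1/2}$.

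\emph{Main step: $\widehat g_n\in L^1$ uniformly in $n$.} I expect this to be the main obstacle. For $d=1$ the bound gives decay $|t|^{-3/2}$, which is integrable. For $d=2$, split $\mathbb R^2$ according to the angle of $t$.
\begin{itemize}
\item Away from the two lines orthogonal to the clusters, all six factors are of order $|t|^{-1/2}$, so $|\widehat g_n(t)|\lesssim|t|^{-3}$, which is integrable in the plane.
\item In a strip around the line orthogonal to one cluster, introduce a longitudinal coordinate $r$ and a transverse coordinate $s$. The other cluster still contributes $|r|^{-3/2}$. The near-orthogonal cluster contributes $\prod_i(1+|s\,a_i+r\,\epsilon_i|)^{-1/2}$, with $a_i$ of order one and $\epsilon_i=O(1/n)$ because consecutive $c_i$ differ by $O(1/n)$.
\end{itemize}
The difficulty in the strip is that the small tilts $\epsilon_i$ move the region where each factor fails to decay, so the three weak factors need not vanish at the same $s$. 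I will bound each of the three factors below by its worst case, which comes out of the order of $(1+|s+O(r/n)|)^{-3/2}$. Integrating in $s$ first then gives a constant independent of $r$ and $n$, and the remaining $\int|r|^{-3/2}$ converges at infinity. Near the origin the integrand is bounded by one, so there is no problem there.

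Consequently $\sup_n\|\widehat g_n\|_{L^1}<\infty$. By Fourier inversion, each $g_n$ is bounded and continuous, with $\|g_n\|_\infty\le C_0$ for all large $n$.

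\emph{The bound.} Conditionally on $\theta_{\rm out}$, the variable $T_n$ has density $n^d g_n\big(n(q-Y)\big)\le C_0 n^d$. If $0\le\Omega\le K\mathbf 1_B(\theta_{\rm out})$, then integrating first over $\theta_J$ and then over $\theta_{\rm out}$ gives
\begin{equation*}
f_\Omega(q)\le K\,\mathbb E\big[\mathbf 1_B\, n^d g_n(n(q-Y))\big]\le KC_0n^d\,\mathbb P(B).
\end{equation*}
The same argument applied with $B$ equal to the whole space shows $|f_\Omega(q)|\le C_0n^d\|\Omega\|_\infty$ for every bounded $\Omega$, using linearity in $\Omega$.

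\emph{Continuity.} By Stone--Weierstrass, approximate $\Omega$ uniformly by trigonometric polynomials in $\theta_J$ whose coefficients are continuous functions of $\theta_{\rm out}$. By the $L^\infty$ bound just stated, the approximation error in $f_\Omega$ is at most $C_0n^d\|\Omega-P\|_\infty$ uniformly in $q$. So it suffices to prove continuity for monomials $e^{i\sum_{j\in J_n}k_j\theta_j}$.

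For such a monomial the $\theta_J$-integral factorizes over the retained bonds. Each factor is a complex weighted law of $\cos\theta_j$, namely $T_{k_j}(x)$ times the arcsine density after symmetrization, since the $\sin(k\theta)$ parts cancel. Its Fourier transform obeys the same $(1+|r|)^{-1/2}$ decay, with constants depending polynomially on $k_j$. Therefore the conditional density is a continuous function of $(q,Y)$ and bounded by a constant depending only on $k$ and $n$. Dominated convergence in $\theta_{\rm out}$ then gives continuity in $q$. Uniform limits of continuous functions are continuous, which proves the first claim.
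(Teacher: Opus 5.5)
Your argument is correct. It shares the paper's architecture: freeze all but a few bonds, use $|J_0(r)|\lesssim(1+|r|)^{-1/2}$ to make the characteristic function of the retained projections integrable, read off a conditional density bounded by $Cn^d$, average over $\theta_{\rm out}$, and obtain continuity by approximating the weight. Two steps are done differently.

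\emph{Choice of $J_n$.} For $d=2$ the paper takes the six indices nearest $nj/7$, so the $c_i$ are pairwise separated by a fixed $\delta_c>0$. No direction can then be nearly orthogonal to two of the $w_i$, at least five factors decay, and $|\widehat g_n(t)|\le C(1+|t|)^{-5/2}$ follows with no case analysis. Your two clusters of consecutive indices are what create the strip difficulty you flagged as the main obstacle. Your strip estimate does close, but state it as $\prod_i f_i\le\max_i f_i^3\le\sum_i f_i^3$ with $f_i=(1+|sa_i+r\epsilon_i|)^{-1/2}$; each term integrates in $s$ to $O(1/a_i)$ uniformly in the shift. A single shifted profile $(1+|s-s_*|)^{-3/2}$ does not dominate the product once $|r|\gg n$, because the zeros $-r\epsilon_i/a_i$ are then roughly $|r|/n$ apart.

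\emph{Continuity.} The paper sign-averages $\Omega$ into a function of $(\cos\theta_i)$ and approximates by algebraic polynomials, using polynomial-amplitude Bessel decay. You instead use trigonometric polynomials in $\theta_J$ with continuous coefficients in $\theta_{\rm out}$. The cancellation of the $\sin(k\theta)$ parts does the symmetrization for you, and the factors become higher-order Bessel functions $J_{|k_j|}$. The two routes are equivalent through $\cos k\theta=T_k(\cos\theta)$, and yours approximates only in the retained angles.

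In both routes, the bounds $|f_{\Omega-P}|\le C_0n^d\|\Omega-P\|_\infty$ and $f_\Omega\le KC_0n^d\,\mathbb P(B)$ hold a priori only almost everywhere, as density statements. You should therefore identify the uniform limit of the continuous $f_{P_m}$ with $f_\Omega$, and upgrade the bound to every $q$, by passing to the limit in box integrals as the paper does.
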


\begin{proof}
Choose $J_n$ as any three indices for $d=1$, or the six indices nearest $nj/7$, $j=1,\ldots,6$, for $d=2$. With $\theta_{\rm out}$ fixed, write
\[
 q_{\rm out}:=\frac1n\sum_{i\notin J_n}w_i\cos\theta_i,
 \qquad T_n=q_{\rm out}+\frac1n\sum_{i\in J_n}w_i\cos\theta_i.
\]
Let only $\theta_i$, $i\in J_n$, be random, independently and uniformly distributed on $[0,2\pi)$. Then $T_n$ has probability density
\[
 g_{J_n}(q\mid\theta_{\rm out})
 :=\int_{[0,2\pi)^{|J_n|}}
 \delta\!\left(q_{\rm out}+\frac1n\sum_{i\in J_n}w_i\cos\theta_i-q\right)
 \prod_{i\in J_n}\frac{\mathrm d\theta_i}{2\pi}.
\]
A single projection $X_i=\cos\theta_i$ has the arcsine law and its Fourier transform $\varphi$ is the Bessel function of the first kind of order zero. The Bessel asymptotics give $|\varphi(t)|\le C(1+|t|)^{-1/2}$; the same decay holds with any polynomial amplitude $P(\cos\theta)$, by expansion in $\cos(k\theta)$ and the fixed-order Bessel asymptotics~\cite{abramowitz1964handbook}.
 
Let $Y:=\sum_{i\in J_n}w_iX_i$. Independence gives its Fourier transform explicitly as
\[
 \mathbb E[e^{\mathrm is\cdot Y}]
 =\prod_{i\in J_n}\varphi(s\cdot w_i).
\]
For $d=1$, this is $\varphi(s)^3$. For $d=2$, the six selected coefficients satisfy $|c_i-c_j|\ge\delta_c>0$ for $i\ne j$, with $\delta_c$ independent of $n$ for large $n$. At most one index can satisfy $|s_1+c_is_2|<b|s|$ for a sufficiently small fixed $b>0$. Indeed, if two indices $i\ne j$ satisfied this inequality, subtracting the two expressions and using $|c_i|\le1$ would imply
\[
\begin{aligned}
 |s_2|&<\frac{2b}{\delta_c}|s|,\quad
 |s_1|&\le |s_1+c_is_2|+|c_i||s_2|
 <\left(1+\frac2{\delta_c}\right)b|s|,
\end{aligned}
\]
which contradicts $|s|^2=s_1^2+s_2^2$. Thus at least five factors satisfy
\[
 |\varphi(s_1+c_is_2)|\le C(1+|s|)^{-1/2},
\]
while any remaining factor is bounded by one. Consequently,
\[
 \left|\mathbb E[e^{\mathrm is\cdot Y}]\right|
 \le C\begin{cases}
 (1+|s|)^{-3/2},&d=1,\\
 (1+|s|)^{-5/2},&d=2.
 \end{cases}
\]
The first bound is integrable on $\mathbb R$; the second is integrable on $\mathbb R^2$.
Fourier inversion therefore gives $Y$ a continuous bounded density. Rescaling by $1/n$ and shifting by $q_{\rm out}$ shows that $g_{J_n}$ is continuous and
\[
 0\le g_{J_n}(q\mid\theta_{\rm out})\le Cn^d
 \qquad\text{uniformly in }q,\theta_{\rm out}.
\]

For $X=(X_i)\in[-1,1]^n$, define the sign-averaged weight
\[
 \overline\Omega(X)
 :=2^{-n}\!\sum_{\varepsilon\in\{-1,1\}^n}
 \Omega(\varepsilon_1\arccos X_1,\ldots,
        \varepsilon_n\arccos X_n),
\]
where the angles are interpreted periodically. This is continuous on $[-1,1]^n$. Since the changes of variables $\theta_i\mapsto-\theta_i$ preserve the uniform angular average and $T_n$, for every box $D$, we have
\[
 \int_{\{T_n\in D\}}\Omega(\theta)\,\mathrm d\nu_n
 =\int_{\{T_n\in D\}}\overline\Omega(X(\theta))\,\mathrm d\nu_n.
\]

Choose polynomials $P_m$ converging uniformly to $\overline\Omega$ on $[-1,1]^n$, and set $\Omega_m(\theta):=P_m(X(\theta))$. Each $P_m$ is a finite linear combination of products $\prod_i p_i(X_i)$, so it suffices to prove continuity for one such product. For each one-variable polynomial $p$, write $\varphi_p(t):=\frac1{2\pi}\int_0^{2\pi}
 p(\cos\theta)e^{\mathrm it\cos\theta}\,\mathrm d\theta$.
The polynomial-amplitude estimate above gives $|\varphi_p(t)|\le C_p(1+|t|)^{-1/2}$. For a product $P(X)=\prod_i p_i(X_i)$, set $\Omega_P(\theta):=P(X(\theta))$ and define its weighted density with $\theta_{\rm out}$ fixed by
\[
 h_P(q\mid\theta_{\rm out})
 :=\int_{[0,2\pi)^{|J_n|}}\Omega_P(\theta)
       \delta(T_n(\theta)-q)
       \prod_{i\in J_n}\frac{\mathrm d\theta_i}{2\pi}.
\]
Its Fourier transform in $q$ can be expressed as
\[
\begin{aligned}
 \widehat h_P(s\mid\theta_{\rm out})
 &:=\int_{\mathbb R^d}e^{\mathrm is\cdot q}
                 h_P(q\mid\theta_{\rm out})\,\mathrm dq =e^{\mathrm is\cdot q_{\rm out}}
   \left[\prod_{i\notin J_n}p_i(\cos\theta_i)\right]
   \prod_{i\in J_n}\varphi_{p_i}\!\left(\frac{s\cdot w_i}{n}\right).
\end{aligned}
\] 
The bracketed product is uniformly bounded because every $p_i$ is bounded on $[-1,1]$. Using the same three or six reserved factors as above therefore yields
\[
 |\widehat h_P(s\mid\theta_{\rm out})|
 \le M_{P,n}(s):=C_{P,n}
 \begin{cases}
 (1+|s|)^{-3/2},&d=1,\\
 (1+|s|)^{-5/2},&d=2.
 \end{cases}
\]
$M_{P,n}$ is integrable, and its Fourier inversion implies $h_P$ continuous in $q$, and the continuity estimate
\[
\begin{aligned}
 &|h_P(q+u\mid\theta_{\rm out})-h_P(q\mid\theta_{\rm out})| \le\frac1{(2\pi)^d}\int_{\mathbb R^d}
       |e^{-\mathrm is\cdot u}-1|M_{P,n}(s)\,\mathrm ds
       \longrightarrow0\qquad(u\to0).
\end{aligned}
\]
Since the right-hand side is independent of $\theta_{\rm out}$, it also bounds the change in the full density
\[
 f_{\Omega_P}(q)=\int h_P(q\mid\theta_{\rm out})
                \prod_{i\notin J_n}\frac{\mathrm d\theta_i}{2\pi}.
\]
Hence $f_{\Omega_P}$ is continuous. By linearity in the weight, the same holds for each $f_{\Omega_m}$.

The bound on $g_{J_n}$ also gives the unweighted density bound
\[
 f_1(q)=\int g_{J_n}(q\mid\theta_{\rm out})
             \prod_{i\notin J_n}\frac{\mathrm d\theta_i}{2\pi}
 \le Cn^d.
\]
For any box $D$, the defining integral identities for $f_{\Omega_m}$ and $f_{\Omega_k}$ give
\[
\begin{aligned}
 \left|\int_D(f_{\Omega_m}-f_{\Omega_k})(p)\,\mathrm dp\right|
 &=\left|\int_{\{T_n\in D\}}
       [P_m(X(\theta))-P_k(X(\theta))]\,\mathrm d\nu_n\right| \le\|P_m-P_k\|_\infty\int_D f_1(p)\,\mathrm dp.
\end{aligned}
\]
Take $D=D_r(q):=q+(-r,r)^d$, divide by its volume $(2r)^d$, and let $r\downarrow0$. Since $f_{\Omega_m}$, $f_{\Omega_k}$, and $f_1$ are continuous, their averages over $D_r(q)$ converge to their values at $q$. Hence
\[
\begin{aligned}
 |f_{\Omega_m}(q)-f_{\Omega_k}(q)|
 &\le\|P_m-P_k\|_\infty f_1(q)
 \le Cn^d\|P_m-P_k\|_\infty.
\end{aligned}
\]
At fixed $n$, uniform convergence of $P_m$ makes the right-hand side tend to zero as $m,k\to\infty$, uniformly in $q$. Thus the densities have a uniform limit, which we denote by $F(q):=\lim_{m\to\infty}f_{\Omega_m}(q)$.
As a uniform limit of continuous functions, $F$ is continuous. 
For each $m$ and each fixed bounded box $D$,
\[
 \int_D f_{\Omega_m}(q)\,\mathrm dq
 =\int_{\{T_n\in D\}}P_m(X(\theta))\,\mathrm d\nu_n.
\]
Both sides have limits as $m\to\infty$. Passing to the limit in the preceding identity, and then using the sign-averaging identity, we obtain 
\[
\begin{aligned}
 \int_D F(q)\,\mathrm dq
 &=\int_{\{T_n\in D\}}\overline\Omega(X(\theta))\,\mathrm d\nu_n =\int_{\{T_n\in D\}}\Omega(\theta)\,\mathrm d\nu_n.
\end{aligned}
\]
Thus $F$ is the continuous density $f_\Omega$.

Finally, assume $0\le\Omega(\theta)\le K\mathbf1_B(\theta_{\rm out})$.Averaging first over the angles in $J_n$ gives, we obtain for any box $D$,
\[
\begin{aligned}
 \int_D f_\Omega(q)\,\mathrm dq
 &\le K\int\mathbf1_B(\theta_{\rm out})
       \left[\int_D g_{J_n}(q\mid\theta_{\rm out})\,\mathrm dq\right]
       \prod_{i\notin J_n}\frac{\mathrm d\theta_i}{2\pi} \le KCn^d\mathbb P(B)|D|.
\end{aligned}
\]
Taking $D=D_r(q)$, dividing by $(2r)^d$, and letting $r\downarrow0$ concludes that
\[
 0\le f_\Omega(q)\le KCn^d\mathbb P(B).
\]
\end{proof}

\paragraph{Negligible degenerate configurations.}
We first show that configurations with small $\det G$ contribute an arbitrarily small exponential remainder to the density. Choose a smooth function $\chi:[0,\infty)\to[0,1]$ that is zero on $[0,1]$ and one on $[2,\infty)$, and set
\[
 \chi_\kappa(\theta)=\chi\!\left(\frac{\det G(\theta)}{\kappa}\right),
 \qquad b_{n,\kappa}=f_{(1-\chi_\kappa)e^{H_n}}.
\]

\begin{steadylemma}[An arbitrarily small exponential remainder]
\label{supp:degenerate-density-lemma}
For every $M>0$, there exist $\kappa\in(0,1/2]$ and $C_M>0$ such that, for all sufficiently large $n$ and all $q$,
\begin{equation}
 0\le b_{n,\kappa}(q)\le B_n:=C_Mn^de^{-Mn}.
 \label{supp:degenerate-density-bound}
\end{equation}
\end{steadylemma}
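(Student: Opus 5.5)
We reduce the claim to a large-deviation estimate for the angular event $\{\det G\le 2\kappa\}$ and then apply Lemma~\ref{supp:density-smoothing-lemma}. Since $0\le 1-\chi_\kappa\le \mathbf 1_{\{\det G\le 2\kappa\}}$ and $H_n=-U_{n,\lambda}$ satisfies $|H_n|\le \lambda n\,\sup|\mathcal E|+o(n)\le c_\lambda n$ uniformly (the paths are $1$-Lipschitz, so $\mathcal E$ is bounded on $\mathcal A$), the weight satisfies $0\le (1-\chi_\kappa)e^{H_n}\le e^{c_\lambda n}\mathbf 1_{\{\det G\le2\kappa\}}$. The lemma needs the event to depend only on $\theta_{\rm out}$. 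We therefore replace $G$ by $G_{\rm out}=n^{-1}\sum_{i\notin J_n}\sin^2\theta_i w_iw_i^T$. Because $|J_n|\le 6$ and $\|w_i\|\le\sqrt2$, we have $\|G-G_{\rm out}\|\le C/n$. Hence for large $n$, $\det G\le2\kappa$ implies $\det G_{\rm out}\le 3\kappa$. Setting $B=\{\det G_{\rm out}\le3\kappa\}$, Lemma~\ref{supp:density-smoothing-lemma} gives $b_{n,\kappa}(q)\le C n^d e^{c_\lambda n}\,\mathbb P(B)$. It then suffices to show that for each $M'$ there is $\kappa$ with $\mathbb P(B)\le C e^{-M'n}$, and to take $M'=M+c_\lambda$.

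For $d=1$, $G_{\rm out}=n^{-1}\sum_{i\notin J_n}\sin^2\theta_i$ is an empirical mean of i.i.d.\ bounded variables with mean $1/2$. Cramér's theorem, or a direct Chernoff bound, gives $\mathbb P(G_{\rm out}\le 3\kappa)\le \exp(-n\,I(3\kappa)+O(1))$. The rate $I(x)=\sup_{t\ge0}\{-tx-\ln\langle e^{-t\sin^2\theta}\rangle\}$ tends to $+\infty$ as $x\downarrow0$, because $\sin^2\theta$ has no atom at $0$: $\langle e^{-t\sin^2\theta}\rangle\sim C t^{-1/2}\to0$. Choosing $\kappa$ small makes $I(3\kappa)\ge M'$.

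For $d=2$, the weights $w_i=(1,c_i)$ vary with $i$. We bound the determinant below via the quadratic form: $\det G_{\rm out}\ge \lambda_{\min}^2$ up to constants, since $\lambda_{\max}\le 2$. On the unit circle of $e\in\mathbb R^2$, $e^TG_{\rm out}e=n^{-1}\sum \sin^2\theta_i (e\cdot w_i)^2$. We cover the circle by a finite $\epsilon$-net $\{e_k\}$ whose size is independent of $n$, and use the Lipschitz dependence on $e$. For fixed $e$, the set of $i$ with $(e\cdot w_i)^2\ge\beta$ has cardinality at least $\gamma n$ with $\beta,\gamma>0$ uniform in $e$, because $c_i$ is equidistributed in $[-1,1]$ and $e\cdot w_i=e_1+e_2c_i$ vanishes for at most one value of $c$. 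On those indices, $e^TG_{\rm out}e\ge \beta n^{-1}\sum_{i\in I_e}\sin^2\theta_i$. The same Chernoff bound as in the case $d=1$, at speed $\gamma n$, then gives $\mathbb P(e_k^TG_{\rm out}e_k\le \sqrt{\kappa'})\le e^{-\gamma n I(\sqrt{\kappa'}/\beta)}$. A union bound over the net finishes the estimate, with $\kappa$ chosen so that $\gamma I(\cdot)\ge M'$.

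The main obstacle is this uniform-in-direction lower bound in $d=2$. The net and the Lipschitz control must be arranged so that the $\epsilon$-approximation error stays below the small threshold. The cleanest route is to pick $\epsilon$ comparable to $\sqrt\kappa$, since the net size $O(\epsilon^{-1})$ enters only polynomially and not exponentially. The remaining steps are routine. They are the replacement of $G$ by $G_{\rm out}$, the uniform bound on $H_n$, and the observation that $C_M$ absorbs all constants independent of $q$.
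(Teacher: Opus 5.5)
Your proposal is correct and follows essentially the paper's route. You bound the weight by $e^{O(n)}\mathbf 1_B(\theta_{\rm out})$, invoke Lemma~\ref{supp:density-smoothing-lemma}, reduce $\{\det G\le 2\kappa\}$ to a small $\lambda_{\min}(G_{\rm out})$, and control that event with an $n$-independent net of directions, index sets of size $\gtrsim n$ on which $|a\cdot w_i|^2$ is bounded below, and a Chernoff bound driven by $\langle e^{-t\sin^2\theta}\rangle\to0$. The differences are cosmetic: the paper uses the Loewner order $0\le G_{\rm out}\le G$ instead of your $O(1/n)$ perturbation of the determinant, treats $d=1$ with the two-point net $\{\pm1\}$, and gets $|I_a|\ge n/48$ from the Gram bound $K_{\rm out}\ge I_d/12$ rather than from equidistribution of the $c_i$.
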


\begin{proof}
We need an upper bound on $H_n$ to control the Boltzmann weight. We also record its angular derivative bound for the transport argument below. The centered bead positions satisfy
\[
 r_k=\sum_{i=1}^n b_{ki}(\cos\theta_i,\sin\theta_i),
 \qquad b_{ki}=\mathbf1_{\{i\le k\}}-\frac{n+1-i}{n+1},
 \qquad k=0,\ldots,n.
\]
Since $|b_{ki}|\le1$, we have $|r_k|\le n$ and $|\partial_{\theta_i}r_k|\le1$. Thus
\[
\begin{aligned}
 H_n&=\frac{\lambda}{n^2}\sum_{k=0}^n(r_{k,x}^2-r_{k,y}^2)
 \le\lambda(n+1),\quad
 |\partial_{\theta_i}H_n|
 \le\frac{2\lambda}{n^2}\sum_{k=0}^n
       |r_k|\,|\partial_{\theta_i}r_k|
 \le\frac{2\lambda(n+1)}n.
\end{aligned}
\]
We then have 
\begin{equation}
 H_n\le2\lambda n,\qquad |\partial_{\theta_i}H_n|\le4\lambda.
 \label{supp:angular-energy-bounds}
\end{equation}

We first construct an event depending only on $\theta_{\rm out}$ so that Lemma~\ref{supp:density-smoothing-lemma} applies to the weight $(1-\chi_\kappa)e^{H_n}$. Define
\[
 G_{\rm out}=\frac1n\sum_{i\notin J_n}\sin^2\theta_iw_iw_i^T.
\]
Here matrix norms are operator norms. The definitions give
\begin{equation}
 0\le G_{\rm out}\le G,\qquad
 \|G_{\rm out}\|\le\operatorname{tr}G\le2.
 \label{supp:outside-matrix-bounds}
\end{equation}
For any $\kappa\in(0,1/2]$, set $\eta=\sqrt{2\kappa}$. Then $\det G\le2\kappa$ yields $\lambda_{\min}(G)\le\eta$. Since $G_{\rm out}\le G$ by Eq.~\eqref{supp:outside-matrix-bounds}, we have
\begin{equation}
 \{\det G\le2\kappa\}\subset\mathcal B_\eta,
 \qquad
 \mathcal B_\eta:=\{\lambda_{\min}(G_{\rm out})\le\eta\}.
 \label{supp:degenerate-event-inclusion}
\end{equation}
Since $1-\chi_\kappa$ vanishes for $\det G\ge2\kappa$, combining estimates~\eqref{supp:degenerate-event-inclusion} and~\eqref{supp:angular-energy-bounds} we obtain
\[
 0\le(1-\chi_\kappa)e^{H_n}
 \le e^{2\lambda n}\mathbf1_{\mathcal B_\eta}(\theta_{\rm out}).
\]
The event $\mathcal B_\eta$ depends only on $\theta_{\rm out}$. 
Using Lemma~\ref{supp:density-smoothing-lemma}, with taking $K=e^{2\lambda n}$ and $B=\mathcal B_\eta$, we obtain
\begin{equation}
 b_{n,\kappa}(q)\le Cn^de^{2\lambda n}\mathbb P(\mathcal B_\eta).
 \label{supp:degenerate-density-reduction}
\end{equation}
It remains to make this probability decay fast enough to compensate for $e^{2\lambda n}$.

To estimate $\mathbb P(\mathcal B_\eta)$, we will show that a small eigenvalue forces the sine squares of many independent angles to have a small sum. We first ensure that every direction has enough indices with a fixed lower bound on $|a\cdot w_i|^2$. For this purpose, define the deterministic matrix
\[
 K_{\rm out}=\frac1n\sum_{i\notin J_n}w_iw_i^T.
\]
Direct summation of the coefficients $w_i$ gives
\[
 \frac1n\sum_{i=1}^nw_iw_i^T
 =\begin{cases}
  1,&d=1,\\
  \operatorname{diag}\!\left(1,\dfrac{n-1}{3(n+1)}\right),&d=2.
 \end{cases}
\]
Deleting $|J_n|\le6$ terms changes this matrix by at most $12/n$ since $|w_i|^2\le2$. Therefore, for sufficiently large $n$, 
\begin{equation}
 K_{\rm out}\ge I_d/12.
 \label{supp:outside-gram-lower-bound}
\end{equation}
For any unit vector $a$, we set
\begin{equation}
 I_a=\{i\notin J_n:|a\cdot w_i|^2\ge1/24\}.
 \label{supp:direction-index-set}
\end{equation}
Using Eq.~\eqref{supp:outside-gram-lower-bound}, the definition in Eq.~\eqref{supp:direction-index-set}, and $|a\cdot w_i|^2\le2$, we obtain
\begin{equation}
 \frac n{12}\le\sum_{i\notin J_n}|a\cdot w_i|^2
 \le\frac n{24}+2|I_a|,
 \qquad |I_a|\ge\frac n{48}.
 \label{supp:direction-index-count}
\end{equation}

A unit eigenvector corresponding to $\lambda_{\min}(G_{\rm out})$ may depend on the angles. Using this vector as $a$ in Eq.~\eqref{supp:direction-index-set} could therefore make the index set depend on the angles. Instead, we first choose a finite set $\mathcal N_\eta$ of unit vectors, independently of the angles and of $n$, such that
\[
 \text{for every }v\in\mathbb R^d\text{ with }|v|=1,
 \quad \text{there is }a\in\mathcal N_\eta
 \text{ with }|a-v|\le\eta/4.
\]
For $d=1$, take $\mathcal N_\eta=\{-1,1\}$; for $d=2$, take sufficiently many equally spaced points on the unit circle. For each fixed $a\in\mathcal N_\eta$, the set $I_a$ in Eq.~\eqref{supp:direction-index-set} depends only on $a$ and the given coefficients $w_i$, and hence does not depend on the angles.

Now consider any angle configuration in $\mathcal B_\eta$. Choose a unit eigenvector $v$ for the smallest eigenvalue of $G_{\rm out}$, and then choose $a\in\mathcal N_\eta$ with $|a-v|\le\eta/4$. Using Eq.~\eqref{supp:outside-matrix-bounds}, we get
\[
 a^TG_{\rm out}a
 \le v^TG_{\rm out}v+2\|G_{\rm out}\|\,|a-v|
 \le\eta+4\frac\eta4=2\eta.
\]
For this $a$, the definition in Eq.~\eqref{supp:direction-index-set} implies
\[
 \frac1{24n}\sum_{i\in I_a}\sin^2\theta_i
 \le a^TG_{\rm out}a\le2\eta.
\]
Thus, for every configuration in $\mathcal B_\eta$, this inequality holds for at least one $a\in\mathcal N_\eta$. The choice of $a$ may vary with the configuration, so we take the union over all these fixed directions:
\begin{equation}
 \mathcal B_\eta
 \subset\bigcup_{a\in\mathcal N_\eta}
 \left\{\sum_{i\in I_a}\sin^2\theta_i\le48\eta n\right\}.
 \label{supp:degenerate-event-union}
\end{equation}
The bound $|I_a|\ge n/48$ from Eq.~\eqref{supp:direction-index-count} applies to every fixed index set in this union.

We next bound the probability of each scalar-sum event in Eq.~\eqref{supp:degenerate-event-union}. For $t>0$, define
\[
 \psi(t)=\frac1{2\pi}\int_0^{2\pi}e^{-t\sin^2\theta}\,\mathrm d\theta.
\]
Splitting this integral into $|\sin\theta|\le r$ and its complement gives
\[
 \psi(t)\le\frac2\pi\arcsin r+e^{-tr^2},
 \qquad 0<r<1.
\]
Letting $t\to\infty$ and then $r\downarrow0$ shows that $\psi(t)\to0$. The exponential Markov inequality and independence now give, for each fixed $a\in\mathcal N_\eta$,
\begin{equation}
\begin{aligned}
 \mathbb P\!\left\{\sum_{i\in I_a}\sin^2\theta_i\le48\eta n\right\}
 &\le e^{48t\eta n}
       \mathbb E\!\left[e^{-t\sum_{i\in I_a}\sin^2\theta_i}\right] =e^{48t\eta n}\psi(t)^{|I_a|} \le\exp\!\left\{n\left[48t\eta+\frac1{48}\log\psi(t)\right]\right\}.
\end{aligned}
 \label{supp:direction-tail-estimate}
\end{equation}
The last inequality uses $0<\psi(t)\le1$ and the cardinality bound in Eq.~\eqref{supp:direction-index-count}. Combining the event inclusion in Eq.~\eqref{supp:degenerate-event-union} with Eq.~\eqref{supp:direction-tail-estimate}, and summing over the finite set of directions, yields
\begin{equation}
 \mathbb P(\mathcal B_\eta)
 \le|\mathcal N_\eta|
 \exp\!\left\{n\left[48t\eta+\frac1{48}\log\psi(t)\right]\right\}.
 \label{supp:degenerate-probability-bound}
\end{equation}
To compensate for the factor $e^{2\lambda n}$ in Eq.~\eqref{supp:degenerate-density-reduction} and obtain the prescribed rate $M$, first choose $t$ so that $(\log\psi(t))/48<-(M+2\lambda+2)$, and then choose $\kappa$ so that $48t\sqrt{2\kappa}<1$. With these choices, Eq.~\eqref{supp:degenerate-probability-bound} gives
\begin{equation}
 \mathbb P(\mathcal B_\eta)
 \le|\mathcal N_\eta|e^{-n(M+2\lambda+1)},
 \label{supp:degenerate-probability-rate}
\end{equation}
and both $\kappa$ and $|\mathcal N_\eta|$ are independent of $n$.

Substituting Eq.~\eqref{supp:degenerate-probability-rate} into the density bound in Eq.~\eqref{supp:degenerate-density-reduction}, we conclude that
\[
\begin{aligned}
 b_{n,\kappa}(q)
 &\le Cn^de^{2\lambda n}|\mathcal N_\eta|e^{-n(M+2\lambda+1)} =C|\mathcal N_\eta|n^de^{-n(M+1)}
 \le C_Mn^de^{-Mn},
\end{aligned}
\]
where the fixed factor $C|\mathcal N_\eta|$ is absorbed into $C_M$.
\end{proof}

\paragraph{Density comparison by bond rotations.}
For configurations with $\chi_\kappa>0$, we use bond rotations to translate $T_n$ from $q$ to a nearby $p$. We bound the changes in the Boltzmann weight and angular volume to compare the densities at these points. Lemma~\ref{supp:degenerate-density-lemma} bounds the remaining contribution to the density.

\begin{steadylemma}[Density comparison by transport]
\label{supp:transport-lemma}
Let $\kappa$ and $B_n$ be as in Lemma~\ref{supp:degenerate-density-lemma}, and set $\delta_\kappa=\kappa/(16\sqrt2)$ and $L_\kappa=8\lambda/\sqrt\kappa+8\sqrt2/\kappa$. For all sufficiently large $n$ and all $p,q\in\mathbb R^d$ with $|p-q|<\delta_\kappa$, we have
\begin{equation}
 f_n(q)\le e^{nL_\kappa|p-q|}f_n(p)+B_n.
 \label{supp:transport-comparison}
\end{equation}
\end{steadylemma}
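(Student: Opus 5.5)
The plan is to split the density at $q$ into a nondegenerate part and a degenerate remainder, and to move the nondegenerate part from $q$ to $p$ by an explicit flow of the bond angles. Write
\[
 f_n(q)=f_{\chi_\kappa e^{H_n}}(q)+b_{n,\kappa}(q).
\]
By Lemma~\ref{supp:degenerate-density-lemma} the second term is at most $B_n$. It then suffices to prove
\[
 f_{\chi_\kappa e^{H_n}}(q)\le e^{nL_\kappa|p-q|}f_n(p).
\]
All densities involved are continuous by Lemma~\ref{supp:density-smoothing-lemma}. So I will prove the corresponding inequality for box integrals $\int_{D_r(q)}$ versus $\int_{D_r(p)}$, divide by $(2r)^d$, and let $r\downarrow0$.

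\emph{The vector field.} Put $u=p-q$. Since $\partial_{\theta_i}T_n=-\tfrac1n\sin\theta_i\,w_i$, I define
\[
 X_i(\theta)=-\sin\theta_i\,w_i^TG(\theta)^{-1}u .
\]
Then $\sum_iX_i\,\partial_{\theta_i}T_n=GG^{-1}u=u$. Along the flow $\Phi_t$ of $X$ one has $T_n(\Phi_t\theta)=T_n(\theta)+tu$ exactly, so $\Phi_1$ maps $\{T_n\in D_r(q)\}$ into $\{T_n\in D_r(p)\}$.

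To make $X$ globally smooth, I multiply it by a smooth cutoff in $\det G$. The cutoff equals one on $\{\det G\ge\kappa/2\}$ and vanishes on $\{\det G\le\kappa/4\}$. The translation identity is then needed only along trajectories that start in $\operatorname{supp}\chi_\kappa\subset\{\det G\ge\kappa\}$. I must check that these trajectories stay in $\{\det G\ge\kappa/2\}$ for $t\in[0,1]$.

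On $\{\det G\ge\kappa/2\}$ we have $\operatorname{tr}G\le2$, hence $\lambda_{\min}(G)\ge\det G/\operatorname{tr}G\gtrsim\kappa$. This gives
\[
 |X_i|\le\sqrt2\,|G^{-1}u|\lesssim|u|/\kappa .
\]
Each entry of $G$ has $\theta_i$-derivative of order $1/n$. Therefore $\tfrac{d}{dt}\det G=\sum_i\partial_{\theta_i}(\det G)\,X_i$ is $O(|u|/\kappa)$, summed over $n$ indices each carrying a $1/n$. Hence $\det G$ moves by at most a constant times $|u|/\kappa$ over unit time. Choosing $|u|<\delta_\kappa\propto\kappa$ with the right constant keeps it above $\kappa/2$, and the exact constant $\delta_\kappa=\kappa/(16\sqrt2)$ should come out of this bookkeeping. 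I expect this invariance step to be the main obstacle. It needs careful constants for $\partial_{\theta_i}\det G$ and for $G^{-1}$, and the flow must not leave the good region before time $1$.

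\emph{Weight and volume changes.} By Eq.~\eqref{supp:angular-energy-bounds}, $|\partial_{\theta_i}H_n|\le4\lambda$. I will estimate $\tfrac{d}{dt}H_n(\Phi_t\theta)=\sum_i\partial_{\theta_i}H_n\,X_i$ using the vector form $X=-S\,W^TG^{-1}u$, where $S=\operatorname{diag}(\sin\theta_i)$. Cauchy--Schwarz with $\sum_i\sin^2\theta_i|w_i^Tv|^2=n\,v^TGv$ gives $\|X\|_2\le\sqrt n\,|G^{-1/2}u|\lesssim\sqrt{n/\kappa}\,|u|$. Together with $\|\nabla H_n\|_2\le4\lambda\sqrt n$, this yields $|\Delta H_n|\le nC\lambda|u|/\sqrt\kappa$, which is the $8\lambda/\sqrt\kappa$ term of $L_\kappa$. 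The uniform measure $\nu_n$ changes through the Jacobian $\exp\int_0^1\operatorname{div}X\circ\Phi_t\,dt$. Here
\[
 \operatorname{div}X=\sum_i\Bigl[-\cos\theta_i\,w_i^TG^{-1}u-\sin\theta_i\,w_i^T\,\partial_{\theta_i}(G^{-1})\,u\Bigr],
\]
with $\partial_{\theta_i}(G^{-1})=-G^{-1}(\partial_{\theta_i}G)G^{-1}$ and $\|\partial_{\theta_i}G\|=O(1/n)$. So $|\operatorname{div}X|\le nC|u|/\kappa$, which is the $8\sqrt2/\kappa$ term.

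\emph{Conclusion.} Change variables $\theta=\Phi_1^{-1}(\theta')$ and use $0\le\chi_\kappa\le1$. This gives
\[
 \int_{\{T_n\in D_r(q)\}}\chi_\kappa e^{H_n}\,d\nu_n\le e^{nL_\kappa|u|}\int_{\{T_n\in D_r(p)\}}e^{H_n}\,d\nu_n .
\]
Shrinking the boxes then yields Eq.~\eqref{supp:transport-comparison}.
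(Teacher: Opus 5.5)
\textbf{Gap in the invariance step: your bound yields a threshold of order $\kappa^2$, not $\delta_\kappa=\kappa/(16\sqrt2)$.}

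Your overall route is the paper's. You use the same split $f_n=f_{\chi_\kappa e^{H_n}}+b_{n,\kappa}$ and the same vector field: yours is $|p-q|$ times the paper's $\mathcal V$ with $a=(p-q)/|p-q|$, flowed for unit time rather than time $|p-q|$. The cutoff in $\det G$ is the same, the Cauchy--Schwarz identity $\sum_iX_i^2=n\,u^TG^{-1}u$ gives the same $8\lambda/\sqrt\kappa$ term, and the Liouville/Jacobian change of variables is the same.

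The problem is the step you flagged yourself. Your estimate $|\partial_{\theta_i}\det G|=O(1/n)$ combined with $|X_i|\lesssim|u|/\kappa$ gives $|\tfrac{d}{dt}\det G(\Phi_t\theta)|\le C|u|/\kappa$. This bound has no leftover factor of $n$, so taking $n$ large cannot absorb it. Keeping $\det G>\kappa/2$ up to time $1$ therefore forces $|u|\lesssim\kappa^2$, and no choice of constant turns this into a threshold linear in $\kappa$. Your argument proves the lemma with a threshold of order $\kappa^2$, not the stated one. That weaker version would still suffice for Theorem~\ref{supp:limit-interchange}, which only needs some threshold independent of $n$ and $\varepsilon$.

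\textbf{The missing idea.} Use that the derivative of $\det G$ is proportional to $\det G$, together with a trace identity, instead of bounding entries of $G$. Set $b_i=w_i^TG^{-1}u$ and $d_i=w_i^TG^{-1}w_i$. Then
\[
\partial_{\theta_i}\det G=\det G\,\operatorname{tr}(G^{-1}\partial_{\theta_i}G)=\tfrac2n\det G\,\sin\theta_i\cos\theta_i\,d_i,
\qquad
\tfrac1n\sum_i\sin^2\theta_i d_i=\operatorname{tr}(G^{-1}G)=d .
\]
Hence
\[
\Bigl|\tfrac{d}{dt}\det G\Bigr|=\Bigl|\tfrac2n\det G\sum_i\sin^2\theta_i\cos\theta_i\,d_ib_i\Bigr|\le 2d\,\det G\max_i|b_i|\le 2\sqrt2\,d\,\frac{\det G}{\lambda_{\min}(G)}\,|u|\le 8\sqrt2\,|u|,
\]
using $\det G/\lambda_{\min}=\lambda_{\max}\le\operatorname{tr}G\le2$ for $d=2$, and $\det G/\lambda_{\min}=1$ for $d=1$. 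The factor $1/\lambda_{\min}$ coming from $G^{-1}$ is cancelled by $\det G$, so the speed does not depend on $\kappa$. The condition $8\sqrt2|u|<\kappa/2$ is exactly $|u|<\delta_\kappa$.

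\textbf{Divergence term.} The same identity bounds the second part of your divergence, $\tfrac2n\sum_i\sin^2\theta_i\cos\theta_i\,d_ib_i$, by $2d\max_i|b_i|\le16\sqrt2|u|/\kappa$. Your cruder bound there, $O(|u|/\kappa^2)$, is harmless. It is absorbed into $8\sqrt2\,n|u|/\kappa$ once $n\gtrsim1/\kappa$, which ``sufficiently large $n$'' allows.
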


\begin{proof}
For a unit vector $a\in\mathbb R^d$ and $\det G>0$, define
\[
 b_i=w_i^TG^{-1}a,\qquad d_i=w_i^TG^{-1}w_i,
 \qquad\mathcal V_i=-\sin\theta_i b_i.
\]
We write $\mathcal VA=\sum_i\mathcal V_i\partial_{\theta_i}A$ for the derivative of an angular function $A$ along the vector field $\mathcal V$. Since $\partial_{\theta_i}T_n=-n^{-1}\sin\theta_iw_i$,
\[
 \mathcal VT_n
 =\frac1n\sum_i\sin^2\theta_iw_iw_i^TG^{-1}a
 =GG^{-1}a=a.
\]
Thus moving the angles with velocity $\mathcal V$ translates $T_n$ at unit speed in direction $a$.

To bound the change in $\det G$ and the divergence of $\mathcal V$, differentiate $G$ and its inverse:
\[
 \partial_{\theta_i}G=\frac2n\sin\theta_i\cos\theta_iw_iw_i^T,
 \qquad
 \partial_{\theta_i}G^{-1}
 =-G^{-1}(\partial_{\theta_i}G)G^{-1}.
\]
Hence $\partial_{\theta_i}\log\det G=(2/n)\sin\theta_i\cos\theta_i d_i$ and $\partial_{\theta_i}b_i=-(2/n)\sin\theta_i\cos\theta_i d_ib_i$. Substituting into $\mathcal V\log\det G$ and $\operatorname{div}\mathcal V:=\sum_i\partial_{\theta_i}\mathcal V_i$ yields
\begin{equation}
\begin{aligned}
 \mathcal V\log\det G
 &=-\frac2n\sum_i\sin^2\theta_i\cos\theta_i b_id_i,\\
 \operatorname{div}\mathcal V
 &=-\sum_i\cos\theta_i b_i
   +\frac2n\sum_i\sin^2\theta_i\cos\theta_i d_ib_i,\\
 \frac1n\sum_i\sin^2\theta_i d_i
 &=\operatorname{tr}(G^{-1}G)=d.
\end{aligned}
\label{supp:transport-derivatives}
\end{equation}
Set $\mu=\lambda_{\min}(G)$. The first and third identities in Eq.~\eqref{supp:transport-derivatives}, together with $d_i\ge0$ and $|b_i|\le\sqrt2/\mu$, give
\begin{equation}
 |\mathcal V\det G|
 \le\frac{2\det G}{n}\max_i|b_i|
          \sum_i\sin^2\theta_i d_i
 \le2d\sqrt2\,\frac{\det G}{\mu}\le8\sqrt2.
 \label{supp:transport-determinant-speed}
\end{equation}
Here $\det G/\mu=1$ for $d=1$, while for $d=2$ it equals $\lambda_{\max}(G)\le\operatorname{tr}G\le2$.

To obtain a smooth invertible flow for the angular change of variables, we extend $\mathcal V$ to all angles. Choose a smooth $\rho:[0,\infty)\to[0,1]$ that is zero on $[0,1/4]$ and one on $[1/2,\infty)$. Define
\[
 \widetilde{\mathcal V}=\rho(\det G/\kappa)\mathcal V
 \quad(\det G>0),\qquad
 \widetilde{\mathcal V}=0\quad(\det G=0),
\]
and let $\Phi_t(\theta)$ solve
\[
 \frac{\mathrm d}{\mathrm dt}\Phi_t(\theta)
 =\widetilde{\mathcal V}(\Phi_t(\theta)),
 \qquad \Phi_0(\theta)=\theta.
\]
All angles are interpreted periodically. The extended field is smooth and bounded, so its flow exists for all $t$ and has inverse $\Phi_{-t}$.

Consider a trajectory starting at $\theta$ with $\chi_\kappa(\theta)>0$, so $\det G(\theta)>\kappa$. Along this trajectory, the cutoff $\rho(\det G/\kappa)$ equals one until $\det G$ first reaches $\kappa/2$. Equation~\eqref{supp:transport-determinant-speed} then gives
\[
 \det G(\Phi_t\theta)
 \ge\det G(\theta)-8\sqrt2\,t>\frac\kappa2,
 \qquad 0\le t<\delta_\kappa=\frac\kappa{16\sqrt2}.
\]
Thus $\det G$ remains above $\kappa/2$ for $0\le t<\delta_\kappa$. On this interval,
\[
 T_n(\Phi_t\theta)=T_n(\theta)+ta,\qquad
 \mu\ge\frac{\det G}{2}>\frac\kappa4.
\]
The eigenvalue bound uses $\det G/\mu\le2$.

To bound the change in $H_n$ along these trajectories, substitute the angular velocities and use $\sum_i\sin^2\theta_iw_iw_i^T=nG$:
\[
\begin{aligned}
 \sum_i\mathcal V_i^2
 &=\sum_i\sin^2\theta_i(w_i^TG^{-1}a)^2 =a^TG^{-1}\!\left(\sum_i\sin^2\theta_iw_iw_i^T\right)G^{-1}a =na^TG^{-1}GG^{-1}a
 =na^TG^{-1}a\le\frac n\mu\le\frac{4n}{\kappa}.
\end{aligned}
\]
Cauchy--Schwarz yields $\sum_i|\mathcal V_i|\le2n/\sqrt\kappa$, so Eq.~\eqref{supp:angular-energy-bounds} implies
\[
 |\mathcal VH_n|\le4\lambda\sum_i|\mathcal V_i|
 \le\frac{8\lambda n}{\sqrt\kappa}.
\]
For the angular volume change, we bound $\operatorname{div}\mathcal V$. Using $|b_i|\le4\sqrt2/\kappa$ in Eq.~\eqref{supp:transport-derivatives} we obtain
\[
\begin{aligned}
 |\operatorname{div}\mathcal V|
 &\le n\max_i|b_i|
   +\frac2n\max_i|b_i|\sum_i\sin^2\theta_i d_i \le\frac{4\sqrt2 n+16\sqrt2}{\kappa}
 \le\frac{8\sqrt2 n}{\kappa}\qquad(n\ge4).
\end{aligned}
\]

The Jacobian of the angular change of variables is
\[
 j_t(\theta)=|\det D_\theta\Phi_t(\theta)|.
\]
Differentiating the flow equation with respect to the initial angles gives
\[
 \frac{\mathrm d}{\mathrm dt}D_\theta\Phi_t(\theta)
 =D\widetilde{\mathcal V}(\Phi_t\theta)\,D_\theta\Phi_t(\theta),
\]
where $D\widetilde{\mathcal V}$ is the matrix with entries $\partial_{\theta_j}\widetilde{\mathcal V}_i$. The matrix $D_\theta\Phi_t$ is invertible because $\Phi_t$ has the smooth inverse $\Phi_{-t}$. Jacobi's formula and cyclic invariance of the trace give
\[
\begin{aligned}
 \frac{\mathrm d}{\mathrm dt}\log j_t(\theta)
 &=\operatorname{tr}\!\left[(D_\theta\Phi_t)^{-1}
          \frac{\mathrm d}{\mathrm dt}D_\theta\Phi_t\right] =\operatorname{tr}\!\left[(D_\theta\Phi_t)^{-1}
       D\widetilde{\mathcal V}(\Phi_t\theta)D_\theta\Phi_t\right] =\operatorname{tr}D\widetilde{\mathcal V}(\Phi_t\theta)
 =(\operatorname{div}\widetilde{\mathcal V})(\Phi_t\theta).
\end{aligned}
\]
For trajectories starting at $\theta$ with $\chi_\kappa(\theta)>0$, the cutoff $\rho(\det G/\kappa)$ equals one in a neighborhood of every trajectory point for $0\le t<\delta_\kappa$. Hence $\widetilde{\mathcal V}$ and its derivatives agree with $\mathcal V$ there. Combining the bounds on $\mathcal VH_n$ and $\operatorname{div}\mathcal V$ gives
\[
 \left|\frac{\mathrm d}{\mathrm dt}
 \log\!\left[e^{H_n(\Phi_t\theta)}j_t(\theta)\right]\right|
 =|(\mathcal VH_n+\operatorname{div}\mathcal V)(\Phi_t\theta)|
 \le nL_\kappa.
\]
Integrating from $0$ to $t$, with $j_0=1$, yields
\begin{equation}
 e^{H_n(\theta)}
 \le e^{nL_\kappa t}e^{H_n(\Phi_t\theta)}j_t(\theta),
 \qquad \chi_\kappa(\theta)>0,\quad 0\le t<\delta_\kappa.
 \label{supp:transport-weight-volume-bound}
\end{equation}
For any box $D\subset\mathbb R^d$, the identity $T_n(\Phi_t\theta)=T_n(\theta)+ta$ shows that $\Phi_t$ maps $\{T_n\in D,\chi_\kappa>0\}$ into $\{T_n\in D+ta\}$. Integrating Eq.~\eqref{supp:transport-weight-volume-bound}, using $0\le\chi_\kappa\le1$, and changing angular variables by $\Phi_t$ gives
\[
\begin{aligned}
 \int_D f_{\chi_\kappa e^{H_n}}(x)\,\mathrm dx
 &=\int_{\{T_n\in D\}}\chi_\kappa e^{H_n}\,\mathrm d\nu_n \le e^{nL_\kappa t}
 \int_{\{T_n\in D,\chi_\kappa>0\}}
 e^{H_n(\Phi_t\theta)}j_t(\theta)\,\mathrm d\nu_n(\theta) \le e^{nL_\kappa t}\int_{D+ta}f_n(x)\,\mathrm dx.
\end{aligned}
\]
Take $D=D_r(q)=q+(-r,r)^d$, divide by $(2r)^d$, and let $r\downarrow0$. The densities $f_{\chi_\kappa e^{H_n}}$ and $f_n$ are continuous by Lemma~\ref{supp:density-smoothing-lemma}, so
\[
 f_n(q)-b_{n,\kappa}(q)
 =f_{\chi_\kappa e^{H_n}}(q)
 \le e^{nL_\kappa t}f_n(q+ta).
\]
For $p\ne q$, choose $a=(p-q)/|p-q|$ and $t=|p-q|$. The bound $b_{n,\kappa}(q)\le B_n$ from Lemma~\ref{supp:degenerate-density-lemma} then concludes Eq.~\eqref{supp:transport-comparison}.
\end{proof}

\paragraph{From window averages to pointwise free energies.}
\begin{proof}[Proof of Theorem~\ref{supp:limit-interchange}]
Fix $q$ as in Theorem~\ref{supp:limit-interchange} and abbreviate $A_{n,\varepsilon}=A_{\alpha,n,\varepsilon}(q;\lambda)$.
For $m_{\rm ref}$ and $J(q)$ defined above, we have $Q_\alpha(m_{\rm ref})=q$, $\mathcal E[m_{\rm ref}]\le0$, and $J(q)<\infty$. Evaluating the variational problems at $m_{\rm ref}$ gives
\[
 S_{\alpha,\varepsilon}(q)\ge S_\alpha(q)\ge-J(q)>-\infty.
\]
Choose $M>J(q)+5$ and the corresponding $\kappa,B_n$ from Lemma~\ref{supp:degenerate-density-lemma}. These choices are independent of $\varepsilon$.

For $|p-q|_\infty<\varepsilon<\delta_\kappa/\sqrt d$, we have $|p-q|\le\sqrt d\,\varepsilon$. Applying Lemma~\ref{supp:transport-lemma} in both directions and using $f_n\ge0$ gives
\[
 f_n(q)\le e^{nC_\kappa\varepsilon}f_n(p)+B_n,
 \qquad
 f_n(p)\le e^{nC_\kappa\varepsilon}f_n(q)+B_n,
 \qquad C_\kappa=\sqrt d\,L_\kappa.
\]
Averaging over $|p-q|_\infty<\varepsilon$ and rearranging yields
\begin{equation}
 e^{-nC_\kappa\varepsilon}[f_n(q)-B_n]
 \le A_{n,\varepsilon}
 \le e^{nC_\kappa\varepsilon}f_n(q)+B_n,
 \qquad 0<\varepsilon<\delta_\kappa/\sqrt d.
 \label{supp:averaged-transport}
\end{equation}

To show that $B_n/f_n(q)\to0$ independently of $\varepsilon$, we first bound $f_n(q)$ from below. Fix $0<\varepsilon_0<\delta_\kappa/\sqrt d$ with $C_\kappa\varepsilon_0\le1$. Equation~\eqref{supp:fixed-window-limit} and $S_{\alpha,\varepsilon_0}(q)\ge-J(q)$ give
\[
 A_{n,\varepsilon_0}\ge e^{-n[J(q)+1]}
 \qquad\text{for sufficiently large }n.
\]
Since $M>J(q)+5$, we also have $B_n\le\frac12e^{-n[J(q)+1]}$ for sufficiently large $n$. The upper bound in Eq.~\eqref{supp:averaged-transport} at $\varepsilon=\varepsilon_0$ therefore implies
\begin{equation}
\begin{aligned}
 f_n(q)
 &\ge e^{-nC_\kappa\varepsilon_0}[A_{n,\varepsilon_0}-B_n] \ge\tfrac12e^{-n[J(q)+2]}>0.
\end{aligned}
 \label{supp:positive-density-bound}
\end{equation}
Combining Eqs.~\eqref{supp:positive-density-bound} and~\eqref{supp:degenerate-density-bound} gives
\[
 \mathcal R_n:=\frac{B_n}{f_n(q)}
 \le2C_Mn^de^{-n[M-J(q)-2]}\longrightarrow0,
\]
This bound is independent of $\varepsilon$.
Dividing Eq.~\eqref{supp:averaged-transport} by $f_n(q)$ and using $e^{nC_\kappa\varepsilon}\ge1$ we obtain
\begin{equation}
 e^{-nC_\kappa\varepsilon}(1-\mathcal R_n)
 \le\frac{A_{n,\varepsilon}}{f_n(q)}
 \le e^{nC_\kappa\varepsilon}(1+\mathcal R_n).
 \label{supp:uniform-ratio-bound}
\end{equation}
Since $\mathcal R_n<1$ for sufficiently large $n$, the lower bound in Eq.~\eqref{supp:uniform-ratio-bound} is positive. Taking logarithms gives
\[
\begin{aligned}
 -C_\kappa\varepsilon+\frac1n\log(1-\mathcal R_n)
 &\le\frac1n\log A_{n,\varepsilon}-\frac1n\log f_n(q) \le C_\kappa\varepsilon+\frac1n\log(1+\mathcal R_n).
\end{aligned}
\]
Since $\log(1+\mathcal R_n)\le-\log(1-\mathcal R_n)$, this yields
\begin{equation}
 \left|\frac1n\log A_{n,\varepsilon}-\frac1n\log f_n(q)\right|
 \le C_\kappa\varepsilon+\xi_n,
 \qquad
 \xi_n:=-\frac1n\log(1-\mathcal R_n)\longrightarrow0.
 \label{supp:uniform-log-comparison}
\end{equation}
The error $\xi_n$ is independent of $\varepsilon$. For each fixed $0<\varepsilon<\delta_\kappa/\sqrt d$, combining estimates~\eqref{supp:uniform-log-comparison} and~\eqref{supp:fixed-window-limit}, we obtain
\[
\begin{aligned}
 S_{\alpha,\varepsilon}(q)-C_\kappa\varepsilon
 &\le\liminf_{n\to\infty}\frac1n\log f_n(q)\le\limsup_{n\to\infty}\frac1n\log f_n(q)
 \le S_{\alpha,\varepsilon}(q)+C_\kappa\varepsilon.
\end{aligned}
\]
Letting $\varepsilon\downarrow0$ and using Eq.~\eqref{supp:fixed-window-limit} shows that
\begin{equation}
 \lim_{n\to\infty}\frac1n\log f_n(q)=S_\alpha(q).
 \label{supp:pointwise-log-limit}
\end{equation}
To take $\varepsilon\downarrow0$ before $n\to\infty$, fix $n$ large enough for Eq.~\eqref{supp:positive-density-bound}. The continuity of $f_n$ from Lemma~\ref{supp:density-smoothing-lemma} and the bound $f_n(q)>0$ imply
\[
 \lim_{\varepsilon\downarrow0}\frac1n\log A_{n,\varepsilon}
 =\frac1n\log f_n(q).
\]
Taking $n\to\infty$ and using Eq.~\eqref{supp:pointwise-log-limit}, we obtain $\lim_{n\to\infty}\lim_{\varepsilon\downarrow0}n^{-1}\log A_{n,\varepsilon}=S_\alpha(q)$. Equation~\eqref{supp:fixed-window-limit} yields $\lim_{\varepsilon\downarrow0}\lim_{n\to\infty}n^{-1}\log A_{n,\varepsilon}=S_\alpha(q)$ for the opposite order of limits. Thus both iterated limits agree,which concludes Theorem~\ref{supp:limit-interchange}.
\end{proof}

\subsection{Proof of the minimizing-interval theorem}
\label{supp:steady-transitions}
\begin{proof}
We begin with the scalar variational problem in Eq.~\eqref{supp:free-energy-limits}. Write its constrained value as
\[
G_\lambda(c)
:=\inf_{\substack{u\in\mathcal A_1\\u(1)=0,\ -2\bar u=c}}
\mathcal F^{(1)}_\lambda[u],\qquad c\in\mathbb R.
\]
Equation~\eqref{supp:free-energy-limits} implies $G_\lambda(c)=F_{23,\infty}(0,c;\lambda)$ for $|c|<1/2$; we verify the boundary and exterior values below. First, drop the mean constraint and define
\begin{equation}
\mathcal M_\lambda
=\operatorname*{argmin}_{\substack{u\in\mathcal A_1\\u(1)=0}}
\mathcal F^{(1)}_\lambda[u].
\label{eq:supp:transition-periodic-functional}
\end{equation}
The admissible set is compact in the uniform topology by the Arzel\`a--Ascoli theorem. On this set, the convex entropy integral is lower semicontinuous, while the quadratic energy term is continuous and bounded. Since the entropy term is nonnegative and $\mathcal F^{(1)}_\lambda[0]=0$, the minimum $m_\lambda :=\min_{u\in\mathcal A_1,\,u(1)=0}\mathcal F^{(1)}_\lambda[u]\le0$ is finite, $\mathcal M_\lambda$ is nonempty and compact, and its elements have finite entropy cost. Fixing the mean restricts the admissible set, so $G_\lambda(c)\ge m_\lambda$ for every $c \in\mathbb R$. If $u\in\mathcal M_\lambda$ and $c=-2\bar u$, then
\[
m_\lambda\le G_\lambda(c)
\le\mathcal F^{(1)}_\lambda[u]=m_\lambda.
\]
Thus each such $c$ minimizes $G_\lambda$, with $\min_cG_\lambda(c)=m_\lambda$. Conversely, if $c$ minimizes $G_\lambda$, the fixed-mean  set is compact with finite infimum $G_\lambda(c)=m_\lambda$, which is attained by a path $u_c \in\mathcal M_\lambda$ satisfying $c=-2\bar u_c$. Therefore,
\[
\operatorname*{argmin}_{c\in\mathbb R}G_\lambda(c)
=\{-2\bar u:u\in\mathcal M_\lambda\}.
\]

For $u\in\mathcal M_\lambda$, let $\widetilde u$ be its $1$-periodic extension and shift the starting point by $a\in[0,1]$. Subtracting $u(a)$ preserves the zero endpoint values, and the shifted paths satisfy
\begin{equation}
\begin{aligned}
u_a(s)&=\widetilde u(s+a)-u(a),\qquad u_a(0)=u_a(1)=0,\qquad
\mathcal F^{(1)}_\lambda[u_a]&=\mathcal F^{(1)}_\lambda[u],\qquad
-2\bar u_a=2\bigl(u(a)-\bar u\bigr).
\end{aligned}
\label{eq:supp:transition-shift}
\end{equation}
Indeed, the derivative is shifted over one period, and subtracting a constant leaves the variance unchanged. Thus $u_a\in\mathcal M_\lambda$. As $a$ varies over $[0,1]$, continuity of $u$ ensures that $-2\bar u_a$ sweeps out $[2(\min u-\bar u),2(\max u-\bar u)]$, which contains $0$. Every such interval consists of minimizing values of $c$, and every minimizing value belongs to one of them by taking $a=0$ (since $u_0=u$). Their union over $u\in\mathcal M_\lambda$ is therefore an interval containing $0$. It is compact because the mean map $u\mapsto\bar u$ is continuous on $\mathcal M_\lambda$, and symmetric because $u\mapsto-u$ preserves the functional. Consequently,
\[
\operatorname*{argmin}_{c\in\mathbb R}G_\lambda(c)
=[-a_\lambda,a_\lambda],
\]
where
\begin{equation}
a_\lambda=2\max_{u\in\mathcal M_\lambda}|\bar u|.
\label{eq:supp:transition-half-width}
\end{equation}

We next determine when the interval opens. The series $I_0(h)=\sum_{k\ge0}(h^2/4)^k/(k!)^2$ gives $I_0(h)<e^{h^2/4}$ for $h\ne0$. Taking $h=2v$ in the Legendre supremum gives $\ell(v)>v^2$ for $v\ne0$. Set $K(h)=\log I_0(h)$. Since $K$ is strictly convex and $K''(0)=1/2$, the Legendre maximizer near $v=0$ solves $K'(h)=v$. The series gives
\[
\begin{aligned}
K(h)&=\frac{h^2}{4}-\frac{h^4}{64}+O(h^6),\qquad
K'(h)=\frac h2-\frac{h^3}{16}+O(h^5),\\
h(v)&=2v+v^3+O(v^5).
\end{aligned}
\]
Substitution into $\ell(v)=vh(v)-K(h(v))$ yields
\begin{equation}
\ell(0)=0,\qquad \ell(v)>v^2\quad(v\ne0),\qquad
\ell(v)=v^2+\tfrac14v^4+O(v^6)\quad(v\to0).
\label{eq:supp:transition-entropy-bound}
\end{equation}
For nonzero $u\in\mathcal A_1$ with $u(1)=0$, the periodic Wirtinger inequality gives $\int_0^1|u'|^2\,\mathrm ds\ge4\pi^2\int_0^1(u-\bar u)^2\,\mathrm ds$. Together with Eq.~\eqref{eq:supp:transition-entropy-bound}, this implies $\mathcal F^{(1)}_\lambda[u]>0$ when $\lambda\le4\pi^2$. Hence $\mathcal M_\lambda=\{0\}$ and $a_\lambda=0$ in this range. For $\lambda>4\pi^2$, the trial path $u_\varepsilon(s)=\varepsilon\sin(2\pi s)$ is admissible for $|\varepsilon|\le(2\pi)^{-1}$ and satisfies
\begin{equation}
\mathcal F^{(1)}_\lambda[u_\varepsilon]
=\frac{\varepsilon^2}{2}(4\pi^2-\lambda)+O(\varepsilon^4)<0
\label{eq:supp:transition-periodic-trial}
\end{equation}
for sufficiently small $\varepsilon\ne0$. Thus $m_\lambda<0$, and every $u\in\mathcal M_\lambda$ is nonconstant. Choosing $a \in[0,1]$ with $u(a)\ne\bar u$ in Eq.~\eqref{eq:supp:transition-shift} then gives a minimizing path $u_a\in\mathcal M_\lambda$ with nonzero mean. Consequently, $a_\lambda>0$.

Finally, we identify $G_\lambda$ with the density-defined free energy on the remaining values of $c$. For $h>0$ and $0<r<\pi$, splitting the angular integral gives $e^{-h}I_0(h)\le r/\pi+e^{-h(1-\cos r)}$. Letting $h\to\infty$ and then $r\downarrow0$ gives $e^{-h}I_0(h)\to0$, so the Legendre definition yields $\ell(\pm1)=+\infty$. Any finite-entropy $u\in\mathcal A_1$ with $u(1)=0$ then has $|u'|<1$ almost everywhere; integration from either endpoint gives $|u(s)|<\min(s,1-s)$ for $0<s<1$, and hence $|\bar u|<1/4$. Thus $G_\lambda(c)=+\infty$ for $|c|\ge1/2$. Since $|\hat z_3|\le n^{-1}\sum_i|c_i|\le n/[2(n+1)]<1/2$, the density-defined free energy is also $+\infty$ there. Together with Eq.~\eqref{supp:free-energy-limits} on $|c|<1/2$, this gives $G_\lambda(c)=F_{23,\infty}(0,c;\lambda)$ for all $c \in\mathbb R$, proving Eq.~\eqref{eq:supp:transition-exact-interval} and the stated threshold.
\end{proof}

\paragraph{First transition.}
For the endpoint observable, $|\hat z_2|\le1$, so its continuous weighted density vanishes for $|z_2|>1$ and, by continuity, at $z_2=\pm1$. In the scalar problem, $u(1)=\pm1$ forces $u'=\pm1$ almost everywhere and hence infinite entropy cost, while $|u(1)|>1$ is impossible. Thus Eq.~\eqref{supp:free-energy-limits} extends to
\[
F_{2,\infty}(z_2;\lambda)
=\inf_{\substack{u\in\mathcal A_1\\u(1)=z_2}}
\mathcal F^{(1)}_\lambda[u],\qquad z_2\in\mathbb R.
\]

A similar argument determines the first transition threshold $\lambda_1=\pi^2$. Since the admissible set $\mathcal A_1$ is compact in the uniform topology and $\mathcal F^{(1)}_\lambda$ is lower semicontinuous, $\mathcal F^{(1)}_\lambda$ attains its global minimum on $\mathcal A_1$, and $\operatorname*{argmin}_{z_2\in\mathbb R}F_{2,\infty}(z_2;\lambda)=\{u(1):u\in\operatorname*{argmin}_{\mathcal A_1}\mathcal F^{(1)}_\lambda\}$. For $w=u-\bar u$ without the periodic boundary condition, expanding in the cosine basis $\{\cos(\pi ks)\}_{k\ge1}$ on $[0,1]$ yields the sharp Poincar\'e inequality $\int_0^1|w'|^2\,\mathrm ds\ge\pi^2\int_0^1w^2\,\mathrm ds$. By Eq.~\eqref{eq:supp:transition-entropy-bound}, $u=0$ is therefore the unique global minimizer of $\mathcal F^{(1)}_\lambda$ for all $\lambda\le\pi^2$. For $\lambda>\pi^2$, the trial path $u_\varepsilon(s)=\varepsilon[\cos(\pi s)-1]$ satisfies $\mathcal F^{(1)}_\lambda[u_\varepsilon]=\frac12\varepsilon^2(\pi^2-\lambda)+O(\varepsilon^4)<0$ for sufficiently small $\varepsilon\ne0$, so any global minimizer is nonconstant. Moreover, every global minimizer $u$ must be monotone. Indeed, replacing $u$ with $v(t)=\int_0^t|u'(s)|\mathrm ds$ preserves the entropy cost while increasing the variance:
\begin{equation}
\int_0^1(u-\bar u)^2\mathrm dt
=\frac12\int_0^1\int_0^1[u(t)-u(s)]^2\mathrm ds\mathrm dt
\le\frac12\int_0^1\int_0^1[v(t)-v(s)]^2\mathrm ds\mathrm dt
=\int_0^1(v-\bar v)^2\mathrm dt.
\label{eq:supp:transition-pairwise-variance}
\end{equation}
If $u'$ takes both positive and negative values on sets of positive measure, then $|u(1)-u(0)|<v(1)-v(0)$. By continuity, $|u(t)-u(s)|<|v(t)-v(s)|$ holds on a neighborhood of $(s,t)=(0,1)$, making the inequality in Eq.~\eqref{eq:supp:transition-pairwise-variance} strict and thus strictly decreasing $\mathcal F^{(1)}_\lambda$. Consequently, for $\lambda>\pi^2$, every global minimizer is nonconstant and monotone, and thus has a nonzero endpoint value $u(1)$, with opposite signs occurring in pairs by the reflection symmetry $u\mapsto-u$.

\subsection{Numerical evaluation}
\label{supp:steady-numerics} 

The scalar variational problems in Eq.~\eqref{supp:free-energy-limits}
are evaluated by a piecewise-linear discretization on $M$ uniform
intervals of length $h=1/M$. Writing $v_j$ for the slope on interval
$j$, we set
\begin{equation}
 u_0=0,\qquad
 u_j=h\sum_{i=1}^j v_i,\qquad
 |v_j|\le 1-10^{-6},\qquad j=1,\ldots,M.
 \label{eq:supp-steady-path}
\end{equation}
With trapezoidal weights $\omega_0=\omega_M=1/2$ and
$\omega_j=1$ otherwise, the discrete functional is
\begin{equation}
\begin{aligned}
 \bar u_M=h\sum_{j=0}^M\omega_j u_j,\qquad
 \mathcal F^{(1)}_{\lambda,M}(v)
 =h\sum_{j=1}^M\ell(v_j)
 -\lambda h\sum_{j=0}^M\omega_j(u_j-\bar u_M)^2.
\end{aligned}
 \label{eq:supp-discrete-objective}
\end{equation}
Both problems impose $u_M=z_2$. The joint problem also
imposes the mean constraint
\begin{equation}
 h\sum_{j=1}^M(1-s_{j-1/2})v_j=\frac{z_2-z_3}{2},
 \qquad s_{j-1/2}=(j-\tfrac12)h,
 \label{eq:supp-steady-constraints}
\end{equation}
and the section used in the main text is obtained by setting $z_2=0$. 

For $0<|v|<1$, we evaluate the entropy density from
\begin{equation*}
 \ell(v)=h_v|v|-\log I_0(h_v),\qquad
 \frac{I_1(h_v)}{I_0(h_v)}=|v|,
\end{equation*}
with $h_v\ge0$ determined numerically and $\ell(0)=0$.
Here $I_0$ and $I_1$ are modified Bessel functions.
The constrained minimization is performed with SLSQP to a tolerance
of $10^{-9}$. The endpoint solutions are continued in $|z_2|$ from
constant-slope initializations. For the joint problem, continuation
is supplemented by multiple projected initializations; solutions with residuals
above $2\times10^{-7}$ are discarded, and the smallest converged
objective is retained. Reflection symmetry supplies the negative
half of each coordinate grid.

At fixed $\lambda$, let $\widehat{\mathcal F}(q)$ denote the retained minimum on
the coordinate grid $\mathcal Q$. The corresponding numerical
free energy profile, shifted to zero minimum, is
$
 \widehat F(q)=\widehat{\mathcal F}(q)
 -\min_{q'\in\mathcal Q}\widehat{\mathcal F}(q')$.
For the joint section, the minimizing interval is estimated from the
connected grid points whose objective lies within $10^{-6}$ of the
grid minimum, with adjacent exterior points bracketing its endpoints.

Finally, the analytical-model parameter can be related to the flow
strength used for the kinetic model. For beadwise mobility $1/\zeta$
and physical bond length $b$, matching the quadratic planar-flow
potential in units of $k_{\mathrm B}T$ to Eq.~\eqref{eq:chain-potential}
gives
\begin{equation}
\begin{aligned}
 \lambda=\frac{\zeta\dot\epsilon b^2n^2}{2k_{\mathrm B}T}
 =\frac{\zeta b^2n^2}{2k_{\mathrm B}T\tau}\mathrm{Wi},
 \qquad \mathrm{Wi}=\dot\epsilon\tau.
\end{aligned}
 \label{eq:supp-beta-wi-convention}
\end{equation}
For the $300$-bead reference chain,
$\widetilde{\dot\epsilon}=\dot\epsilon\tau_d$ and
$\tau/\tau_d=4321$, so we have $
\frac{\lambda}{n^2}=\frac{\widetilde{\dot\epsilon}}{2}$,
 $\mathrm{Wi}=\frac{8640}{n^2}\lambda$.
Using the approximation $n\simeq300$ for this reference chain,
we express the analytical results in Fig.~\ref{fig:2} in terms of $\mathrm{Wi}=\frac{8640}{300^2}\lambda=0.096\lambda$.

\section{Single-molecule experiments and image-response analysis}
\label{supp:experiments}

\subsection{Experimental protocol and image observables}
\label{supp:exp-platform}
We acquired single-molecule DNA fluorescence trajectories using the automated electrokinetic stretcher of Ref.~\cite{soh2023automated}, which controls field strength via the applied voltages and keeps the molecule near the stagnation point via active feedback control (see Ref.~\cite{soh2023automated} for experimental details).

Image coordinates and trajectory centering follow Eq.~\eqref{eq:em-exp-coordinate}. We threshold pixels at $I_p\ge20$, discard frames with fewer than $40$ foreground pixels, and remove connected components smaller than $10\%$ of the largest component to extract the projected span $L=x_{\max}-x_{\min}$. Since $x_{\min}$ and $x_{\max}$ can lie on folded interior segments rather than the true chain ends, these image observables serve as surrogates, instead of exact matches to the microscopic coordinates.

We define the reference span $L_{\max}$ as the $99$th percentile of $L$ under matched optics. We retain frames whose mask stays clear of the image boundary and changes in $L$ by less than $0.25L_{\max}$ between consecutive frames. Reference frames are further restricted to $0.30\le L(t)/L_{\max}<0.55$ prior to the first crossing of $L>0.72L_{\max}$, isolating a common stage of initial stretching (rather than imposing $Z_2=0$).

\subsection{Lag selection and response estimation}
\label{supp:exp-response}
Both future frames in each finite-lag increment (End Matter) must also pass these quality checks. At each voltage, we select the lag pair $(\delta_1,\delta_2)$ such that the mean relative span growth
\begin{equation}
 r_L(\delta_1,\delta_2)=
 \left\langle
 \frac{L(t+\delta_2)-L(t+\delta_1)}{L(t)}
 \right\rangle .
\label{eq:supp-exp-growth}
\end{equation} 
is closest to $5\%$. The table below lists the resulting lag pairs, relative growth $r_L$, and number of contributing trajectories $N$:
\begin{center} 
\begin{tabular}{lllll}
\hline
Voltage (V)~~~~~~ & $N$~~~~~~  & $\delta_1$ (frames) ~~~~~~  & $\delta_2$ (frames) ~~~~~~  & $r_L$~~~~~~ \\
\colrule
10 & 44 & 20 & 59 & 0.006\\
20 & 201 & 2 & 5 & 0.047\\
40 & 335 & 1 & 2 & 0.045\\
50 & 444 & 1 & 2 & 0.058\\
\hline
\end{tabular}
\end{center} 
The conditioning coordinate $z(t)$, increment $\Delta\widetilde Z_3(t)$, raw response $f$, and odd response $g$ follow Eq.~\eqref{eq:em-exp-response}, with all valid pairs weighted equally. By default, we use eight equal-width bins over $[-2.6\sigma_z,2.6\sigma_z]$ (where $\sigma_z$ is the standard deviation of $z$), requiring at least $800$ valid pairs overall and $150$ per retained bin. Figure~\ref{fig:supp-exp-summary}(a,b) compares $f$ and $g$ at $20$ and $40$~V, showing that the central positive slope and outer downturn at $40$~V are already evident in the raw response $f$ prior to odd symmetrization.

Odd-response estimates require both symmetric bins in a pair to be valid. We compute error bars from the $16$th--$84$th percentiles across $2000$ bootstrap resamples of the $N$ trajectories (with replacement and fixed bin edges, omitting bins with fewer than $150$ pairs). We estimate the central slope $s_0$ by a linear fit to the odd response over bins with $|z|\le\sigma_z$ (requiring at least three valid bins; four innermost bins by default), which provides a local alternative to the full-range cubic coefficient $c_1$ in the End Matter.

To test robustness across analysis parameters [Fig.~\ref{fig:supp-exp-summary}(c)], we vary the span window ($[0.28,0.52)$, $[0.30,0.55)$, $[0.32,0.58)$), target growth $r_L$ ($0.040$, $0.051$, $0.060$), bin count, and binning range ($\pm2.3\sigma_z$, $\pm2.6\sigma_z$, $\pm2.9\sigma_z$). Among the $108$ combinations, all $72$ admissible settings yield $s_0<0$ at $20$~V and $s_0>0$ at $40$~V, confirming a robust sign reversal of the central slope despite sampling uncertainty at $40$~V; the outer zero crossing is resolved in $54$ of the $72$ settings. 

\begin{figure}[tbp]
\centering
\includegraphics[width=0.7\linewidth]{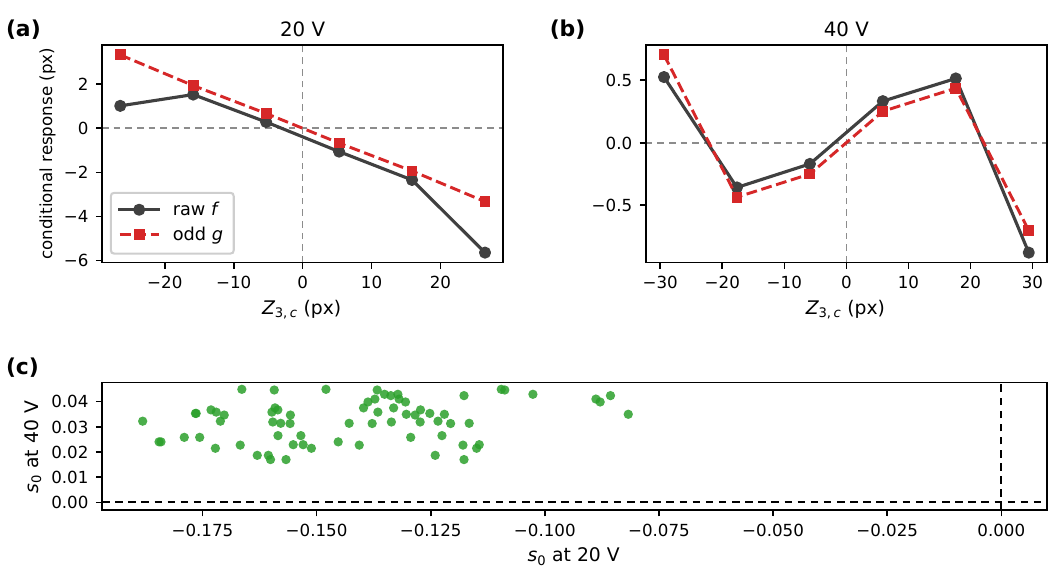}
\caption{Experimental response checks and sensitivity to analysis
choices. (a,b) Raw conditional response $f$ (gray circles) and odd
component $g$ (red squares)
at $20$ and $40$~V. (c) Central
slopes $s_0$ at $20$ and $40$~V for the $72$ admissible analysis
settings.}
\label{fig:supp-exp-summary}
\end{figure}

\end{document}